% CHANGES
% =======
%
% * Version 4:
%  - Expanded version of the resubmission to VLDB
%  - Remove lower bound for remote-clique
%  - Add pointer to VLDB version
% * Version 3:
%  - Add experimental section
%  - Cite Cevallos et al. for their PTAS too.
% * Version 2:
%  - Last ArXiV version before VLDB submission
%
\documentclass{article}

\usepackage[linkcolor=blue,citecolor=blue,colorlinks=true]{hyperref}
\usepackage[utf8]{inputenc}
\usepackage[a4paper,left=25mm,right=25mm]{geometry}
\usepackage{booktabs}
\usepackage{multirow}
\usepackage{amsmath,amssymb,amsthm}
\usepackage[ruled]{algorithm2e}
\usepackage{color}
\usepackage{graphicx}
\usepackage{subcaption}

\newtheorem*{definition*}{Definition}
\newtheorem{definition}{Definition}

\newtheorem{theorem}{Theorem}
\newtheorem{lemma}{Lemma}
\newtheorem*{lemma*}{Lemma}

\newtheorem{fact}{Fact}

%% Shorthands
\newcommand{\dist}{\operatorname{dist}}

\DeclareMathOperator*{\argmin}{arg\,min}
\DeclareMathOperator{\diversity}{div}
\DeclareMathOperator{\gendiv}{gen-div}
\DeclareMathOperator{\MST}{MST}
\DeclareMathOperator{\TSP}{TSP}

\renewcommand{\epsilon}{\varepsilon}
\newcommand{\BO}[1]{O\left( #1 \right)}
\newcommand{\BT}[1]{\Theta\left( #1 \right)}
\newcommand{\BOM}[1]{\Omega\left( #1 \right)}
\newcommand{\Let}[2]{#1 $\leftarrow$ #2} \SetKw{To}{to}

\title{
  MapReduce and Streaming Algorithms for Diversity Maximization in 
  Metric Spaces of Bounded Doubling Dimension\footnote{This work was published in the Proocedings of the VLDB Endowment~\cite{CeccarelloPPU17}.}
}

\date{}

\newcommand{\affaddr}[1]{{\small #1}}  % author affiliation line
\newcommand{\email}[1]{{\tt\small #1}} % author email

\author{
  Matteo Ceccarello$^1$\and
  Andrea Pietracaprina$^1$ \and
  Geppino Pucci$^1$\and
  Eli Upfal$^2$
  \\
  $^1$\affaddr{Department of Information Engineering, University of Padova,
    Padova, Italy}\\
  \email{\{ceccarel,capri,geppo\}@dei.unipd.it}
  \\
  $^2$\affaddr{Department of Computer Science,
    Brown University, Providence, RI USA}\\
  \email{eli\_upfal@brown.edu}
}

\begin{document}

\maketitle

\begin{abstract}
Given a dataset of points in a metric space and an integer $k$, a
diversity maximization problem requires determining a subset of $k$
points maximizing some diversity objective measure, e.g., the minimum
or the average distance between two points in the subset.
Diversity maximization is computationally hard, hence only approximate
solutions can be hoped for. Although its applications are mainly in
massive data analysis, most of the past research on diversity
maximization focused on the sequential setting.  In
this work we present space and pass/round-efficient 
diversity maximization algorithms for the
Streaming and MapReduce models and analyze their approximation
guarantees for the relevant class of metric spaces of bounded doubling
dimension. Like other approaches in the literature, our algorithms
rely on the determination of high-quality core-sets, i.e.,
(much) smaller subsets of the input which contain good approximations
to the optimal solution for the whole input. For a variety of
diversity objective functions, our algorithms attain an
$(\alpha+\epsilon)$-approximation ratio, for any constant
$\epsilon>0$, where $\alpha$ is the best approximation ratio achieved
by a polynomial-time, linear-space sequential algorithm for the same
diversity objective.  This improves substantially  over
the approximation ratios attainable in Streaming and MapReduce
by state-of-the-art algorithms for general metric
spaces. We provide extensive experimental evidence of the
effectiveness of our algorithms on both
real world and synthetic datasets, scaling up to over a
billion points.
\end{abstract}

\section{Introduction}

\emph{Diversity maximization} is a fundamental primitive in massive
data analysis, which provides a succinct summary 
%(whose size is regulated by a parameter $k$) 
of a dataset while preserving the diversity of the
data~\cite{AbbassiMT13,MasinB08,wu2013,YangMNFCH15}. This summary can be presented visually to the user or can be used as a core for further processing of the dataset. In this paper we present novel efficient algorithms for diversity
maximization in popular computation models for massive data processing,
namely Streaming and MapReduce.

\paragraph{Diversity Measures and their Applications:}
Given a dataset of
points in a metric space and a constant $k$, a solution to the
diversity maximization problem is a subset of $k$ points that
maximizes some diversity objective measure defined in terms of the
distances between the points.

Combinations of relevance ranking and diversity maximization have been
explored in a variety of applications, including web
search~\cite{AngelK11}, e-commerce~\cite{BhattacharyaGM11},
recommendation systems~\cite{YuLA09}, aggregate
websites~\cite{MunsonZR09} and query-result navigation~\cite{ChenL07} (see
\cite{RosenkrantzRT07,AbbassiMT13,IndykMMM14} for further references on the
applications of diversity maximization).  The common problem in all
these applications is that even after filtering and ranking for
relevance, the output set is often too large to be presented to the
user. A practical solution is to present a diverse subset of the
results so the user can evaluate the variety of options and possibly
refine the search.

There are a number of ways to formulate the goal of finding a set of
$k$ points which are as diverse, or as far as possible from each
other. Conceptually, a $k$-diversity maximization problem can be
formulated in terms of a specific graph-theoretic measure defined on
sets of $k$ points, seen as the nodes of a clique where each edge is
weighted with the distance between its endpoints~\cite{ChandraH01}.
Several diversity measures are defined in
Table~\ref{tab:diversity-notions}.  While the most appropriate ones in
the context of web search, e-commerce, aggregator systems and query results
navigation are the remote-edge and the remote-clique measures
\cite{GollapudiS09,AbbassiMT13}, the results in this paper also extend
to the other measures in the table, which have important applications
in analyzing network performance, locating strategic facilities or
noncompeting franchises, or determining initial solutions for
iterative clustering algorithms or heuristics for hard optimization
problems such as
TSP~\cite{HalldorssonIKT99,ChandraH01,RosenkrantzRT07}. We include all
of these measures here to demonstrate the versatility of our approach
to a variety of diversity criteria. We want to stress that different
measures characterize the diversity of a set in a different fashion:
indeed, an optimal solution with respect to one measure is not
necessarily optimal with respect to another measure.

\paragraph{Distance Metric:} All the diversity criteria listed in
Table~\ref{tab:diversity-notions} are known to be NP-hard for general
metric spaces. Following a number of recent
works~\cite{AckermannBS10,ColeL06,KonjevodRX08,GottliebKK14,CeccarelloPPU15,CeccarelloPPU16},
we parameterize our results in terms of the \emph{doubling dimension}
of the metric space. Recall that a metric space has doubling dimension
$D$ if any ball of radius $r$ can be covered by at most $2^D$ balls of
radius $r/2$. While our methods yield provably tight bounds in spaces
of bounded doubling dimension (e.g., any bounded dimension Euclidian space) they have the ability of providing good approximations
in more general spaces based on important practical distance functions
such as the cosine distance in web search~\cite{AngelK11} and the dissimilarity
(Jaccard) distance in database queries~\cite{LeskovecRU14}.

\paragraph{Massive Data Computation Models:}
Since the applications of diversity maximization are
mostly in the realm of massive data analysis, it is important to
develop efficient algorithms for computational settings that can
handle very large datasets.  
\newcommand{\definitionTableVerticalSpacing}{\rule{0pt}{10pt}}
\begin{table}[t]
  \centering
  \begin{tabular}{llc}%{l@{\hskip 2pt} l@{\hskip 1pt} c}
    \toprule
    Problem
     & Diversity measure
     & Sequential approximation
    \\
    \midrule
    remote-edge
      & $\min_{p, q\in S} d(p, q)$
      & 2 (2)~\cite{Tamir91} 
    \\
    \definitionTableVerticalSpacing%
    remote-clique
      & $\sum_{p, q\in S} d(p, q)$ 
      & 2 ($-$)~\cite{HassinRT97} 
    \\
    \definitionTableVerticalSpacing%
    remote-star 
      & $\min_{c\in S}\sum_{q\in S\setminus\{c\}} d(c, q)$ 
      & 2 ($-$)~\cite{ChandraH01} 
    \\
    \definitionTableVerticalSpacing%
    remote-bipartition
      & $\min_{Q\subset S, |Q|=\lfloor|S|/2\rfloor}\sum_{q\in Q, z\in S\setminus Q} d(q, z)$
     % & $\displaystyle\min_{{\substack{\scriptscriptstyle Q\subset S \\ \scriptscriptstyle |Q|=\lfloor|S|/2\rfloor}}}{\textstyle\sum}_{\substack{q\in Q\\ z\in S\setminus Q}} d(q, z)$ 
     & 3 ($-$)~\cite{ChandraH01} 
    \\
    \definitionTableVerticalSpacing%
    remote-tree 
      & $w(\MST(S))$ 
      & 4 (2)~\cite{HalldorssonIKT99} 
    \\
    \definitionTableVerticalSpacing%
    remote-cycle 
      & $w(\TSP(S))$
      & 3 (2)~\cite{HalldorssonIKT99} 
    \\
    \bottomrule
  \end{tabular}
  \caption{% 
    Diversity measures considered in this paper.  $w(\MST(S))$ (resp.,
    $w(\TSP(S))$) denotes the minimum weight of a spanning tree (resp.,
    Hamiltonian cycle) of the complete graph whose nodes are the points of
    $S$ and whose edge weights are the pairwise distances among the
    points.  The last column lists the best known approximation factor,
    the lower bound under the hypothesis $P\neq NP$ (in parentheses), and
    the related references. 
  }\label{tab:diversity-notions}
\end{table}
The Streaming and MapReduce models are widely recognized as suitable
computational frameworks for big-data processing.  The Streaming
model~\cite{RaghavanH99} copes with large data volumes through an
on-the-fly computation on the streamlined dataset, storing only very
limited information in the process, while the MapReduce
model~\cite{KarloffSV10,PietracaprinaPRSU12} enables the handling of
large datasets through the massive availability of resource-limited
processing elements working in parallel. The major challenge in
both models is devising strategies which work under the constraint
that the number of data items that a single processor can access
simultaneously is substantially limited.

\iffalse More specifically, in the
Streaming model we have a single processor with memory space that is
significantly smaller than the total size of the data stream, while,
in MapReduce, the local memory of each reducer (processor) is
polynomially sublinear in the total data size. An obvious question is
how this limitations affect the quality of the approximation for the
various diversity problems. In this work, we provide an answer to this
question, focusing on the case of metric spaces of bounded doubling
dimension, a class of metric spaces that has attracted significant
interest in the context of data analysis, and that includes the
important family of Euclidean spaces of constant dimension.  
\fi

%In this paper we present novel efficient algorithms for diversity
%maximization in the Streaming and MapReduce models, focusing on the
%relevant case of metric spaces of bounded doubling dimension, that is, metric
%spaces where any ball of radius $r$ can be covered by a constant
%number of balls of radius $r/2$. Among others, this class of metric spaces
%includes the important family of Euclidean spaces of constant
%dimension.

\paragraph{Related work.}
\sloppy
Diversity maximization has been studied in the literature under
different names (e.g., $p$-Dispersion, Max-Min Facility Dispersion,
etc.).  An extensive account of the existing formulations is provided
in~\cite{ChandraH01}. All of these problems are known to be NP-hard,
and several sequential approximation algorithms have been proposed.
Table~\ref{tab:diversity-notions} summarizes the best known results
for general metric spaces.  There are also some specialized results
for spaces with bounded doubling dimension: for the remote-clique
problem, a polynomial-time $(\sqrt{2}+\epsilon)$-approximation
algorithm on the Euclidean plane, and a polynomial-time
$(1+\epsilon)$-approximation algorithm on $d$-dimensional spaces with
rectilinear distances, for any positive constants $\epsilon > 0$ and
$d$, are presented in~\cite{FeketeM04}.  In~\cite{HalldorssonIKT99} it
is shown that a natural greedy algorithm attains a $2.309$
approximation factor on the Euclidean plane for remote-tree. Recently,
the remote-clique problem has been considered under matroid
constraints~\cite{AbbassiMT13,CevallosEZ16}, which generalize the
cardinality constraints considered in previous literature.

In recent years, the notion of (composable) core-set has been
introduced as a key tool for the efficient solution of optimization
problems on large datasets.  A \emph{core-set}~\cite{AgarwalHV05},
with respect to a given computational objective, is a (small) subset
of the entire dataset which contains a good approximation to the
optimal solution for the entire dataset. A \emph{composable
  core-set}~\cite{IndykMMM14} is a collection of core-sets, one for
each subset in an arbitrary partition of the dataset, such that the
union of these core-sets contains a good core-set for the entire
dataset. The approximation factor attained by a (composable) core-set
is defined as the ratio between the value of the global optimal
solution and the value of the optimal solution on the (composable)
core-set. For the problems listed in
Table~\ref{tab:diversity-notions}, composable core-sets with constant
approximation factors have been devised in
\cite{IndykMMM14,AghamolaeiFZ15} (see
Table~\ref{tab:core-set-approximations}).  As observed
in~\cite{IndykMMM14}, (composable) core-sets may become key
ingredients for developing efficient algorithms for the MapReduce and
Streaming frameworks, where the memory available for a processor's
local operations is typically much smaller than the overall input
size.

In recent years, the characterization of data through the doubling
dimension of the space it belongs to has been increasingly used for
algorithm design and analysis in a number of contexts, including
clustering~\cite{AckermannBS10}, nearest neighbour
search~\cite{ColeL06}, routing~\cite{KonjevodRX08}, machine
learning~\cite{GottliebKK14}, and graph
analytics~\cite{CeccarelloPPU15, CeccarelloPPU16}.
%
%In~\cite{GottliebK13}, the authors define a \emph{nearly doubling} set
%of points as a set $S$ where all but $\sqrt{|S|}$ points have constant
%doubling dimension. They provide an approximation algorithm for the
%problem of removing from a set of points the fewest number of points
%in order to get to a target doubling dimension $d^*$. Moreover, they
%give a 2-approximation algorithm to compute the doubling dimension of
%a point set (which is NP-hard).
%

\begin{table}
  \centering
  \begin{tabular}{lcc}%{l@{\hskip 0pt}cc}
    \toprule
     & Previous~\cite{IndykMMM14,AghamolaeiFZ15}
     & Our results
    \\
     & {\footnotesize General metric spaces}
     & {\footnotesize Bounded doubling dimension}
    \\
    \midrule
    remote-edge
     & $3$
     & $1+\epsilon$
    \\
    remote-clique
    % The tradeoff of the epsilon here is with the time.
     & $6+\epsilon$ 
     & $1+\epsilon$
    \\
    remote-star
     & $12$ 
     & $1+\epsilon$
    \\
    remote-bipartition
     & $18$
     & $1+\epsilon$
    \\
    remote-tree
     & $4$ 
     & $1+\epsilon$
    \\
    remote-cycle
     & $3$ 
     & $1+\epsilon$
    \\
    \bottomrule
  \end{tabular}
  \caption{%
    Approximation factors of the composable core-sets computed by our algorithm,
    compared with previous approaches.%
  }\label{tab:core-set-approximations}
\end{table}

\paragraph{Our contribution.}
In this paper we develop efficient algorithms for diversity
maximization in the Streaming and MapReduce models.  At the heart of
our algorithms are novel constructions of (composable) core-sets.
In contrast to~\cite{IndykMMM14, AghamolaeiFZ15}, where
different constructions are devised for each diversity objective, we
provide a unique construction technique for all of the six objective
functions.  While our approach is applicable to general metric spaces,
on spaces of bounded doubling dimension, our (composable) core-sets
feature a $1+\epsilon$ approximation factor, for any fixed
$0<\epsilon\leq 1$, for all of the six diversity objectives, with the
core-set size increasing as a function of
$1/\epsilon$. The approximation factor is significantly better than
the ones attained by the known composable core-sets in general metric
spaces, which are reported in Table~\ref{tab:core-set-approximations}
for comparison.

Once a core-set (possibly obtained as the union of composable
core-sets) is extracted from the data, the best known sequential
approximation algorithm can be run on it to derive the final
solution. The resulting approximation ratio attained in this fashion
combines two sources of error: (1) the approximation loss in replacing
the entire dataset with a core-set; and (2) the approximation factor
of the sequential approximation algorithm executed on the core-set.
On metric spaces of bounded doubling dimension the combined
approximation ratio attained by our algorithms for any of the six
diversity objective functions considered in the paper is bounded by
$(\alpha+\epsilon)$, for any constant $0<\epsilon\leq 1$, where
$\alpha$ the is best approximation ratio achieved by a
polynomial-time, linear-space sequential algorithm for the same
maximum diversity criterion.

Our algorithms require only one pass over the data, in the streaming
setting, and only two rounds in MapReduce. To the best of our
knowledge, for all six diversity problems, our streaming algorithms
are the first ones that yield approximation ratios close to those of
the best sequential algorithms using space independent of input stream
size. Also, we remark that the parallel strategy at the base of the
MapReduce algorithms can be effectively ported to other models of
parallel computation.

Finally, we provide experimental evidence of the practical relevance
of our algorithms on both synthetic and real-world datasets. In
particular, we show that higher accuracy is achievable by increasing
the size of the core-sets, and that the MapReduce algorithm is
considerably faster (up to three orders of magnitude) than its
state-of-the-art competitors.  Also, we provide evidence that the
proposed approach is highly scalable. We want to remark that our work
provides the first substantial experimental study on the performance
of diversity maximization algorithms on large instances of up to
billions of data points.

The rest of the paper is organized as follows.  In
Section~\ref{sec:preliminaries}, we introduce some fundamental
concepts and useful notations. In Section~\ref{sec:basic-properties},
we identify sufficient conditions for a subset of points to be a
core-set with provable approximation guarantees. These properties are
then crucially exploited by the streaming and MapReduce algorithms
described in Sections~\ref{sec:streaming} and~\ref{sec:mapreduce},
respectively. Section~\ref{sec:generalized} discusses how the higher
memory requirements of four of the six diversity problems can be
reduced, while Section~\ref{sec:experiments} reports on the results of
the experiments.

\section{Preliminaries} \label{sec:preliminaries}

Let $({\cal D}, d)$ be a metric space. The distance between two points
$u, v \in {\cal D}$ is denoted by $d(u, v)$.  Moreover, we let
$d(p, S)=\min_{q\in S} d(p, q)$ denote the minimum distance between a
point $p\in {\cal D}$ and an element of a set $S\subseteq {\cal D}$. Also,
for a point $p\in{\cal D}$, the \emph{ball of radius $r$ centered at
  $p$} is the set of all points in ${\cal D}$ at distance at most $r$
from $p$.
The \emph{doubling dimension} of a space is the smallest $D$ such
that any ball of radius $r$ is covered by  at most $2^D$ balls of radius
$r/2$~\cite{GuptaKL03}. As an immediate consequence, for any 
$0<\epsilon\le 1$, any ball of radius $r$ can be covered by at most $(1/\epsilon)^D$ balls
of radius $\epsilon r$. 
For ease of presentation, in this paper we concentrate on metric
spaces of constant doubling dimension $D$, although the results can be
immediately extended to nonconstant $D$ by suitably adjusting the
ranges of variability of the parameters involved. Several relevant
metric spaces have constant doubling dimension, a notable case being 
Euclidean space of  constant dimension $D$, which has doubling dimension
$\BO{D}$~\cite{GuptaKL03}.

Let $\diversity: 2^{\cal D} \rightarrow \mathbb{R}$ be a \emph{diversity
  function} that maps a set $S\subset {\cal D}$ to some nonegative real
number.
In this paper, we will consider the instantiations of function
$\diversity$ listed in Table~\ref{tab:diversity-notions}, which were
introduced and studied in~\cite{ChandraH01,IndykMMM14,AghamolaeiFZ15}.
For a specific diversity function $\diversity$, a set $S\subset
{\cal D}$ of size $n$ and a positive integer $k \leq n$, 
the goal of the \emph{diversity
  maximization problem} is to find some subset $S'\subseteq S$ of size
$k$ that maximizes the value $\diversity(S')$. In the following, we
refer to the \emph{$k$-diversity} of $S$ as
\[
  \diversity_k(S) = \max_{S'\subseteq S, |S'| = k} \diversity( S')
\]

The notion of \emph{core-set}~\cite{AgarwalHV05} captures the idea of
a small set of points that approximates some property of a larger
set.
\begin{definition}\label{def:composable}
Let $\diversity(\cdot)$ be a diversity function, $k$ be a positive
  integer, and $\beta \ge 1$. A set $T\subseteq S$, with $|T| \ge
  k$, is a {\em $\beta$-core-set} for $S$ if
  \[
    \diversity_k(T) \ge \frac{1}{\beta} \diversity_k(S)
  \]
\end{definition}

In~\cite{IndykMMM14, AghamolaeiFZ15}, the concept of core-set is
extended so that, given an arbitrary partition of the input set, the
union of the core-sets of each subset in the partition is a core-set
for the entire input set.
\begin{definition}\label{def:composable-core-set}
  Let $\diversity(\cdot)$ be a diversity function, $k$ be a positive
  integer, and $\beta \ge 1$. A function $c(S)$ that maps
  $S\subset {\cal D}$ to one of its subsets computes a {\em
    $\beta$-composable core-set} w.r.t.  $\diversity$ if, for any
  collection of disjoint sets $S_1,\dots, S_\ell \subset {\cal D}$ with
  $|S_i| \ge k$, we have
  \[
    \diversity_k\left(\bigcup_{i=1}^\ell c(S_i)\right) \ge
    \frac{1}{\beta} \diversity_k\left(\bigcup_{i=1}^\ell S_i\right)
  \]
\end{definition}

Consider a set $S \subseteq {\cal D}$ and a subset $T \subseteq S$. We
define the \emph{range} of $T$ as
$r_T =\max_{p\in S\setminus T} d(p, T)$, and the \emph{farness} of $T$
as $\rho_T= \min_{c\in T}\{d(c, T\setminus \{c\})\}$. Moreover, we
define the \emph{optimal range} $r_k^*$ for $S$ w.r.t. $k$ to be the
minimum range of a subset of $k$ points of $S$. Similarly, we define
the \emph{optimal farness} $\rho_k^*$ for $S$ w.r.t. $k$ to be the
maximum farness of a subset of $k$ points of $S$. Observe that
$\rho_k^*$ is also the value of the optimal solution to the
remote-edge problem.

\section{Core-set characterization}\label{sec:basic-properties}

In this section we identify some properties that, when exhibited by a
set of points, guarantee that the set is a $(1+\epsilon)$-core-set for
the diversity problems listed in Table~\ref{tab:diversity-notions}. In
the subsequent sections we will show how core-sets with these
properties can be obtained in the streaming and MapReduce settings. In
fact, when we discuss the MapReduce setting, we will also show that
these properties also yield composable core-sets featuring tighter
approximation factors than existing ones, for spaces
with bounded doubling dimension.

First, we need to establish a fundamental relation between the optimal
range $r_k^*$ and the optimal farness $\rho_k^*$ for a set $S$.  To
this purpose, we observe that the classical greedy approximation
algorithm proposed in \cite{Gonzalez85} for finding a subset of
minimum range (\emph{$k$-center problem}), gives in fact a good
approximation to both measures. We refer to this algorithm as {\sc
  GMM}.  Consider a set of points $S$ and a positive integer $k <
|S|$. Let $T=\mbox{\sc GMM}(S,k)$ be the subset of $k$ points returned
by the algorithm for this instance. The algorithm initializes $T$ with
an arbitrary point $a\in S$. Then, greedily, it adds to $T$ the point
of $S\setminus T$ which maximizes the distance from the already
selected points, until $T$ has size $k$.  It is known that the
returned set $T$ is such that $r_T \leq 2 r_k^*$~\cite{Gonzalez85} and
it is easily seen that $r_T \leq \rho_T$ (referred to as
\emph{anticover property}). This immediately implies the following
fundamental relation.
\begin{fact}\label{fact:opt-radius-farness}
Given a set $S$ and $k>0$, we have $r_k^* \le
  \rho_k^*$.
\end{fact}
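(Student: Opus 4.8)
The plan is to prove the chain of inequalities $r_k^* \le r_T \le \rho_T \le \rho_k^*$ for a well-chosen $k$-subset $T \subseteq S$; the natural choice is the output $T$ of {\sc GMM} on the instance $(S,k)$ (assume $k < |S|$, the case $k = |S|$ being trivial since then $r_k^* = 0$). The two outer inequalities follow directly from the definitions: $r_k^*$ is the \emph{minimum} range over all $k$-subsets of $S$ and $T$ is one such subset, so $r_k^* \le r_T$; dually, $\rho_k^*$ is the \emph{maximum} farness over all $k$-subsets, hence $\rho_T \le \rho_k^*$. Everything thus reduces to the middle inequality $r_T \le \rho_T$, i.e.\ the \emph{anticover property} of the greedy output.

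To establish the anticover property I would unfold the greedy construction. Write $t_1, \dots, t_k$ for the points of $T$ in order of selection, put $T_i = \{t_1, \dots, t_i\}$, and for $i \ge 2$ let $\delta_i = d(t_i, T_{i-1})$ be the distance at which $t_i$ was inserted; by the greedy selection rule $\delta_i = \max_{p \in S \setminus T_{i-1}} d(p, T_{i-1})$. Three observations finish the argument. (i) The sequence $(\delta_i)$ is non-increasing, since $t_{i+1} \in S \setminus T_{i-1}$ gives $\delta_{i+1} = d(t_{i+1}, T_i) \le d(t_{i+1}, T_{i-1}) \le \max_{p \in S \setminus T_{i-1}} d(p, T_{i-1}) = \delta_i$. (ii) $r_T \le \delta_k$: if $p^* \in S \setminus T$ attains $r_T = d(p^*, T)$, then because $t_k$ maximized the distance to $T_{k-1}$ at step $k$ we have $\delta_k = d(t_k, T_{k-1}) \ge d(p^*, T_{k-1}) \ge d(p^*, T) = r_T$. (iii) Any two distinct points of $T$ are at distance at least $\delta_k$: for $i < j$, $d(t_i, t_j) \ge d(t_j, T_{j-1}) = \delta_j \ge \delta_k$. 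Since $\rho_T$ is exactly the minimum pairwise distance within $T$, (iii) gives $\rho_T \ge \delta_k$, and combining with (ii) yields $r_T \le \delta_k \le \rho_T$, which closes the chain.

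The main (if modest) obstacle is precisely this anticover argument: one must be careful to formalize that the greedy insertion distances are monotone non-increasing and that the residual range $r_T$ cannot exceed the final insertion distance $\delta_k$. Note that, in contrast with the standard $k$-center analysis, the approximation bound $r_T \le 2 r_k^*$ is \emph{not} needed for this statement; only the structural anticover property of {\sc GMM} is used.
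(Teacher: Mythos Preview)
Your proof is correct and follows the same approach as the paper: both argue via the chain $r_k^* \le r_T \le \rho_T \le \rho_k^*$ with $T = \mbox{\sc GMM}(S,k)$, the outer inequalities being definitional and the middle one being the anticover property. The paper merely asserts the anticover property as ``easily seen'' whereas you spell it out via the monotonicity of the insertion distances $\delta_i$; your observation that the $2$-approximation bound $r_T \le 2r_k^*$ is irrelevant here is also accurate (the paper mentions it in the same paragraph but does not actually use it for this fact).
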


Let $S$ be a set belonging to a metric space of doubling dimension
$D$. In what follows, $\diversity(\cdot)$ denotes the diversity
function of the problem under consideration, and $O$ denotes an
optimal solution to the problem with respect to instance $S$.
Consider a subset $T\subseteq S$. Intuitively, $T$ is a good core-set
for some diversity measure on $S$, if for each point of the optimal
solution $O$ it contains a point sufficiently close to it.  We
formalize this intuition by suitably adapting the notion of
\emph{proxy function} introduced in \cite{IndykMMM14}. Given a
core-set $T\subseteq S$, we aim at defining a function
$p: O \rightarrow T$ such that the distance between $o$ and $p(o)$ is
bounded, for any $o\in O$. For some problems this function will be
required to be injective, whereas for, some others, injectivity will not
be needed. We begin by studying the remote-edge and the remote-cycle problem.

\begin{lemma}\label{lem:remote-edge}
  For any given $\epsilon > 0$, let $\epsilon'$ be such that
  $(1-\epsilon')=1/(1+\epsilon)$. A set $T\subseteq S$ is a
  $(1+\epsilon)$-core-set for the remote-edge and the remote-cycle problems if
  $|T| \ge k$ and there is a function $p: O\rightarrow T$ such that, for any
  $o\in O$, $d(o, p(o)) \le (\epsilon'/2)\rho_k^*$.
\end{lemma}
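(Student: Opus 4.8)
The plan is to show that the two problems in question—remote-edge and remote-cycle—both have solution values on $T$ that are within a $(1+\epsilon)$ factor of their values on $S$, using the same proxy argument. First I would recall that the optimal solution $O\subseteq S$ has size $k$, and that for remote-edge $\diversity(O)=\min_{o,o'\in O}d(o,o')=\rho_k^*$ (the optimal farness, as noted in the preliminaries), while for remote-cycle $\diversity(O)=w(\TSP(O))$, the minimum weight Hamiltonian cycle through the $k$ points of $O$. The key quantity to exploit is the hypothesis $d(o,p(o))\le(\epsilon'/2)\rho_k^*$ for every $o\in O$.

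The main step is to build a feasible size-$k$ solution inside $T$, namely $P=p(O)=\{p(o):o\in O\}$, and bound its diversity from below. The slight subtlety is that $p$ need not be injective, so $|P|$ could be less than $k$; I would handle this by noting that if $p$ collapses two points $o\ne o'$ of $O$ onto the same proxy, then $d(o,o')\le d(o,p(o))+d(p(o'),o')\le\epsilon'\rho_k^*<\rho_k^*$ (using $\epsilon'<1$, which follows from $(1-\epsilon')=1/(1+\epsilon)$ with $\epsilon>0$), contradicting that every pair in $O$ is at distance $\ge\rho_k^*$; hence $p$ is in fact injective on $O$ and $|P|=k$. For remote-edge, for any two distinct $p(o),p(o')\in P$, the triangle inequality gives $d(p(o),p(o'))\ge d(o,o')-d(o,p(o))-d(o',p(o'))\ge\rho_k^*-\epsilon'\rho_k^*=(1-\epsilon')\rho_k^*$, so $\diversity_k(T)\ge\diversity(P)\ge(1-\epsilon')\rho_k^*=\rho_k^*/(1+\epsilon)=\diversity_k(S)/(1+\epsilon)$, which is exactly the core-set condition. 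For remote-cycle, I would instead compare the optimal tours: take an optimal Hamiltonian cycle on $O$ and replace each vertex $o$ by $p(o)$, obtaining a Hamiltonian cycle on $P$ whose weight, edge by edge, satisfies $d(p(o),p(o'))\ge d(o,o')-\epsilon'\rho_k^*$ along each of the $k$ edges; summing, $w(\TSP(P))\ge w(\TSP(O))-k\epsilon'\rho_k^*$. To convert this additive loss into a multiplicative $(1+\epsilon)$ factor I would use $w(\TSP(O))\ge k\rho_k^*$—each of the $k$ tour edges has length at least $\rho_k^*$ since all pairwise distances in $O$ are at least $\rho_k^*$—so $w(\TSP(P))\ge w(\TSP(O))(1-\epsilon')=w(\TSP(O))/(1+\epsilon)$, and thus $\diversity_k(T)\ge\diversity_k(S)/(1+\epsilon)$.

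I expect the bookkeeping obstacle to be the remote-cycle case: unlike remote-edge, where a single pairwise bound suffices, here one must argue that the proxied tour is still a valid Hamiltonian cycle (which needs injectivity of $p$, handled above) and then amortize the total additive error $k\epsilon'\rho_k^*$ against the global lower bound $w(\TSP(O))\ge k\rho_k^*$. Note that the doubling-dimension hypothesis is not actually needed for this lemma—it will be used later to construct a $T$ satisfying the proxy hypothesis—so the proof is purely metric. A minor point worth stating explicitly is that $\diversity_k(T)$ is a max over size-$k$ subsets of $T$, so exhibiting the single feasible set $P$ lower-bounds it; and since $|T|\ge k$ by hypothesis, $\diversity_k(T)$ is well defined.
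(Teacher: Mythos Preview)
Your remote-edge argument is correct and matches the paper's (the paper actually observes injectivity is unnecessary there, but your extra step is harmless).

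For remote-cycle, however, there are two genuine gaps. First, the injectivity argument breaks: you claim ``every pair in $O$ is at distance $\ge\rho_k^*$,'' but here $O$ is the optimal \emph{remote-cycle} solution, not the remote-edge one, and there is no reason its pairwise distances are bounded below by $\rho_k^*$. (Take points on a line $\{0,1,\dots,n-1\}$ with $k=3$: any triple $\{0,b,n-1\}$ is remote-cycle optimal since the tour weight is $2(n-1)$ regardless of $b$, so $b=1$ gives $d(0,1)=1\ll\rho_3^*\approx (n-1)/2$.) For the same reason, your justification of $w(\TSP(O))\ge k\rho_k^*$ via ``each tour edge has length $\ge\rho_k^*$'' is invalid, though the inequality itself happens to be true by comparing against the remote-edge optimum $O_{\mathrm{edge}}$: $w(\TSP(O))\ge w(\TSP(O_{\mathrm{edge}}))\ge k\rho_k^*$.

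Second, and more fundamentally, the tour-replacement argument runs in the wrong direction. Replacing each $o$ by $p(o)$ in the optimal cycle on $O$ produces \emph{some} Hamiltonian cycle $C$ on $P$, and your edge-by-edge bound gives $w(C)\ge w(\TSP(O))-k\epsilon'\rho_k^*$. But $w(\TSP(P))$ is the \emph{minimum} cycle weight on $P$, so $w(\TSP(P))\le w(C)$; you cannot conclude $w(\TSP(P))\ge w(\TSP(O))-k\epsilon'\rho_k^*$ from a lower bound on $w(C)$. The paper fixes this by going the opposite way: start from $\TSP(P)$, form the multigraph on $O\cup P$ consisting of the edges of $\TSP(P)$ together with two copies of each edge $\{o,p(o)\}$, take an Euler tour (all degrees are even), and shortcut to a Hamiltonian cycle on $O$. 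This yields $w(\TSP(O))\le w(\TSP(P))+2\sum_{o}d(o,p(o))\le w(\TSP(P))+k\epsilon'\bar\rho$ with $\bar\rho=w(\TSP(O))/k\ge\rho_k^*$, and rearranging gives $\diversity_k(T)\ge w(\TSP(P))\ge(1-\epsilon')\diversity(O)$. This direction also works when $p$ is not injective (pad $P$ to size $k$; TSP weight is monotone under adding points by shortcutting), so injectivity is not needed at all.
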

\begin{proof}
  Consider the remote-edge problem first, and observe that 
  $\diversity_k(T)\leq\diversity(O)=\rho_k^*$. By
  applying the triangle inequality and the stated property of the proxy 
  function $p$ we get
  \[
    \begin{aligned}
      \diversity_k(T)
      &\ge \min_{o_1, o_2 \in O} d(p(o_1), p(o_2))
      \ge \min_{o_1, o_2 \in O} \{
      d(o_1, o_2) - d(o_1, p(o_1)) - d(o_2, p(o_2))
      \} \\
      &\ge \min_{o_1, o_2 \in O} d(o_1, o_2) -\epsilon'\rho_k^* 
      = \diversity(O)(1-\epsilon') = \diversity(O)/(1+\epsilon)
    \end{aligned}
  \]
  Note that $p(\cdot)$ does not need to be injective: in fact, if two
  points of the optimal solution are mapped into the same proxy, the
  first inequality trivially holds, its right hand side being zero.

  Consider now the remote-cycle problem.
  Note that $\diversity_k(T) \leq \diversity(O)$. Let $\bar{\rho} =
  \diversity(O)/k$ and observe that $\rho_k^* \le \bar{\rho}$.
  Let $P=\{p(o) : o\in O\} \subseteq T$ be the image of the proxy
  function. Following the argument given in
  \cite{IndykMMM14,AghamolaeiFZ15}, consider $\operatorname{TSP}(P)$,
  an optimal tour on $P$. We build a weighted graph $G$ whose vertex
  set is $O\cup P$ and whose edges are those induced by
  $\operatorname{TSP}(P)$ plus two copies of edge $(o, p(o))$, for
  each $o\in O$. The weight of an edge $(u,v)$ is $d(u,v)$.  Clearly,
  the resulting graph $G$ is connected and all its vertices have even
  degree, therefore it admits an Euler tour $T_E$ of its edges. From
  $T_E$ we obtain a cycle $C$ of $O$ by shortcutting all nodes that
  are not in $O$. By repeated applications of the triangle inequality
  during shortcutting and the fact that
  $d(o, p(o)) \le (\epsilon'/2) \bar\rho$, we obtain:
  \[
    \begin{aligned}
      w(\operatorname{TSP}(O))
      &\le w(C) \le w(T_E)
      \le w(\operatorname{TSP}(P)) + k\epsilon' \bar{\rho}\\
      &\le \diversity_k(T) + \epsilon' \diversity(O)
    \end{aligned}
  \]
  Therefore,
  $\diversity(O) \le \diversity_k(T)/(1-\epsilon') =
  \diversity_k(T)(1+\epsilon)$.
  As in the case of the remote-edge problem, the injectivity of $p(\cdot)$ is not necessary.
\end{proof}

Note that the proof of the above lemma does not require $p(\cdot)$ to
be injective. Instead, injectivity is required for the remote-clique,
remote-star, remote-bipartition, and remote-tree problems, which are 
considered next.
\begin{lemma}\label{lem:remote-csbt}
For a given $\epsilon > 0$, let $\epsilon'$ be such that
$1-\epsilon'=1/(1+\epsilon)$. A set $T\subseteq S$ is a
$(1+\epsilon)$-core-set for the remote-clique, remote-star,
remote-bipartition, and remote-tree problems if
$|T| \ge k$ and  there is an injective function $p: O \rightarrow T$ such that,
for any $o\in O$, $d(o, p(o))\le (\epsilon'/2) \rho_k^*$.
\end{lemma}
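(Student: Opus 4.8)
The plan is to mimic the structure of the proof of Lemma~\ref{lem:remote-edge}, treating each of the four measures in turn, but now exploiting the injectivity of $p$ so that the image $P = \{p(o) : o \in O\}$ is a set of exactly $k$ points, hence a legitimate candidate solution certifying $\diversity_k(T) \ge \diversity(P)$. In every case we also have the easy upper bound $\diversity_k(T) \le \diversity_k(S) = \diversity(O)$, so the whole argument reduces to showing $\diversity(P) \ge \diversity(O)/(1+\epsilon) = \diversity(O)(1-\epsilon')$. As in the previous lemma, let $\bar\rho$ play the role of a per-pair (or per-edge) ``budget'' for the error: for remote-clique the relevant quantity is $\diversity(O)$ divided by the number $\binom{k}{2}$ of pairs, for remote-star it is $\diversity(O)/(k-1)$, for remote-bipartition $\diversity(O)$ divided by the number of cut pairs, and for remote-tree $\diversity(O)/(k-1)$ (the number of tree edges); in each case one checks $\rho_k^* \le \bar\rho$ so that the hypothesis $d(o,p(o)) \le (\epsilon'/2)\rho_k^*$ upgrades to $d(o,p(o)) \le (\epsilon'/2)\bar\rho$.

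For remote-clique, I would pair up $P$ according to the optimal pairing of $O$ via $p$, and bound $\sum_{o_1,o_2} d(p(o_1),p(o_2)) \ge \sum_{o_1,o_2} d(o_1,o_2) - 2\sum_{o_1,o_2} (\epsilon'/2)\bar\rho$ by the triangle inequality applied to each of the $\binom{k}{2}$ pairs, which telescopes to $\diversity(O) - \epsilon'\binom{k}{2}\bar\rho = \diversity(O)(1-\epsilon')$. For remote-star, with $c$ the optimal center of $O$, use $p(c)$ as the center of $P$: for each of the $k-1$ leaves $d(p(c),p(o)) \ge d(c,o) - d(c,p(c)) - d(o,p(o)) \ge d(c,o) - \epsilon'\bar\rho$, and summing over leaves gives $\diversity(P) \ge \sum_o d(c,o) - (k-1)\epsilon'\bar\rho \ge \diversity(O)(1-\epsilon')$; since $\diversity$ for remote-star is a minimum over possible centers, this lower bounds $\diversity_k(T)$. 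For remote-bipartition, transport the optimal balanced bipartition $(Q, O\setminus Q)$ of $O$ through $p$ to a balanced bipartition of $P$ (injectivity is what makes $|p(Q)| = |Q|$), apply the triangle inequality to each cut pair, and argue as for remote-clique; again $\diversity$ being a minimum over bipartitions means the transported one only gives a lower bound, which is in our favor. For remote-tree, take the optimal spanning tree of $P$; its weight is at most that of the image under $p$ of the optimal spanning tree of $O$, and each of the $k-1$ edges $(o_1,o_2)$ of $\MST(O)$ maps to an edge of weight $\ge d(o_1,o_2) - \epsilon'\bar\rho$... wait — here the inequality goes the wrong way, so instead I run the Euler-tour-style argument of Lemma~\ref{lem:remote-edge} in reverse, or more directly: $w(\MST(O)) \le w(\text{tree on }O\text{ obtained from }\MST(P)\text{ by reconnecting via the }p\text{-edges}) \le w(\MST(P)) + 2(k-1)(\epsilon'/2)\bar\rho = w(\MST(P)) + \epsilon'\diversity(O)$, giving $\diversity(O) \le w(\MST(P))/(1-\epsilon') \le \diversity_k(T)(1+\epsilon)$.

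The one subtlety to get right — and the place I expect to spend the most care — is the direction of the inequalities for the two \emph{minimum}-type measures (remote-star and remote-bipartition) and for remote-tree: for a max-of-sums-type objective one lower-bounds $\diversity(P)$ by exhibiting one good structure on $P$, but for remote-star/remote-bipartition the objective is itself a minimization over centers/bipartitions, so one must be sure that mapping the \emph{optimal} structure of $O$ through $p$ still yields a valid structure on $P$ and that the resulting bound is a lower bound on $\diversity_k(T)$ — which it is, precisely because a minimum is no larger than the value at any particular choice, and we want a lower bound on that minimum, so we need the value at the transported choice to be large, which is what the triangle inequality gives. Injectivity of $p$ is used in exactly three spots: to guarantee $|P| = k$ (so $P$ is an admissible $k$-subset of $T$), to preserve the balance of the bipartition, and to ensure the transported tree/star is spanning and connected on all $k$ image points. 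The rest is triangle-inequality bookkeeping with the per-pair budget $\bar\rho$, entirely parallel to the remote-cycle case already handled.
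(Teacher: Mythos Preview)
Your remote-clique and remote-tree arguments are correct and match the paper's approach (the paper treats remote-clique explicitly and waves the rest off as ``virtually identical''). However, your handling of remote-star and remote-bipartition has a genuine direction error, and your closing paragraph reasons about it exactly backwards. You transport the \emph{optimal} center $c$ of $O$ to $p(c) \in P$ and show the star-sum at $p(c)$ is at least $\diversity(O)(1-\epsilon')$. But $\diversity(P) = \min_{c' \in P} \sum_{q' \ne c'} d(c',q')$ is a \emph{minimum}, so exhibiting one center with large star-sum only upper-bounds $\diversity(P)$, which is useless here. Your sentence ``a minimum is no larger than the value at any particular choice, and we want a lower bound on that minimum, so we need the value at the transported choice to be large'' is precisely the confusion: a large value at one choice does not lower-bound a minimum.

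The fix is to go the other direction. Let $c' = p(o^*)$ be the \emph{minimizing} center of $P$ (well-defined since $p$ is a bijection $O \to P$). Then
\[
\diversity(P) \;=\; \sum_{o \ne o^*} d(p(o^*), p(o)) \;\ge\; \sum_{o \ne o^*} d(o^*, o) - (k-1)\epsilon'\bar\rho \;\ge\; \diversity(O) - (k-1)\epsilon'\bar\rho,
\]
the last step because $\diversity(O)$ is itself a minimum over centers in $O$, hence $\diversity(O) \le \sum_{o \ne o^*} d(o^*, o)$. The same correction applies to remote-bipartition: start from the optimal bipartition of $P$, pull it back through $p^{-1}$ to a (balanced, by injectivity) bipartition of $O$, and compare. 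This is exactly what you did correctly for remote-tree, where you started from $\MST(P)$ and pulled it back to a spanning tree of $O$; you just need to do the same for the other two min-type objectives.
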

\begin{proof}
  Observe that for each of the four problems it holds that
  $\diversity_k(T) \leq \diversity(O)$. Let us consider the
  remote-clique problem first, and define
  $\bar{\rho} = \diversity(O)/{k\choose 2} =
  \sum_{o_1, o_2 \in O}d(o_1, o_2)/{k\choose 2}$
  Clearly, $\rho_k^*\le \bar{\rho}$. By combining this observation with the
  triangle inequality we have
  \[
    \begin{aligned}
      \diversity_k(T') 
      &\ge \sum_{o_1, o_2\in O} d(p(o_1), p(o_2))
      \ge \sum_{o_1, o_2\in O} \left[ 
        d(o_1, o_2) - d(o_1, p(o_1)) - d(o_2, p(o_2))
      \right] \\
      &\ge \sum_{o_1, o_2} d(o_1, o_2) - {k\choose
        2}\epsilon'\bar{\rho} = \diversity(O)/(1+\epsilon) 
    \end{aligned}
  \]
  The injectivity of $p(\cdot)$ is needed in this case for the first
  inequality above to be true, since $k$ distinct proxies are needed to get a
  feasible solution. The argument for the other problems is virtually
  identical, and we omit it for brevity.
\end{proof}

\section{Applications to data streams}
\label{sec:streaming}

In the Streaming model \cite{RaghavanH99} one processor with a
limited-size main memory is available for the computation. The input
is provided as a continuous stream of items which is typically too
large to fit in main memory, hence it must be processed on the fly
within the limited memory budget.  Streaming algorithms aim at
performing as few passes as possible (ideally just one) over the
input.

In~\cite{IndykMMM14}, the authors propose the following use of
composable core-sets to approximate diversity in the streaming
model. The stream of $n$ input points is partitioned into $\sqrt{n/k}$
blocks of size $\sqrt{kn}$ each, and a core-set of size $k$ is
computed from each block and kept in memory. At the end of the pass,
the final solution is computed on the union of the core-sets, whose
total size is $\sqrt{kn}$. In this section, we show that substantial
savings (a space requirement independent of $n$) can be obtained by
computing a \emph{single} core-set from the entire stream through two
suitable variants of the 8-approximation \emph{doubling algorithm} for
the $k$-center problem presented in~\cite{CharikarCFM04}, which are
described below.

Let $k,k'$ be two positive integers, with $k \le k'$. The first
variant, dubbed {\sc SMM}$(S, k, k')$, works in phases and maintains in memory
a set $T$ of at most $k' + 1$ points. Each Phase~$i$ is associated
with a distance threshold $d_i$, and is divided into a \emph{merge
  step} and an \emph{update step}. Phase~1 starts after an
initialization in which the first $k'+1$ points of the stream are
added to $T$, and $d_1$ is set equal to
$\min_{c\in T}d(c, T \setminus\{c\})$. At the beginning of Phase~$i$,
with $i \geq 1$, the following invariant holds. Let $S_i$ be the
prefix of the stream processed so far. Then:
\begin{enumerate}
\item $\forall p \in S_i$, $d(p, T) \le 2d_i$
\item $\forall t_1, t_2 \in T$, with $t_1 \neq t_2$, we have $d(t_1,
  t_2) \ge d_i$
\end{enumerate}

Observe that the invariant holds at the beginning of Phase~1.  The
merge step operates on a graph $G=(T,E)$ where there is an edge
$(t_1,t_2)$ between two points $t_1 \neq t_2 \in T$ if
$d(t_1, t_2) \le 2d_i$. In this step, the algorithm seeks a maximal
independent set $I \subseteq T$ of $G$, and sets $T=I$. The update
step accepts new points from the stream. Let $p$ be one such new
point. If $d(p, T) \le 4d_i$, the algorithm discards $p$, otherwise it
adds $p$ to $T$. The update step terminates when either the stream
ends or the $(k'+1)$-st point is added to $T$. At the end of the step,
$d_{i+1}$ is set equal to $2d_i$. As shown in~\cite{CharikarCFM04}, at
the end of the update step, the set $T$ and the threshold $d_{i+1}$
satisfy the above invariants for Phase~$i+1$.

To be able to use {\sc SMM} for computing a core-set for our diversity
problems, we have to make sure that the set $T$ returned by the
algorithm contains at least $k$ points. However, in the algorithm
described above the last phase could end with $|T| < k$. To fix this
situation, we modify the algorithm so to retain in memory, for the
duration of each phase, the set $M$ of points that have been removed
from $T$ during the merge step performed at the beginning of the
phase. Consider the last phase. If at the end of the stream we have
$|T| < k$, we can pick $k - |T|$ arbitrary nodes from $M$ and add them
to $T$. Note that we can always do so because
$M \cup I = k' + 1 \ge k$, where $I$ is the independent set found
during the last merge step.

Suppose that the input set $S$ belongs to a metric space with doubling
dimension $D$. We have:
\begin{lemma}\label{lem:smm-properties}
  For any $0 < \epsilon' \leq 1$, let $k'=(32/\epsilon')^D\cdot k$,
  and let $T$ be the set of points returned by {\sc SMM}$(S, k,
  k')$. Then, given an arbitrary set $X\subseteq S$ with $|X|=k$,
  there exist a function $p: X\rightarrow T$ such that, for any
  $x\in X$, $d(x, p(x))\le (\epsilon'/2)\rho^*_k$.
\end{lemma}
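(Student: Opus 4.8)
The plan is to show that the distance threshold $d_f$ used in the \emph{last} phase of {\sc SMM}$(S,k,k')$ is small relative to the optimal range $r_k^*$ --- and hence, by Fact~\ref{fact:opt-radius-farness}, relative to $\rho_k^*$ --- and then to obtain the proxy function from the covering guarantee of the doubling algorithm. If $|S| < k'+1$ the algorithm never even starts Phase~1, so $T = S$ and we may take $p$ to be the identity map; hence assume $|S| \ge k'+1$ from now on.

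First I would extract two consequences of the invariants of~\cite{CharikarCFM04}. \emph{(i)} When the stream is exhausted during the phase with threshold $d_f$, the returned set $T$ satisfies $d(p, T) \le 4 d_f$ for every $p \in S$: every point processed before that phase is within $2 d_f$ of the set at the start of the phase (first invariant), the merge step replaces that set by a maximal independent set each removed member of which lies within $2 d_f$ of it, and the update step only adds points and discards a point only when it is within $4 d_f$ of the current $T$; since $T$ only grows afterward, $4 d_f$ bounds $d(p, T)$ in all cases. (Appending arbitrary points of $M$ in the final correction step only decreases distances to $T$, so the bound survives.) \emph{(ii)} At the beginning of the last phase, $T$ consists of exactly $k'+1$ distinct points of $S$, pairwise at distance at least $d_f$: if the last phase is Phase~1 this is the initialization together with the definition of $d_1$, and if it is a later phase then its predecessor cannot have terminated because the stream was exhausted (otherwise the predecessor would be the last phase), so it terminated upon inserting its $(k'+1)$-st point, and the second invariant for the new phase gives the required separation.

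Next I would bound $d_f$. Fix an optimal $k$-center solution for $S$, so that $S$ is covered by $k$ balls of radius $r_k^*$. By the pigeonhole principle, one of these balls contains more than $(k'+1)/k > k'/k = (32/\epsilon')^D$ of the $k'+1$ points from item~\emph{(ii)}. Since that ball has radius $r_k^*$, the doubling property lets us cover it by at most $(2 r_k^*/d_f)^D$ balls of radius $d_f/2$ (the case $d_f > 2 r_k^*$ is impossible, as it would force two of the selected points, more than $d_f$ apart, to both lie in a ball of radius $r_k^*$); by the triangle inequality each such small ball contains at most one selected point, since they are pairwise at distance at least $d_f$. Comparing the two counts yields $(32/\epsilon')^D < (2 r_k^*/d_f)^D$, i.e. $d_f < (\epsilon'/16)\, r_k^* \le (\epsilon'/16)\, \rho_k^*$.

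Combining this with item~\emph{(i)} gives $r_T \le 4 d_f < (\epsilon'/4)\, \rho_k^* \le (\epsilon'/2)\, \rho_k^*$. Defining $p(x)$ to be a point of $T$ closest to $x$ for each $x \in X \subseteq S$ then gives $d(x, p(x)) = d(x, T) \le r_T \le (\epsilon'/2)\, \rho_k^*$, which is the claim. The step I expect to be the real obstacle is item~\emph{(ii)}: establishing, from the phase structure of {\sc SMM}, that the final phase always begins with a full roster of $k'+1$ well-separated points is precisely what couples the otherwise input-size-dependent doubling algorithm to the parameter $k'$, and thereby to $\epsilon'$; the subsequent pigeonhole-and-packing argument is routine, and the slack between the constants $16$ and $32$ comfortably absorbs the ``$\ge d_f$'' versus ``$> d_f$'' boundary subtlety.
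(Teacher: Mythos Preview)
Your proof is correct and reaches the same bound, but the route differs from the paper's. The paper argues modularly: it quotes the 8-approximation guarantee of the doubling algorithm from~\cite{CharikarCFM04} to obtain $r_T \le 4d_\ell \le 8\, r_{k'}^*$, and then uses a \emph{covering} argument --- cover each of the $k$ optimal balls of radius $r_k^*$ by $(32/\epsilon')^D$ balls of radius $(\epsilon'/32)r_k^*$, pick one point of $S$ in each nonempty ball to get a feasible $k'$-center solution --- to conclude $r_{k'}^* \le (\epsilon'/16)\, r_k^*$ and hence $r_T \le (\epsilon'/2)\rho_k^*$. You instead open up {\sc SMM} to extract the $k'+1$ pairwise $d_f$-separated points present at the start of the final phase, and run a direct \emph{packing} argument against the $k$ optimal balls, bypassing $r_{k'}^*$ entirely. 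Your version is more self-contained (it never invokes the \cite{CharikarCFM04} approximation ratio as a black box), while the paper's is more modular and makes the dependence on the $k'$-center subroutine explicit. Two small remarks: the step you flag as the ``real obstacle'' --- that the last phase opens with exactly $k'+1$ points at pairwise distance $\ge d_f$ --- is precisely invariant~(2) of {\sc SMM} as recorded in the paper, combined with the termination condition of the preceding update step, so it is in fact the routine part; and the boundary issue you note (a ball of radius $d_f/2$ could in principle hold two points at distance exactly $d_f$) is genuine but, as you anticipated, harmlessly fixed by covering with balls of radius $d_f/4$, which costs one factor of two and is exactly what the slack between your $(\epsilon'/4)\rho_k^*$ and the target $(\epsilon'/2)\rho_k^*$ affords.
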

\begin{proof}
  Let $r_{k'}^*$ to be the optimal range for $S$ w.r.t. $k'$.  Also,
  let $r_{T}=\max_{p\in S} d(p, T)$ be the range of $T$ and let
  $\rho_k^*$ be the optimal farness for $S$ w.r.t. $k$.  Suppose that
  {\sc SMM}$(S, k, k')$ performs $\ell$ phases.  It is immediate to
  see that $r_T \leq 4d_\ell$. As was proved in~\cite{CharikarCFM04},
  $ 4d_\ell \leq 8r_{k'}^*$, thus $r_T \leq 8r_{k'}^*$. Consider now
  an optimal clustering of $S$ with $k$ centers and range $r_k^*$ and,
  for notational convenience, define $\epsilon''=\epsilon'/32$. From
  the doubling dimension property, we know that there exist at most
  $k'$ balls in the space (centered at nodes not necessarily in $S$)
  of radius at most $\epsilon'' r_k^*$ which contain all of the points
  in $S$. By choosing one arbitrary center in $S$ for each such ball,
  we obtain a feasible solution to the $k'$-center problem for $S$
  with range at most $2\epsilon'' r_k^*$. Consequently,
  $r_{k'}^* \le 2\epsilon'' r_k^*$. Hence, we have that
  $r_T \le 8r_{k'}^* \le 16\epsilon'' r_k^*$. By
  Fact~\ref{fact:opt-radius-farness}, we know that
  $r_k^* \le \rho_{k}^*$. Therefore, we have
  $r_T \le 16\epsilon'' \rho_k^* = (\epsilon'/2)\rho_k^*$.  Given a
  set $X\subseteq S$ of size $k$, the desired proxy function
  $p(\cdot)$ is the one that maps each point $x\in X$ to the closest
  point in $T$. By the discussion above, we have that
  $d(x, p(x))\le (\epsilon'/2)\rho_k^*$.
\end{proof}

For the diversity problems mentioned in Lemma~\ref{lem:remote-csbt},
we need that for each point of an optimal solution the final core-set
extracted from the data stream contains a \emph{distinct} point very
close to it.  In what follows, we describe a variant of {\sc SMM},
dubbed {\sc SMM-EXT}, which ensures this property. Algorithm {\sc
  SMM-EXT} proceeds as {\sc SMM} but maintains for each $t\in T$ a set
$E_t$ of at most $k$ delegate points close to $t$, including $t$
itself. More precisely, at the beginning of the algorithm, $T$ is
initialized with the first $k'+1$ points of the stream, as before, and
$E_t$ is set equal to $\{t\}$, for each $t \in T$. In the merge step
of Phase $i$, with $i \ge 1$, iteratively for each point $t_1$ not
included in the independent set $I$, we determine an arbitrary point
$t_2 \in I$ such that $d(t_1, t_2)\le 2d_i$ and let $E_{t_2}$ inherit
$\max\{|E_{t_1}|,k-|E_{t_2}|\}$ points of $E_{t_1}$. Note that one
such point $t_2$ must exist, otherwise $I$ would not be a maximal
independent set. Also, note that a point $t_2 \in I$ may inherit
points from sets associated with different points not in $I$. Consider
the update step of Phase $i$ and let $p$ be a new point from the
stream. Let $t \in T$ be the point currently in $T$ which is closest
to $p$. If $d(p, t) > 4d_i$ we add it to $T$. If instead
$d(p,t) \le 4d_i$ and $|E_t| < k$, then we add $p$ to $E_t$, otherwise
we discard it. Finally, we define $T'=\bigcup_{t\in T} E_t$ to be the
output of the algorithm, and observe that $T \subseteq T'$.

\begin{lemma}\label{lem:smm-ext-properties}
  For any $0 < \epsilon' \leq 1$, let $k'=(64/\epsilon')^D\cdot k$,
  and let $T'$ be the set of points returned by {\sc
    SMM-EXT}$(S, k, k')$. Then, given an arbitrary set $X\subseteq S$
  with $|X|=k$, there exist an injective function $p: X\rightarrow T'$
  such that, for any $x \in X$,
  $d(x, p(x)) \le (\epsilon'/2)\rho_k^*$.
\end{lemma}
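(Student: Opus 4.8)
The plan is to transplant the radius estimate of Lemma~\ref{lem:smm-properties} onto {\sc SMM-EXT} while superimposing a bookkeeping of which stream points each current representative \emph{stands for}, so that the delegate sets $E_t$ can be shown large enough to host, \emph{injectively}, all points of $X$ that end up associated with $t$. For the geometry, I would set $\epsilon''=\epsilon'/64$ and argue exactly as in the proof of Lemma~\ref{lem:smm-properties}: applying the doubling property to an optimal $k$-center clustering of $S$ covers $S$ with $k'=(1/\epsilon'')^D k=(64/\epsilon')^D k$ balls of radius $\epsilon'' r_k^*$, so choosing one center of $S$ per ball gives $r_{k'}^*\le 2\epsilon'' r_k^*$. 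Since the $T$-dynamics of {\sc SMM-EXT} coincide with those of {\sc SMM}, the bound $4d_\ell\le 8r_{k'}^*$ recalled in the proof of Lemma~\ref{lem:smm-properties} still applies ($d_\ell$ being the threshold of the last phase $\ell$), and combining it with Fact~\ref{fact:opt-radius-farness} yields the ``distance unit'' $4d_\ell\le 16\epsilon'' r_k^*=(\epsilon'/4)r_k^*\le(\epsilon'/4)\rho_k^*$, which will be paid twice by a triangle inequality.

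Next I would set up the bookkeeping. For each point $u$ that is a representative in $T$ at some moment of the execution, let $A_u$ be the set of stream points processed so far that are \emph{assigned} to $u$: a point is assigned to itself when it is inserted into $T$ in an update step; it is assigned to $t$ when, in an update step, it is added to $E_t$ or discarded because $E_t$ is already full; and an assignment is transported from $t_1$ to $t_2$ whenever $t_1\notin I$ is merged into $t_2\in I$ in a merge step. Since each processed point is assigned to exactly one current representative, the sets $A_t$ for $t$ in the \emph{final} $T$ form a partition of $S$. I would then prove, by induction over the update and merge operations, two invariants: \textbf{(i)} for each current $u\in T$, $|E_u|=\min\{k,|A_u|\}$, the sets $E_u$ are pairwise disjoint, and $u\in E_u$; \textbf{(ii)} at the end of Phase~$i$, $d(e,u)\le 4d_i$ for every $e\in E_u$ and $d(a,u)\le 4d_i$ for every $a\in A_u$. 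Invariant (ii) is a telescoping estimate: a point within $4d_i$ of its representative at the end of Phase~$i$ is displaced by at most $2d_{i+1}$ at the next merge, so (as $d_{i+1}=2d_i$) it ends up within $4d_i+2d_{i+1}=4d_{i+1}$, while points inserted during the update step of Phase~$i+1$ are within $4d_{i+1}$ by construction. Invariant (i) is the heart of the matter: in an update step it follows from the branching (a new point enters both $E_u$ and $A_u$ when $|E_u|<k$, and enters $A_u$ only when $|E_u|=k$); in a merge step, processing the inheritances one at a time and splitting on whether $|A_{t_1}|+|A_{t_2}|\le k$ or $>k$, one checks that after $E_{t_2}$ absorbs the prescribed points of $E_{t_1}$ it has exactly $\min\{k,|A_{t_1}|+|A_{t_2}|\}$ elements --- in particular, no delegate is wasted when $E_{t_2}$ is near capacity.

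Finally I would assemble the proxy function. Let $\phi:S\to T$ send each point to the final representative it is assigned to. By the partition property and (i), $|E_t|\ge\min\{k,|\phi^{-1}(t)|\}\ge|X\cap\phi^{-1}(t)|$ because $|X|=k$, so for each final $t$ there is an injection from $X\cap\phi^{-1}(t)$ into $E_t$; since the $E_t$ are pairwise disjoint --- and $|T'|=\sum_t|E_t|\ge\min\{k,|S|\}=k$, so an injection of $X$ into $T'$ can exist --- these glue into a single injection $p:X\to T'=\bigcup_t E_t$. For $x\in X$, invariant (ii) in Phase~$\ell$ gives $d(x,\phi(x))\le 4d_\ell$ and $d(\phi(x),p(x))\le 4d_\ell$, hence $d(x,p(x))\le 8d_\ell\le(\epsilon'/2)\rho_k^*$, as claimed. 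I expect the main obstacle to be invariant (i) across the merge step --- controlling the inheritance rule so that the delegate count stays exactly $\min\{k,|A_u|\}$ even when $E_{t_2}$ is almost full --- together with checking that the distance bound in (ii) survives the repeated inheritance of points through merges.
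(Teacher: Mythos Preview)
Your proposal is correct and follows essentially the same approach as the paper: both transplant the radius estimate $4d_\ell\le 16\epsilon''\rho_k^*$ from Lemma~\ref{lem:smm-properties} (with $\epsilon''=\epsilon'/64$), then argue by induction over the phases that each point of $X$ stays within $4d_i$ of a representative whose delegate set has enough room, and finish with a triangle inequality paying $2\cdot 4d_\ell$. Your explicit partition $\{A_t\}$ and the exact invariant $|E_u|=\min\{k,|A_u|\}$ make the injectivity argument cleaner than the paper's sketch, which simply asserts the inductive claim and handles the global injectivity in one sentence, but the underlying mechanism is the same.
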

\begin{proof}
  Let $r_{T'}=\max_{p\in S} d(p, T')$ be the range of $T'$, and
  suppose that {\sc SMM}$(S, k, k')$ performs $\ell$ phases. By
  defining $\epsilon''=\epsilon'/64$, and by reasoning as in the proof
  of Lemma~\ref{lem:smm-properties} we can show that
  $r_{T'} \le 4d_{\ell} \le 16\epsilon''\rho_k^*$. Consider a point
  $x \in X$. If $x \in T'$ then we define $p(x)=x$. Otherwise, suppose
  that $x$ is discarded during Phase~$j$, for some $j$, because either
  in the merging or in the update step the set $E_t$ that was supposed
  to host it had already $k$ points. Let $T_i$ denote the set $T$ at
  the end of Phase~$i$, for any $i \ge 1$. A simple inductive argument
  shows that at the end of each Phase~$i$, with $j \le i \le \ell$
  there is a point $t \in T_i$ such that $|E_t|=k$ and
  $d(x,t) \leq 4d_i$. In particular, there exists a point
  $t \in T_{\ell}$ such that $|E_t|=k$ and
  $d(x,t) \leq 4d_{\ell} \le 16\epsilon''\rho_k^*$.  Since
  $E_t \subset T'$, any point in $E_t$ is at distance at most
  $4d_{\ell} \le 16\epsilon''\rho_k^*$ from $t$, and $|X|=k$, we can
  select a proxy $p(x)$ for $x$ from the $k$ points in $E_t$ such that
  $d(x,p(x)) \le 32\epsilon''\rho_k^* = (\epsilon'/2)\rho_k^*$ and
  $p(x)$ is not a proxy for any other point of $X$.
\end{proof}

It is easy to see that the set $T$ characterized in
Lemma~\ref{lem:smm-properties} satisfies the hypotheses of
Lemma~\ref{lem:remote-edge}. Similarly,
the set $T'$ of Lemma~\ref{lem:smm-ext-properties} satisfies the
hypotheses of Lemma~\ref{lem:remote-csbt}. Therefore, as a consequence
of these lemmas, for metric spaces with bounded doubling dimension
$D$, we have that {\sc SMM} and {\sc SMM-EXT} compute
$(1+\epsilon)$-core-sets for the problems listed in
Table~\ref{tab:diversity-notions}, as stated by the following two
theorems.

\begin{theorem}\label{thm:streaming-remote-edge}
  For any $0 < \epsilon \leq 1$, let $\epsilon'$ be such that
  $(1-\epsilon')=1/(1+\epsilon)$, and let
  $k'=(32/\epsilon')^D\cdot k$. Algorithm {\sc SMM}$(S, k, k')$
  computes a $(1+\epsilon)$-core-set for the remote-edge and
  remote-cycle problems using $\BO{(1/\epsilon)^Dk}$ memory.
\end{theorem}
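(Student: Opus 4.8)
The plan is to assemble Theorem~\ref{thm:streaming-remote-edge} directly from the pieces already in hand, so the "proof" is essentially a chain of three previously-established facts plus a bookkeeping argument on the memory. First I would invoke Lemma~\ref{lem:smm-properties} with the stated choice $k'=(32/\epsilon')^D\cdot k$: it guarantees that the set $T$ returned by {\sc SMM}$(S,k,k')$ satisfies $|T|\ge k$ (by the fix described before the lemma, padding from the set $M$ of merged-out points) and admits, for every $k$-subset $X\subseteq S$ and in particular for an optimal solution $O$, a function $p:X\rightarrow T$ with $d(x,p(x))\le(\epsilon'/2)\rho_k^*$. Taking $X=O$ gives exactly the hypothesis of Lemma~\ref{lem:remote-edge} (note that for remote-edge and remote-cycle no injectivity is demanded). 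Hence $T$ is a $(1+\epsilon)$-core-set for both the remote-edge and the remote-cycle problems, using the relation $(1-\epsilon')=1/(1+\epsilon)$ to translate the $\epsilon'$-statement into an $\epsilon$-statement.

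It remains to account for the memory. The algorithm {\sc SMM} keeps in memory the working set $T$, which never exceeds $k'+1$ points, together with the auxiliary set $M$ of points discarded during the merge step at the start of the current phase. I would argue that $|M|\le k'$ as well: at the beginning of a phase $T$ has at most $k'+1$ points, the merge step replaces $T$ by a maximal independent set $I\subseteq T$, and $M$ consists of the at most $k'+1$ points of (the old) $T$ not kept, so $|M|=O(k')$. Consequently the total space is $O(k')=O((32/\epsilon')^D k)$. Since $\epsilon'$ and $\epsilon$ are related by $(1-\epsilon')=1/(1+\epsilon)$, for $0<\epsilon\le 1$ we have $\epsilon'=\epsilon/(1+\epsilon)=\Theta(\epsilon)$, so $(32/\epsilon')^D=O((1/\epsilon)^D)$ and the memory bound $O((1/\epsilon)^D k)$ follows.

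The only genuinely delicate point — and the step I would be most careful about — is making sure the padding-from-$M$ trick is compatible with Lemma~\ref{lem:smm-properties}: the lemma is stated for the set $T$ \emph{after} this fix, and its proof only uses $r_T\le 8r_{k'}^*$ together with the doubling-dimension covering bound, both of which are unaffected by adding extra points from $M$ to $T$ (adding points can only decrease the range). So the proxy-distance guarantee $d(x,p(x))\le(\epsilon'/2)\rho_k^*$ continues to hold, and $|T|\ge k$ is now ensured. I would also note explicitly that a single pass over the stream suffices, since {\sc SMM} is by construction a one-pass streaming procedure (phases are not re-reads of the input but a conceptual partition of the single scan), which is implicit in the theorem's setting. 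With these observations the theorem is immediate; I do not expect any nontrivial computation beyond the $\epsilon'\leftrightarrow\epsilon$ substitution.
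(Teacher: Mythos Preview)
Your proposal is correct and follows exactly the paper's approach: the paper states the theorem without an explicit proof, noting only that the set $T$ characterized in Lemma~\ref{lem:smm-properties} satisfies the hypotheses of Lemma~\ref{lem:remote-edge}, so the result is an immediate consequence. Your additional bookkeeping on the memory bound and the compatibility of the padding-from-$M$ fix is sound and simply makes explicit what the paper leaves implicit.
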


\begin{theorem}\label{thm:streaming-remote-clique}
For any $0 < \epsilon \leq 1$, let $\epsilon'$ be such that
$(1-\epsilon')=1/(1+\epsilon)$, and let $k'=(64/\epsilon')^D\cdot k$. 
  Algorithm {\sc SMM-EXT}$(S, k, k')$ computes a
  $(1+\epsilon)$-core-set for the remote-clique, remote-star,
  remote-bipartition, and remote-tree problems using
  $\BO{(1/\epsilon)^Dk^2}$ memory.
\end{theorem}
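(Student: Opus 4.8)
The plan is to derive Theorem~\ref{thm:streaming-remote-clique} by combining Lemma~\ref{lem:smm-ext-properties} with the core-set characterization of Lemma~\ref{lem:remote-csbt}, exactly as the paragraph preceding the theorem announces. First I would fix $0<\epsilon\le 1$ and choose $\epsilon'$ with $(1-\epsilon')=1/(1+\epsilon)$; note that $\epsilon'\in(0,1]$ since $\epsilon\in(0,1]$, so the hypotheses ``$0<\epsilon'\le 1$'' of Lemma~\ref{lem:smm-ext-properties} are met. With $k'=(64/\epsilon')^D\cdot k$, run {\sc SMM-EXT}$(S,k,k')$ to obtain $T'$. Apply Lemma~\ref{lem:smm-ext-properties} with $X=O$, the optimal solution for the diversity problem at hand (here I need $|O|=k$, which holds for all four problems): this yields an injective proxy function $p:O\rightarrow T'$ with $d(o,p(o))\le(\epsilon'/2)\rho_k^*$ for every $o\in O$. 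Since $|T'|\ge|T|\ge k$, the pair $(T',p)$ satisfies precisely the hypotheses of Lemma~\ref{lem:remote-csbt}, which therefore certifies that $T'$ is a $(1+\epsilon)$-core-set for the remote-clique, remote-star, remote-bipartition, and remote-tree problems.

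It remains to bound the memory. The algorithm stores the set $T$ of at most $k'+1$ points, and for each $t\in T$ a delegate set $E_t$ of at most $k$ points, so $|T'|=|\bigcup_{t\in T}E_t|\le(k'+1)k$. I would also account for the set $M$ of points removed during the last merge step (as in the {\sc SMM} modification), and for the $O(1)$-per-point bookkeeping in building the conflict graph $G$ during a merge step; all of these are $O(k'k)=O((1/\epsilon')^D k^2)$. Since $\epsilon'=\Theta(\epsilon)$ for $\epsilon\le 1$ (indeed $\epsilon'=\epsilon/(1+\epsilon)\ge\epsilon/2$), we get $k'=(64/\epsilon')^D k=O((1/\epsilon)^D k)$ and hence total memory $O((1/\epsilon)^D k^2)$, as claimed.

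I do not expect a genuine obstacle: the theorem is essentially an assembly of two already-proved lemmas plus a routine accounting of the data structures maintained by {\sc SMM-EXT}. The one point that requires a little care — and the closest thing to a ``hard part'' — is making explicit that the constant in Lemma~\ref{lem:smm-ext-properties} ($64$ rather than the $32$ of Lemma~\ref{lem:smm-properties}) is exactly what is needed to absorb the extra factor-of-two loss incurred when a proxy $p(x)$ must be chosen inside a delegate set $E_t$ rather than being $t$ itself: the chain $d(x,p(x))\le d(x,t)+d(t,p(x))\le 16\epsilon''\rho_k^*+16\epsilon''\rho_k^*=32\epsilon''\rho_k^*=(\epsilon'/2)\rho_k^*$ with $\epsilon''=\epsilon'/64$ is what forces the parameter choice, and one should check the consistency of this choice with the parameter statement in the theorem. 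A secondary subtlety worth a sentence is why $p$ can be made injective: each of the four diversity objectives has an optimal solution of size exactly $k$, and each delegate set $E_t$ invoked in the argument has exactly $k$ points, so distinct points of $O=X$ routed to the same $E_t$ can be matched to distinct delegates — this is the place where injectivity (hence the restriction to these four problems, as opposed to remote-edge/remote-cycle) is genuinely used.
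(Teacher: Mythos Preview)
Your proposal is correct and follows exactly the route the paper takes: the paper does not give a separate proof of this theorem but states, in the paragraph immediately preceding it, that the set $T'$ produced by {\sc SMM-EXT} as in Lemma~\ref{lem:smm-ext-properties} satisfies the hypotheses of Lemma~\ref{lem:remote-csbt}, whence the $(1+\epsilon)$-core-set guarantee follows, with the memory bound coming from the $O(k'k)$ total size of the delegate sets. Your additional remarks on the constant $64$ and on injectivity simply unpack the content of Lemma~\ref{lem:smm-ext-properties} and are consistent with its proof.
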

\paragraph{Streaming Algorithm.}  The core-sets discussed above
can be immediately applied to yield the following streaming algorithm
for diversity maximization.  Let $S$ be the input stream of $n$ points.
One pass on the data is performed using  {\sc SMM}, or
{\sc SMM-EXT}, depending on the problem, to compute a core-set 
in main memory. At the end of the pass, a sequential
approximation algorithm is run on the core-set to compute the final solution.
The following theorem is immediate.
\begin{theorem} \label{thm:streaming-1-pass} 
Let $S$ be a stream of $n$ points of
  a metric space of doubling dimension $D$, and let $A$ be a
  linear-space sequential approximation algorithm for any one of the
  problems of Table~\ref{tab:diversity-notions}, returning a solution
  $S'\subseteq S$, with $\diversity_k(S)\leq \alpha\diversity(S')$,
  for some constant $\alpha \geq 1$. Then, for any $0< \epsilon \le 1$,
  there is a 1-pass streaming algorithm for the same problem yielding an
  approximation factor of $\alpha+\epsilon$, with memory
  \begin{itemize}
  \item $\BT{(\alpha/\epsilon)^Dk}$ for the remote-edge
    and the remote-cycle problems;
  \item $\BT{(\alpha/\epsilon)^Dk^2}$ for the remote-clique,
    the remote-star, the remote-bipartition, and the remote-tree
    problems.
  \end{itemize}
\end{theorem}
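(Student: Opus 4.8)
The plan is to compose the single-pass core-set constructions of Theorems~\ref{thm:streaming-remote-edge} and~\ref{thm:streaming-remote-clique} with one run of the sequential algorithm $A$, choosing the internal accuracy parameter so that the two multiplicative losses combine to $\alpha+\epsilon$. Set $\epsilon_1=\epsilon/\alpha$; since $\alpha\ge 1$ is a constant and $0<\epsilon\le 1$, we have $0<\epsilon_1\le 1$, so $\epsilon_1$ is an admissible accuracy value for those theorems.

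First I would perform a single pass over $S$ running {\sc SMM}$(S,k,k')$ with the value of $k'$ prescribed by Theorem~\ref{thm:streaming-remote-edge} for parameter $\epsilon_1$ (for remote-edge and remote-cycle), or {\sc SMM-EXT}$(S,k,k')$ with the value of $k'$ prescribed by Theorem~\ref{thm:streaming-remote-clique} (for remote-clique, remote-star, remote-bipartition, and remote-tree). By those theorems the set $T$ (resp.\ $T'$) kept in memory is a $(1+\epsilon_1)$-core-set for the problem under consideration. Since $\epsilon'$, defined by $1-\epsilon'=1/(1+\epsilon_1)$, satisfies $\epsilon'=\Theta(\epsilon_1)$, the core-set size --- and hence the space used during the pass --- is $\Theta((1/\epsilon_1)^Dk)=\Theta((\alpha/\epsilon)^Dk)$ in the first case and $\Theta((\alpha/\epsilon)^Dk^2)$ in the second.

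Next I would run $A$ on the core-set and return its output $S'$. Since $A$ is an $\alpha$-approximation algorithm applied to the instance $T$ (resp.\ $T'$), we have $\diversity_k(T)\le\alpha\,\diversity(S')$ and $S'\subseteq T\subseteq S$; combining this with the core-set guarantee $\diversity_k(S)\le(1+\epsilon_1)\,\diversity_k(T)$ from Definition~\ref{def:composable} gives
\[
  \diversity_k(S)\le(1+\epsilon_1)\,\alpha\,\diversity(S')=(\alpha+\alpha\epsilon_1)\,\diversity(S')=(\alpha+\epsilon)\,\diversity(S'),
\]
i.e., the claimed approximation factor. Because $A$ is linear-space and is invoked on an input of size $O((\alpha/\epsilon)^Dk)$ (resp.\ $O((\alpha/\epsilon)^Dk^2)$), the post-processing does not increase the asymptotic memory, so the total space matches the core-set construction.

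I do not expect a genuine obstacle: this is a routine composition of already-established facts. The only point requiring care is the accounting of the accuracy parameter --- noticing that the core-set loss $(1+\epsilon_1)$ and the sequential loss $\alpha$ multiply, so that $\epsilon_1=\epsilon/\alpha$ (a legal choice precisely because $\epsilon\le 1\le\alpha$) turns $\alpha(1+\epsilon_1)$ into exactly $\alpha+\epsilon$, while the resulting space blow-up is only from $(1/\epsilon_1)^D$ to $(\alpha/\epsilon)^D$ since $\alpha$ is constant.
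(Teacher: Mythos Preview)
Your proposal is correct and matches the paper's approach exactly: the paper states that the theorem ``is immediate'' from the preceding core-set theorems combined with a final run of the sequential algorithm $A$, and in the analogous MapReduce result (Theorem~\ref{thm-2-rounds}) it spells out precisely the same composition argument, setting the internal accuracy so that $1/(1-\epsilon')=1+\epsilon/\alpha$, i.e., building a $(1+\epsilon/\alpha)$-core-set and multiplying through by $\alpha$. Your parameter choice $\epsilon_1=\epsilon/\alpha$ and the resulting space accounting are exactly what the paper intends.
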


\section{Applications to MapReduce}
\label{sec:mapreduce}

Recall that a MapReduce (MR) algorithm \cite{KarloffSV10,
  PietracaprinaPRSU12} executes as a sequence of \emph{rounds} where,
in a round, a multiset $X$ of key-value pairs is transformed into a
new multiset $Y$ of pairs by applying a given reducer function (simply
called \emph{reducer}) independently to each subset of pairs of $X$
having the same key. The model features two parameters $M_T$ and
$M_L$, where $M_T$ is the total memory available to the computation,
and $M_L$ is the maximum amount of memory locally available to each
reducer. Typically, we seek MR algorithms that, on an input of size
$n$, work in as few rounds as possible while keeping $M_T=\BO{n}$ and
$M_L=\BO{n^{\delta}}$, for some $0\le {\delta} < 1$.
 
Consider a set $S$ belonging to a metric space of doubling dimension
$D$, and a partition of $S$ into $\ell$ disjoints sets
$S_1, S_2, \ldots, S_{\ell}$ . In what follows, $\diversity(\cdot)$
denotes the diversity function of the problem under consideration, and
$O$ denotes an optimal solution to the problem with respect to
instance $S=\cup_{i=1}^{\ell}S_i$. Also, we let $\rho^*_{k,i}$ be the
optimal farness for $S_i$ w.r.t. $k$, with $1 \le i \le \ell$, and let
$\rho^*_k$ be the optimal farness for $S$ w.r.t. $k$.  Clearly,
$\rho^*_{k,i} \le \rho^*_k$, for every $1 \le i \le \ell$.

The basic idea of our MR algorithms is the following. First, each set
$S_i$ is mapped to a reducer, which computes a core-set
$T_i\subseteq S_i$.  Then, the core-sets are aggregated into one
single core-set $T=\bigcup_{i=1}^\ell T_i$ in one reducer, and a
sequential approximation algorithm is run on $T$, yielding the final
output.  We are thus employing the composable core-sets framework
introduced in~\cite{IndykMMM14}.

The following Lemma shows that if we run Algorithm {\sc GMM} 
from Section~\ref{sec:basic-properties} on
each $S_i$, with $1\le i\le\ell$, and then take the union of the
outputs, the resulting set satisfies the hypotheses of
Lemma~\ref{lem:remote-edge}.

\begin{lemma}\label{lem:gmm-properties}
  For any $0 < \epsilon' \leq 1$, let $k'=(8/\epsilon')^D\cdot k$, and
  let $T=\bigcup_{i=1}^\ell \operatorname{GMM}(S_i, k')$. Then, given
  an arbitrary set $X\subseteq S$ with $|X| = k$, there exist a
  function $p: X \rightarrow T$ such that for any $x\in X$,
  $d(x, p(x)) \le (\epsilon'/2)\rho_k^*$.
\end{lemma}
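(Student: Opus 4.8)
The plan is to mimic the structure of the proof of Lemma~\ref{lem:smm-properties}, but exploiting that {\sc GMM} is run \emph{locally} on each block $S_i$ with the inflated parameter $k'$, and then patching the per-block bounds together. First I would recall the two facts about $T_i=\operatorname{GMM}(S_i,k')$: the $k$-center guarantee $r_{T_i}\le 2 r_{k',i}^*$, where $r_{k',i}^*$ is the optimal range of $S_i$ with respect to $k'$, and the anticover property $r_{T_i}\le\rho_{T_i}$. The key geometric step is the same covering argument: fix an optimal $k$-center clustering of $S_i$ with range $r_{k,i}^*$; by the doubling-dimension property each of its $k$ balls of radius $r_{k,i}^*$ can be covered by $(1/\epsilon'')^D$ balls of radius $\epsilon'' r_{k,i}^*$, with $\epsilon''=\epsilon'/8$, so $S_i$ is covered by at most $k'=(8/\epsilon')^D k = (1/\epsilon'')^D k$ balls of radius $\epsilon'' r_{k,i}^*$. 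Choosing one center of $S_i$ inside each nonempty ball gives a feasible $k'$-center solution for $S_i$ of range at most $2\epsilon'' r_{k,i}^*$, hence $r_{k',i}^*\le 2\epsilon'' r_{k,i}^*$ and therefore $r_{T_i}\le 2 r_{k',i}^*\le 4\epsilon'' r_{k,i}^* = (\epsilon'/2) r_{k,i}^*$.

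Next I would promote this to a global bound. Since $S_i\subseteq S$, we have $r_{k,i}^*\le r_k^*$ for every $i$ — more directly, one can note $r_{k,i}^*$ is bounded by the range restricted to $S_i$ of an optimal $k$-center solution for $S$, which is at most $r_k^*$. Combining, $r_{T_i}\le(\epsilon'/2) r_k^*$ for every $i$. By Fact~\ref{fact:opt-radius-farness}, $r_k^*\le\rho_k^*$, so $r_{T_i}\le(\epsilon'/2)\rho_k^*$. Now given the arbitrary set $X\subseteq S$ with $|X|=k$, define the proxy $p$ by sending each $x\in X$ to its closest point in $T_i$, where $S_i$ is the block containing $x$; note $T_i\subseteq T$, so $p(x)\in T$, and $d(x,p(x))=d(x,T_i)\le r_{T_i}\le(\epsilon'/2)\rho_k^*$, which is exactly the claimed bound. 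No injectivity is asserted, consistent with the fact that this lemma feeds into Lemma~\ref{lem:remote-edge}.

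The one subtlety I would be careful about — and which I expect to be the main (if modest) obstacle — is the bookkeeping of which optimal range is being compared to which: the covering argument is naturally stated for the \emph{local} optimum $r_{k,i}^*$ of block $S_i$, and one must make the (easy but necessary) observation that $r_{k,i}^*\le r_k^*$ before invoking Fact~\ref{fact:opt-radius-farness}, since that fact is about $S$, not about $S_i$. A second minor point is the edge case where $S_i$ has fewer than $k'$ points, in which case $\operatorname{GMM}(S_i,k')$ returns all of $S_i$ and $r_{T_i}=0$, so the bound holds trivially; I would dispatch this in one sentence. Everything else is the same triangle-inequality-free, "closest point" argument already used twice in the streaming section, so no new machinery is required.
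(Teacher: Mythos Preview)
Your overall architecture is right, but the step ``Since $S_i\subseteq S$, we have $r_{k,i}^*\le r_k^*$'' is false. In the paper's definition of optimal range, the $k$ centers must lie in the set itself, so restricting to a subset can \emph{increase} the optimal $k$-center radius. A concrete counterexample: $S=\{0,1,2\}$ on the line with $k=1$ gives $r_1^*=1$ (center at $1$), while $S_1=\{0,2\}$ has $r_{1,1}^*=2$. Your ``more directly'' justification has the same flaw: the range, restricted to $S_i$, of an optimal $k$-center solution for $S$ is indeed at most $r_k^*$, but those centers need not lie in $S_i$, so this does not bound $r_{k,i}^*$.

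The paper sidesteps this by never comparing local and global ranges. Instead it looks at the first $k$ points $T_i(k)$ chosen by {\sc GMM} on $S_i$ and uses the anticover property for that prefix: setting $d_k=d(c_k,T_i(k)\setminus\{c_k\})$, one has $r_{T_i(k)}\le d_k\le\rho_{T_i(k)}\le\rho_k^*$, the last inequality because $T_i(k)$ is a size-$k$ subset of $S$ and $\rho_k^*$ is a \emph{maximum} over such subsets (so passing to a subset of $S$ is harmless for farness, unlike for range). The doubling covering is then done with balls of radius $d_k$, giving $r_{T_i}\le 4\epsilon'' d_k\le 4\epsilon''\rho_k^*=(\epsilon'/2)\rho_k^*$ directly. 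Your argument is easily repaired along similar lines: run the covering argument starting from $k$ balls of radius $r_k^*$ (centers in $S$, not $S_i$) that cover $S_i$, which yields $r_{k',i}^*\le 2\epsilon'' r_k^*$ and hence $r_{T_i}\le 4\epsilon'' r_k^*\le 4\epsilon''\rho_k^*$ via Fact~\ref{fact:opt-radius-farness}. But as written, the bookkeeping you flagged as ``the main (if modest) obstacle'' is exactly where the proof breaks.
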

\begin{proof}
  Fix an arbitrary index $i$, with $1 \leq i \leq \ell$, and let
  $T_i=\{ c_1, c_2, \ldots, c_{k'}\}$, where $c_j$ denotes the point
  added to $T_i$ at the $j$-th iteration of {\sc GMM}$(S_i, k')$. Let
  also $T_i(k)=\{c_1, c_2, \ldots, c_k\}$ and
  $d_k = d(c_k, T_i(k)\setminus \{c_k\})$. From the anticover property
  exhibited by GMM, which holds for any prefix of points selected by
  the algorithm, we have
  $r_{T_i(k)} \le d_k\le \rho_{T_i(k)} \le \rho_k^*$.  Define
  $\epsilon''=\epsilon'/8$. Since $S_i$ can be covered with $k$ balls
  of radius at most $d_k$, and the space has doubling dimension $D$,
  then there exist $k'$ balls in the space (centered at nodes not
  necessarily in $S_i$) of radius at most $\epsilon'' d_k$ that
  contain all the points in $S_i$. By choosing one arbitrary center in
  $S_i$ in each such ball, we obtain a feasible solution to the
  $k'$-center problem for $S_i$ with range at most $2\epsilon'' d_k$,
  which implies that the cost of the optimal solution to $k'$-center
  is at most $2\epsilon' d_k$. As a consequence, {\sc GMM}$(S_i, k')$
  will return a 2-approximate solution $T_i$ to $k'$-center with
  $r_{T_i} \le 4\epsilon'' d_k$, and we have
  $r_{T_i} \le 4\epsilon'' d_k \le 4\epsilon''\rho_k^*$. Let now
  $T=\bigcup_{i=1}^\ell T_i$ and $r_T=\max_{1\le i\le\ell}
  r_{T_i}$. We have that $r_T\le 4\epsilon''\rho_k^*$, hence, for any
  set $X\subseteq S$, the desired proxy function $p(\cdot)$ is
  obtained by mapping each $x\in X$ to the closest point in $T$. By
  the observations on the range of $T$, we have
  $d(x, p(x)) \le 4\epsilon''\rho_k^* = (\epsilon'/2)\rho_k^*$.
\end{proof}

For the diversity problems considered in Lemma~\ref{lem:remote-csbt}
(remote-cycle, remote-star, remote-bipartition, and remote-tree) the
proxy function is required to be injective. Therefore, we develop an
extension of the {\sc GMM} algorithm, dubbed {\sc GMM-EXT} (see
Algorithm~\ref{alg:GMM-EXT} above) which first determines a kernel
$T'$ of $k'\geq k$ points by running {\sc GMM}$(S,k')$ and then
augments $T'$ by first determining the clustering of $S$ whose centers
are the points of $T'$ and then picking from each cluster its center
and up to $k-1$ delegate points.  In this fashion, we ensure that each
point of an optimal solution to the diversity problem under
consideration will have a distinct close ``proxy'' in the returned set
$T$.

\begin{algorithm}[t]
  \caption{{\sc GMM-EXT}($S, k, k'$)}
  \label{alg:GMM-EXT}
  \DontPrintSemicolon 
  \Let{$T'$}{$\text{\sc GMM}(S,k')$}\;
  Let $T' = \{c_1, c_2, \ldots, c_{k'}\}$\;
  \Let{$T$}{$\emptyset$}\;
  \For{\Let{$j$}{$1$} \To $k'$} {
    \Let{$C_j$}{$\{p\in S: c_j=\argmin_{c\in T'} d(c,p) \wedge p \not\in C_h
      \mbox{ with $h < j$}\}$} \;
    \Let{$E_j$}{$\{c_j\}$ $\cup$ $\{$ arbitrary $\min\{|C_j|-1, k-1\}$ points in $C_j\}$ }\;
    \Let{$T$}{$T\cup E_j$}\;
  }
  \Return{$T$}\;
\end{algorithm}

As before, let $S_1,S_2,\dots,S_\ell$ be disjoint subsets of a metric
space of doubling dimension $D$. We have:
\begin{lemma}\label{lem:gmm-ext-properties}
  For any $0 < \epsilon' \leq 1$, let $k'=(16/\epsilon')^d\cdot k$,
  and let $T=\bigcup_{i=1}^\ell \operatorname{GMM-EXT}(S_i, k,
  k')$. Then, given an arbitrary set $X \subseteq S$, with $|X|=k$,
  there exist an injective function $p: X \rightarrow T$ such that for
  any $x\in X$, $d(x, p(x)) \le (\epsilon'/2)\rho_k^*$.
\end{lemma}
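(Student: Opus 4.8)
The plan is to reuse, essentially verbatim, the range estimate from the proof of Lemma~\ref{lem:gmm-properties} and to combine it with the delegate-counting argument already used for the streaming extension in the proof of Lemma~\ref{lem:smm-ext-properties}. Fix an index $i$ with $1 \le i \le \ell$, set $\epsilon'' = \epsilon'/16$ (so that $k' = (1/\epsilon'')^D k$), and let $T'_i = \{c_1,\dots,c_{k'}\}$ be the kernel returned by {\sc GMM}$(S_i, k')$ in the first line of Algorithm~\ref{alg:GMM-EXT}. The argument in the proof of Lemma~\ref{lem:gmm-properties} then applies without change except for the value of $\epsilon''$: the anticover property of {\sc GMM} bounds the farness of the first $k$ selected points by $\rho^*_{k,i} \le \rho^*_k$, the doubling dimension lets us cover $S_i$ by $k'$ balls of radius $\epsilon''\rho^*_k$, and {\sc GMM} is a $2$-approximation for $k'$-center, whence the range $r_{T'_i}$ of $T'_i$ on $S_i$ satisfies $r_{T'_i} \le 4\epsilon''\rho^*_k = (\epsilon'/4)\rho^*_k$. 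Taking $\epsilon'' = \epsilon'/16$ rather than $\epsilon'/8$ is exactly what leaves a spare factor of $2$ in the budget, to be spent on routing each proxy through a cluster center.

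Next I would analyze the augmentation step of Algorithm~\ref{alg:GMM-EXT} and build the proxy. For each $j$, the cluster $C_j$ consists of points of $S_i$ whose nearest kernel point (under the tie-break in the algorithm) is $c_j$, so every $p \in C_j$ satisfies $d(p,c_j) = d(p,T'_i) \le r_{T'_i}$; since the delegate set $E_j$ is a subset of $C_j$ that contains $c_j$ and has size $|E_j| = \min\{|C_j|,k\}$, the triangle inequality through $c_j$ shows that any point of $C_j$ and any point of $E_j$ are within $2r_{T'_i}$ of each other. Now take $X \subseteq S$ with $|X| = k$. The clusters $C_j^{(i)}$ produced across all $i$ and all $j$ are pairwise disjoint and partition $S$, hence they partition $X$; write $X_{i,j} = X \cap C_j^{(i)}$. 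From $\sum_{i,j}|X_{i,j}| = k$ we get $|X_{i,j}| \le k$, and since $X_{i,j} \subseteq C_j^{(i)}$ we even get $|X_{i,j}| \le \min\{|C_j^{(i)}|,k\} = |E_j^{(i)}|$. For each non-empty $X_{i,j}$ fix an arbitrary injection from $X_{i,j}$ into $E_j^{(i)}$ and define $p$ on $X_{i,j}$ accordingly; then $d(x,p(x)) \le 2r_{T'_i} \le (\epsilon'/2)\rho^*_k$ for every $x \in X_{i,j}$. Because the images $E_j^{(i)}$ lie inside pairwise disjoint clusters, the partial maps on distinct $X_{i,j}$ have disjoint ranges, so together with the injectivity within each block the combined map $p : X \rightarrow T = \bigcup_{i=1}^\ell \operatorname{GMM-EXT}(S_i,k,k')$ is injective, which together with the distance bound is precisely the claim.

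I expect the main obstacle to be bookkeeping rather than any single inequality: one must ensure simultaneously that each cluster retains enough delegates to host all of the at most $k$ points of $X$ that fall into it — which is why both the delegate cap and the target size $|X|$ are set to $k$ — and that proxies chosen in different clusters cannot collide, which relies on the disjointness of the Voronoi cells. The only genuinely quantitative ingredient is the range estimate imported from Lemma~\ref{lem:gmm-properties}, and the role of $\epsilon''=\epsilon'/16$ is just to make the extra hop $x \to c_j \to p(x)$ fit inside the $(\epsilon'/2)\rho^*_k$ budget. The degenerate case $|S_i| \le k'$, where {\sc GMM} returns all of $S_i$, every cluster is a singleton, and $r_{T'_i}=0$, is subsumed by the same argument.
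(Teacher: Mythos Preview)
Your proposal is correct and follows essentially the same route as the paper's proof: set $\epsilon''=\epsilon'/16$, import the range bound $r_{T'_i}\le 4\epsilon''\rho_k^*$ from Lemma~\ref{lem:gmm-properties}, partition $X$ into the pieces $X_{i,j}=X\cap C_j^{(i)}$, inject each $X_{i,j}$ into $E_j^{(i)}$ using $|X_{i,j}|\le\min\{|C_j^{(i)}|,k\}=|E_j^{(i)}|$, and bound $d(x,p(x))\le 2r_{T'_i}\le(\epsilon'/2)\rho_k^*$ via the hop through the center $c_j$. Your write-up is in fact a bit more explicit than the paper's about why the combined map is injective (you argue disjointness of the ranges $E_j^{(i)}$, whereas the paper only states disjointness of the domains $X_{i,j}$ and leaves the range side implicit).
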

\begin{proof}
  For any $1\le i\le\ell$, let
  $T_i=\operatorname{GMM-EXT}(S_i, k ,k')$ be the result of the
  invocation of {\sc GMM-EXT} on $S_i$. By defining
  $\epsilon''=\epsilon'/16$ and by reasoning as in
  Lemma~\ref{lem:gmm-properties}, we have that the range of the set
  $T'_i$ computed by the call to {\sc GMM}$(S_i, k')$ within {\sc
    GMM-EXT}$(S_i,k,k')$ is $r_{T_i'}\le 4\epsilon''\rho_k^*$. Fix an
  arbitrary index $i$, with $1 \leq i \leq \ell$, and consider, for
  $1\leq j\leq k'$, the sets $C_{i,j}$ and $E_{i,j}$ as determined by
  Algorithm {\sc GMM-EXT}$(S_i,k,k')$, and define
  $X_{i,j} = X\cap C_{i,j}$. Since
  $|X_{i,j}|\leq\min\{k,|C_{i,j}|\} = |E_{i,j}|$, we can associate
  each point in $x\in X_{i,j}$ to a distinct proxy $p(x)\in E_{i,j}$.
  Since both $x$ and $p(x)$ belong to $C_{i,j}$, by the triangle
  inequality we have that
  $d(x,p(x))\leq 2r_{T'} \le 8\epsilon''\rho_k^* =
  (\epsilon'/2)\rho_k^*$. Since the input sets
  $S_1, S_2, \dots, S_\ell$ are disjoint, then we have that all the
  $X_{i,j}$ are disjoint. This ensures that we can find a distinct
  proxy for each point of $X$ in $T=\bigcup_{i=1}^\ell T_i$, hence,
  the proxy function is injective.
\end{proof}

The two lemmas above guarantee that the set of points obtained by
invoking {\sc GMM} or {\sc GMM-EXT} on the partitioned input complies
with the hypotheses of Lemmas~\ref{lem:remote-edge}
and~\ref{lem:remote-csbt} of
Section~\ref{sec:basic-properties}.  Therefore, for metric spaces with
bounded doubling dimension $D$, we have that {\sc GMM} and {\sc
  GMM-EXT} compute $(1+\epsilon)$-composable core-sets for the
problems listed in Table~\ref{tab:diversity-notions}, as stated by the
following two theorems.

\begin{theorem}\label{thm:mr-edge-cycle}
  For any $0 < \epsilon \leq 1$, let $\epsilon'$ be such that
  $(1-\epsilon')=1/(1+\epsilon)$, and let $k'=(8/\epsilon')^D\cdot
  k$. The algorithm {\sc GMM}$(S, k')$ computes a $(1+\epsilon)$-composable
  core-set for the remote-edge and remote-cycle problems.
\end{theorem}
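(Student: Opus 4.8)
The plan is to combine the composable-core-set machinery already developed in the excerpt. Theorem~\ref{thm:mr-edge-cycle} asserts that running {\sc GMM}$(S_i,k')$ on each block $S_i$ of an arbitrary partition $S=\bigcup_{i=1}^\ell S_i$ and taking the union yields a $(1+\epsilon)$-composable core-set for the remote-edge and remote-cycle problems, provided $k'=(8/\epsilon')^D k$ with $(1-\epsilon')=1/(1+\epsilon)$. The natural route is a two-line deduction: first invoke Lemma~\ref{lem:gmm-properties} with the given value of $k'$, then feed its conclusion into Lemma~\ref{lem:remote-edge}.

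\begin{proof}
  Let $T=\bigcup_{i=1}^\ell\operatorname{GMM}(S_i,k')$, and recall that by Definition~\ref{def:composable-core-set} we must show $\diversity_k(T)\ge\diversity_k(S)/(1+\epsilon)$ for the remote-edge and remote-cycle diversity functions. Fix an optimal solution $O\subseteq S$ of size $k$ for the problem at hand, so that $\diversity(O)=\diversity_k(S)$. Since $|O|=k$, Lemma~\ref{lem:gmm-properties}, applied with the stated choice $k'=(8/\epsilon')^D k$ and with $X=O$, guarantees the existence of a function $p:O\rightarrow T$ with $d(o,p(o))\le(\epsilon'/2)\rho_k^*$ for every $o\in O$. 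This is exactly the hypothesis of Lemma~\ref{lem:remote-edge} (note $|T|\ge|O|=k$), which therefore certifies that $T$ is a $(1+\epsilon)$-core-set for the remote-edge and remote-cycle problems on $S$; that is, $\diversity_k(T)\ge\diversity_k(S)/(1+\epsilon)$. As the partition $S_1,\dots,S_\ell$ was arbitrary, {\sc GMM}$(S,k')$ (interpreted as the map $c(S_i)=\operatorname{GMM}(S_i,k')$) computes a $(1+\epsilon)$-composable core-set for these two problems.
\end{proof}

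There is essentially no obstacle here: the theorem is a packaging of the two lemmas, and the only thing to be careful about is bookkeeping on the parameters, namely that the $k'=(8/\epsilon')^D k$ of the theorem matches the hypothesis of Lemma~\ref{lem:gmm-properties} verbatim and that the distance bound $(\epsilon'/2)\rho_k^*$ delivered by Lemma~\ref{lem:gmm-properties} is precisely what Lemma~\ref{lem:remote-edge} consumes. One should also double-check that Lemma~\ref{lem:remote-edge} does not demand injectivity of $p$ — it does not, as the excerpt explicitly remarks — which is why the non-injective proxy produced by plain {\sc GMM} (rather than {\sc GMM-EXT}) suffices for exactly the remote-edge and remote-cycle problems, and for those two only.
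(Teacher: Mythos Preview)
Your proposal is correct and matches the paper's own argument exactly: the paper does not write out a formal proof for Theorem~\ref{thm:mr-edge-cycle} but simply states that it follows by combining Lemma~\ref{lem:gmm-properties} (which supplies the proxy function with the required distance bound) with Lemma~\ref{lem:remote-edge} (which converts that bound into the $(1+\epsilon)$-core-set guarantee), and your write-up makes this deduction explicit with the right parameter bookkeeping.
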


\begin{theorem}\label{thm:mr-csbt}
  For any $0 < \epsilon \leq 1$, let $\epsilon'$ be such that
  $(1-\epsilon')=1/(1+\epsilon)$, and let
  $k'=(16/\epsilon')^D\cdot k$. The algorithm {\sc
    GMM-EXT}$(S, k, k')$ computes a $(1+\epsilon)$-composable core-set
  for the remote-clique, remote-star, remote-bipartition, and
  remote-tree problems.
\end{theorem}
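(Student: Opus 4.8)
The plan is to obtain Theorem~\ref{thm:mr-csbt} as a direct consequence of Lemma~\ref{lem:gmm-ext-properties} and Lemma~\ref{lem:remote-csbt}, mirroring the way Theorem~\ref{thm:mr-edge-cycle} follows from Lemma~\ref{lem:gmm-properties} and Lemma~\ref{lem:remote-edge}. Fix an arbitrary collection of disjoint sets $S_1,\dots,S_\ell\subset{\cal D}$ with $|S_i|\ge k$, set $S=\bigcup_{i=1}^\ell S_i$, and let $T=\bigcup_{i=1}^\ell\operatorname{GMM-EXT}(S_i,k,k')$ with $k'=(16/\epsilon')^D k$, where $\epsilon'$ is defined by $1-\epsilon'=1/(1+\epsilon)$ (so $0<\epsilon'\le 1$ whenever $0<\epsilon\le 1$). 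Let $O$ be an optimal solution on instance $S$ for whichever of the four problems is under consideration, so $|O|=k$ and $\diversity_k(S)=\diversity(O)$.

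First I would apply Lemma~\ref{lem:gmm-ext-properties} with $X=O$, which is legitimate since $|O|=k$. This yields an injective proxy function $p:O\to T$ with $d(o,p(o))\le(\epsilon'/2)\rho_k^*$ for every $o\in O$, where $\rho_k^*$ is the optimal farness of $S$ w.r.t.\ $k$. I would also record that $|T|\ge k$: each call $\operatorname{GMM-EXT}(S_i,k,k')$ returns a set containing the $k'$ kernel points produced by $\operatorname{GMM}(S_i,k')$ (every $E_j$ contains its center $c_j$), and $k'\ge k$; the only caveat is the degenerate regime $|S_i|<k'$, where $\operatorname{GMM}(S_i,k')$ is read as returning all of $S_i$ and the argument goes through unchanged. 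Since $T\subseteq S$, the farness $\rho_k^*$ furnished by Lemma~\ref{lem:gmm-ext-properties} is exactly the reference quantity appearing in the hypothesis of Lemma~\ref{lem:remote-csbt}.

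Then I would feed this into Lemma~\ref{lem:remote-csbt}: the set $T\subseteq S$ has $|T|\ge k$ and admits an injective $p:O\to T$ with $d(o,p(o))\le(\epsilon'/2)\rho_k^*$, hence $T$ is a $(1+\epsilon)$-core-set for the remote-clique, remote-star, remote-bipartition, and remote-tree problems, i.e.
\[
  \diversity_k\left(\bigcup_{i=1}^\ell\operatorname{GMM-EXT}(S_i,k,k')\right)
  \ \ge\ \frac{1}{1+\epsilon}\,\diversity_k\left(\bigcup_{i=1}^\ell S_i\right).
\]
Because the partition $S_1,\dots,S_\ell$ was arbitrary, Definition~\ref{def:composable-core-set} is satisfied with $\beta=1+\epsilon$, which is the claim. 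The bookkeeping on $k'$ is consistent: with $\epsilon''=\epsilon'/16$ as in the proof of Lemma~\ref{lem:gmm-ext-properties}, the kernel size required there is $k'=(1/\epsilon'')^D k=(16/\epsilon')^D k$.

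There is essentially no obstacle: the theorem is purely a ``glue'' statement, so the only points demanding care are matching the $\epsilon$--$\epsilon'$ relation across the two lemmas and noting that $T\subseteq S$, so that ``$\rho_k^*$'' denotes the same object on both sides. All the substantive work lives upstream --- the doubling-dimension packing argument inside Lemma~\ref{lem:gmm-ext-properties} that bounds the range of the GMM kernel and lets each optimal point inherit a private delegate, and the triangle-inequality estimates (with injectivity used precisely to guarantee $k$ distinct proxies) inside Lemma~\ref{lem:remote-csbt}.
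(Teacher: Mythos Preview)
Your proposal is correct and matches the paper's own treatment: the paper does not give a separate proof of Theorem~\ref{thm:mr-csbt} but simply observes, in the paragraph preceding it, that Lemma~\ref{lem:gmm-ext-properties} guarantees the hypotheses of Lemma~\ref{lem:remote-csbt} are met, whence the composable core-set claim follows. You have written out exactly this glue argument, with the appropriate bookkeeping on $\epsilon,\epsilon',k'$ and the check that $|T|\ge k$.
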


\paragraph{MapReduce Algorithm.}  The composable core-sets discussed above
can be immediately applied to yield the following MR algorithm
for diversity maximization.  Let $S$ be the input set of $n$ points
and consider an arbitrary partition of $S$ into $\ell$ subsets $S_1,
S_2, \ldots, S_\ell$, each of size $n/\ell$. In the first round, each
$S_i$ is assigned to a distinct reducer, which computes the
corresponding core-set $T_i$, according to algorithms {\sc GMM}, or
{\sc GMM-EXT}, depending on the problem. In the second round, the
union of the $\ell$ core-sets $T = \bigcup_{i=1}^{\ell}T_i$ is
concentrated within the same reducer, which runs a sequential
approximation algorithm on $T$ to compute the final solution.
We have:
\begin{theorem} \label{thm-2-rounds} Let $S$ be a set of $n$ points of
  a metric space of doubling dimension $D$, and let $A$ be a
  linear-space sequential approximation algorithm for any one of the
  problems of Table~\ref{tab:diversity-notions}, returning a solution
  $S'\subseteq S$, with $\diversity_k(S)\leq \alpha\diversity(S')$,
  for some constant $\alpha \geq 1$. Then, for any
  $0< \epsilon \le 1$, there is a 2-round MR algorithm for the same
  problem yielding an approximation factor of $\alpha+\epsilon$, with
  $M_T=n$ and
  \begin{itemize}
  \item $M_L=\BT{\sqrt{(\alpha/\epsilon)^Dkn}}$ for the remote-edge
    and the remote-cycle problems;
  \item $M_L=\BT{k\sqrt{(\alpha/\epsilon)^Dn}}$ for the remote-tree,
    the remote-clique, the remote-star, and the remote-bipartition
    problems.
  \end{itemize}
\end{theorem}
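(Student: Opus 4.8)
The plan is to combine the composable core-set guarantees of Theorems~\ref{thm:mr-edge-cycle} and~\ref{thm:mr-csbt} with the sequential approximation algorithm $A$, and then to carefully bound the memory $M_L$ by choosing the number of blocks $\ell$ appropriately. First I would set $\epsilon'$ so that $(1-\epsilon')=1/(1+\epsilon)$; note $\epsilon' = \epsilon/(1+\epsilon) = \Theta(\epsilon)$ for $0<\epsilon\le 1$, so all factors of the form $(c/\epsilon')^D$ are $\Theta((1/\epsilon)^D)$ up to constants absorbed into the $\Theta(\cdot)$. In the first round, reducer $i$ receives $S_i$ and runs {\sc GMM}$(S_i,k')$ (for remote-edge, remote-cycle) or {\sc GMM-EXT}$(S_i,k,k')$ (for the other four problems), with $k'=\Theta((1/\epsilon)^D k)$ as in those theorems. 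The output core-set $T_i$ has size $k'$ in the first case and at most $k\cdot k' = \Theta((1/\epsilon)^D k^2)$ in the second. By Theorems~\ref{thm:mr-edge-cycle} and~\ref{thm:mr-csbt}, $T=\bigcup_i T_i$ is a $(1+\epsilon)$-composable core-set, i.e.\ $\diversity_k(T)\ge \diversity_k(S)/(1+\epsilon)$. In the second round a single reducer receives $T$ and runs $A$ on it, obtaining $S'\subseteq T\subseteq S$ with $\diversity_k(T)\le \alpha\,\diversity(S')$.

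Chaining these inequalities gives $\diversity_k(S)\le (1+\epsilon)\diversity_k(T)\le (1+\epsilon)\alpha\,\diversity(S') = (\alpha+\alpha\epsilon)\diversity(S')$. Since $\alpha$ is a constant, replacing $\epsilon$ by $\epsilon/\alpha$ throughout (which only changes the hidden constants in $k'$ and hence in the memory bounds) yields the claimed approximation factor $\alpha+\epsilon$; this is the step where the $\alpha$ migrates into the $(\alpha/\epsilon)^D$ appearing in the memory bounds.

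It remains to bound the memory. We take $M_T = n$, which suffices since the total data handled across all reducers in round~1 is the partition of $S$, and in round~2 the single reducer holds $T$, of size $o(n)$ for the relevant parameter regime. For $M_L$: each round-1 reducer must hold its block $S_i$ (size $n/\ell$) and run {\sc GMM} or {\sc GMM-EXT}, which needs space linear in $|S_i|$; the round-2 reducer must hold all of $T$ and run the linear-space algorithm $A$ on it. Thus we need $M_L = \Theta\!\big(\max\{n/\ell,\ |T|\}\big)$, where $|T|=\Theta((\alpha/\epsilon)^D k\,\ell)$ for remote-edge/remote-cycle and $|T|=\Theta((\alpha/\epsilon)^D k^2\,\ell)$ for the other four. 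The main (and only real) obstacle is optimizing over $\ell$: balancing $n/\ell$ against the size of $T$. Setting $n/\ell = (\alpha/\epsilon)^D k\,\ell$ gives $\ell = \sqrt{n/((\alpha/\epsilon)^D k)}$ and hence $M_L = \Theta(\sqrt{(\alpha/\epsilon)^D k n})$ for remote-edge and remote-cycle; setting $n/\ell = (\alpha/\epsilon)^D k^2\,\ell$ gives $\ell=\sqrt{n/((\alpha/\epsilon)^D k^2)}$ and $M_L = \Theta(k\sqrt{(\alpha/\epsilon)^D n})$ for the remaining four problems, exactly as stated. One should also check that with these choices $\ell\ge 1$ and each $|S_i|\ge k'$ (so that the composable core-set definition applies), which holds whenever $n$ is large enough relative to $k$ and the doubling-dimension factor — a mild assumption that can be stated explicitly or absorbed into the regime of interest.
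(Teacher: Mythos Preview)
Your proposal is correct and follows essentially the same approach as the paper: the paper likewise chooses $\epsilon'$ so that the core-set is a $(1+\epsilon/\alpha)$-composable core-set (your ``replace $\epsilon$ by $\epsilon/\alpha$'' step, done upfront), invokes Theorems~\ref{thm:mr-edge-cycle}/\ref{thm:mr-csbt}, applies $A$ on $T$, and balances $n/\ell$ against $|T|$ by taking $\ell=\sqrt{n/k'}$ (resp.\ $\ell=\sqrt{n/(kk')}$) to obtain the stated $M_L$ bounds.
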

\begin{proof}
Set $\epsilon'$ such that $1/(1-\epsilon') = 1+ \epsilon/\alpha$, and
recall the the remote-edge and the remote-cycle problems admit
composable core-sets of size $k'=(8/\epsilon')^Dk$, while the problems
remote-tree, remote-clique, remote-star, and remote-bipartition have
core-sets of size $kk'$, with $k'=(16/\epsilon')^D k$.  Suppose that
the above MR algorithm is run with $\ell = \sqrt{n/k'}$ for the former
group of two problems, and $\ell = \sqrt{n/(kk')}$ for the latter
group of four problems.  Observe that by the choice of $\ell$ we have
that both the size of each $S_i$ and the size of the aggregate set
$|T|$ are $O(M_L)$, therefore the stipulated bounds on the local
memory of the reducers are met. The bound on the approximation factor
of the resulting algorithm follows from the fact that the
Theorems~\ref{thm:mr-edge-cycle} and~\ref{thm:mr-csbt} imply that, for
all problems, $\diversity_k(S) \leq
(1+\epsilon/\alpha)\diversity_k(T)$ and the properties of algorithm
$A$ yield $\diversity_k(T)\leq \alpha\diversity(S)$.
\end{proof}

Theorem~\ref{thm-2-rounds} implies that on spaces of constant doubling
dimension, we can get approximations to remote-edge and remote-cycle
in 2 rounds of MR which are almost as good as the best sequential
approximations, with polynomially sublinear local memory
$M_L=\BO{\sqrt{kn}}$, for values of $k$ up to $n^{1-{\delta}}$, while
for the remaining four problems, with polynomially sublinear local
memory $M_L=\BO{k\sqrt{n}}$ for values of $k = \BO{n^{1/2-{\delta}}}$,
for $0\le {\delta} <1$. In fact, for these four latter problems and the
same range of values for $k$, we can obtain substantial memory savings
either by using randomization (in two rounds), or, deterministically
with an extra round (as will be shown in
Section~\ref{sec:gen-mapreduce}). We have:
\begin{theorem}\label{thm-2-rounds-rand}
  For the problems of remote-clique, remote-star, remote-bipartition,
  and remote-tree, we can obtain a randomized 2-round MR algorithm
  with the same approximation guarantees stated in
  Theorem~\ref{thm-2-rounds} holding with high probability, and with
% \[
% M_L = \left\{\begin{array}{ll}
% \BT{\sqrt{(\alpha/\epsilon)^Dkn\log n}} & \mbox{for  } k = \BO{(\epsilon^Dn\log n)^{1/3}} \\
% \BT{({\alpha/\epsilon})^Dk^2} & \mbox{for  }  \BOM{(\epsilon^Dn\log n)^{1/3}} = k =  \BO{n^{1/2-{\delta}}}, 0\leq {\delta}<1/6.\\
% \end{array}\right.
% \]
  \[
    M_L = \left\{
      \begin{aligned}
        &\BT{\sqrt{(\alpha/\epsilon)^Dkn\log n}} \text{ for } k = \BO{(\epsilon^Dn\log n)^{1/3}} \\
        &\BT{({\alpha/\epsilon})^Dk^2} \text{ for }
        k = \left\{
          \begin{aligned}
            &\BOM{(\epsilon^Dn\log n)^{1/3}}\\
            &\BO{n^{1/2-{\delta}}} ~\forall {\delta}\in[0, 1/6)
          \end{aligned}
        \right.
      \end{aligned}
    \right.
  \] 
where $\alpha$ is the approximation guarantee given by the current
best sequential algorithms referenced in
Table~\ref{tab:diversity-notions}.
\end{theorem}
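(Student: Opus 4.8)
The plan is to combine the composable-core-set construction behind Lemma~\ref{lem:gmm-ext-properties} with a \emph{random} partition of the input, the point being that a random split lets one shrink the delegate sets built by {\sc GMM-EXT} from $k$ points down to $\BT{\log n}$ points while still keeping the proxy function injective, which is the only thing Lemma~\ref{lem:remote-csbt} needs. Concretely: partition $S$ uniformly at random into $\ell$ equal-sized blocks $S_1,\dots,S_\ell$ (the value of $\ell$ is fixed per regime below), set $\epsilon'$ so that $1/(1-\epsilon')=1+\epsilon/\alpha$ as in the proof of Theorem~\ref{thm-2-rounds}, and put $k'=(16/\epsilon')^Dk$. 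In the first MR round, reducer $i$ runs the variant of {\sc GMM-EXT}$(S_i,k,k')$ in which each delegate set $E_{i,j}$ is capped at $h=c\log n$ points rather than $k$ (for a suitably large constant $c$); this returns a core-set $T_i$ with $|T_i|\le k'h$. In the second round $T=\bigcup_i T_i$ is gathered on one reducer and the linear-space sequential algorithm $A$ is run on $T$.

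For correctness, the \emph{kernel} part of the analysis of Lemma~\ref{lem:gmm-ext-properties} is unaffected by the cap: for each $i$, the clusters $C_{i,j}$ induced by {\sc GMM}$(S_i,k')$ have diameter at most $2r_{T'_i}\le(\epsilon'/2)\rho_k^*$, and clusters of distinct blocks are disjoint since the blocks are. Fix one optimal solution $O$ for $S$, $|O|=k$. Each $o\in O$ lies in a unique block, hence in a unique cluster $C_{i(o),j(o)}$, and any point of $E_{i(o),j(o)}$ is within $(\epsilon'/2)\rho_k^*$ of $o$. Building the map cluster by cluster, an injective $p:O\to T$ with $d(o,p(o))\le(\epsilon'/2)\rho_k^*$ exists as long as $|O\cap C_{i,j}|\le|E_{i,j}|=\min\{|C_{i,j}|,h\}$ for every $i,j$; since $|O\cap C_{i,j}|\le|C_{i,j}|$ trivially, this is implied by the single event $\mathcal{E}:=\{\max_i|O\cap S_i|\le h\}$. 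When $\mathcal{E}$ holds, Lemma~\ref{lem:remote-csbt} gives $\diversity_k(S)\le(1+\epsilon/\alpha)\diversity_k(T)$, and $\diversity_k(T)\le\alpha\diversity(A(T))$ then yields the $(\alpha+\epsilon)$ bound for all four problems of Lemma~\ref{lem:remote-csbt}.

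It remains to show $\mathcal{E}$ fails with probability $n^{-\Omega(1)}$. For a uniform balanced partition, $|O\cap S_i|$ is hypergeometric with mean $k/\ell$, whose upper tail is dominated by that of a binomial with the same mean, so a Chernoff bound gives $\prob{|O\cap S_i|>h}\le n^{-c'}$ whenever $h\ge\BT{k/\ell+\log n}$, with $c'$ as large as desired by increasing $c$. A union bound over the $\ell\le n$ blocks then makes $\prob{\bar{\mathcal{E}}}\le n^{-\Omega(1)}$. For the choices of $\ell$ below one always has $k/\ell=\BO{\log n}$, so $h=\BT{\log n}$ suffices, $|T_i|=\BO{(\alpha/\epsilon)^Dk\log n}$, and $|T|\le\ell k'h$.

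Finally, the memory bounds come from choosing $\ell$ to balance $|S_i|=n/\ell$ against $|T|\le\ell k'h$, using $k'=\BT{(\alpha/\epsilon)^Dk}$ and $h=\BT{\log n}$. Taking $\ell=\BT{\sqrt{n/(k'\log n)}}$ gives $M_L=\BT{\sqrt{nk'\log n}}=\BT{\sqrt{(\alpha/\epsilon)^Dkn\log n}}$, which is the regime where $k/\ell=\BO{\log n}$ holds, i.e. $k=\BO{(\epsilon^Dn\log n)^{1/3}}$. For larger $k$, taking $\ell=\BT{k/\log n}$ gives $|T|=\BT{k'k}=\BT{(\alpha/\epsilon)^Dk^2}$ while $n/\ell=\BT{n\log n/k}=\BO{(\alpha/\epsilon)^Dk^2}$ exactly when $k=\BOM{(\epsilon^Dn\log n)^{1/3}}$; the restriction $k=\BO{n^{1/2-\delta}}$ with $\delta<1/6$ keeps $M_L$ polynomially sublinear and keeps this large-$k$ range nonempty, and $M_T=\BT{n}$ since $|T|=o(n)$ in both regimes. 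The main obstacle is precisely this probabilistic step: one must recognize that the only randomness that matters is how the \emph{fixed} set $O$ splits across blocks -- the intra-block cluster structure being irrelevant because $|O\cap C_{i,j}|\le|O\cap S_i|$ -- and then push $h$ all the way down to $\BT{\log n}$ while retaining a high-probability guarantee, which is what makes the two distinct memory regimes emerge from a single uniform construction.
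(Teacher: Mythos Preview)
Your proof is correct and follows essentially the same approach as the paper: randomly partition $S$ into $\ell$ blocks, observe via a balls-into-bins/Chernoff argument that with high probability no block receives more than $\Theta(\max\{\log n, k/\ell\})$ points of a fixed optimal solution $O$, and therefore cap the delegate sets in {\sc GMM-EXT} at that quantity rather than $k$; the injective-proxy requirement of Lemma~\ref{lem:remote-csbt} is then met and the memory bounds follow by balancing $n/\ell$ against $|T|$. The only noteworthy difference is in the second regime, where the paper sets $\ell=\Theta(n/(kk'))$ and caps delegates at $\Theta(k/\ell)$, whereas you set $\ell=\Theta(k/\log n)$ and keep the cap at $\Theta(\log n)$; both choices yield $M_L=\Theta((\alpha/\epsilon)^D k^2)$, so this is a cosmetic variation rather than a different argument.
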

\begin{proof}
  We fix $\epsilon'$ and $k'$ as in the proof of
  Theorem~\ref{thm-2-rounds}, and, at the beginning of the first
  round, we use random keys to partition the $n$ points of $S$ among
  \[
    \ell = \BT{\min\{ \sqrt{n/(k'\log n)} , n/(kk') \}}
  \]
  reducers. Fix
  any of the four problems under consideration and let $O$ be a given
  optimal solution. A simple balls-into-bins argument suffices to show
  that, with high probability, none of the $\ell$ partitions may
  contain more than $\BT{ \max\{\log n, k/\ell\}}$ out of the $k$
  points of $O$.  Therefore, it is sufficient that, within each subset
  of the partition, {\sc GMM-EXT} selects up to those many delegate
  points per cluster (rather than $k-1$).  This suffices to establish
  the new space bounds.
\end{proof}

The deterministic strategy underlying the 2-round MR algorithm can be
employed recursively to yield an algorithm with a larger (yet
constant) number of rounds for the case of smaller local memory
budgets. Specifically, let $T = \bigcup_{i=1}^{\ell}T_i$ be as in the
proof of Theorem~\ref{thm:mr-csbt}.  If $|T|>M_L$, we may re-apply the
core-set-based strategy using $T$ as the new input.  The following
theorem shows that this recursive
strategy can still guarantee an approximation comparable to the
sequential one as long as the local memory $M_L$ is not too small.
\begin{theorem} \label{thm-multi-rounds} Let $S$ be a set of $n$
  points of a metric space of doubling dimension $D$, let and $A$ be a
  linear-space sequential approximation algorithm for any one of the
  problems of Table~\ref{tab:diversity-notions}, returning a solution
  $S'\subseteq S$, with $\diversity_k(S)\leq \alpha\diversity(S')$,
  for some constant $\alpha \geq 1$.  Then, for any
  $0<\epsilon \leq 1$ and $0 < \gamma \leq 1/3$ there is an
  $\BO{(1-\gamma)/\gamma}$-round MR algorithm for the same problem
  yielding an approximation factor of $\alpha+\epsilon$, with $M_T=n$
  and
  \begin{itemize}
  \item $M_L=\BT{(\alpha 2^{(1-\gamma)/\gamma}/\epsilon)^Dkn^\gamma}$ for the remote-edge
    and the remote-cycle problems;
  \item $M_L=\BT{(\alpha 2^{(1-\gamma)/\gamma} \epsilon)^Dk^2n^\gamma}$, for some $\gamma>0$ for the remote-clique,
    the remote-star, the remote-bipartition, and the remote-tree
    problems.
  \end{itemize}
\end{theorem}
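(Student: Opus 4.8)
The plan is to recursively apply the two-round composable-core-set construction underlying Theorem~\ref{thm-2-rounds}, so that after each ``level'' of recursion the size of the surviving candidate set shrinks by a factor that is a small power of $n$, and the approximation degrades by a controlled multiplicative factor. Concretely, I would set $\gamma$ so that $n^{\gamma}$ is roughly the per-level shrinkage; then after $R = \BO{(1-\gamma)/\gamma}$ levels the set has shrunk from $n$ to $\BO{1}$ times the core-set size, so it fits in one reducer and the sequential algorithm $A$ can be run on it. Each level uses two rounds (one to split and compute the per-block core-sets via {\sc GMM} or {\sc GMM-EXT}, one to regroup), but rounds of consecutive levels can be pipelined so the total round count stays $\BO{(1-\gamma)/\gamma}$.

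For the approximation bound, the key observation is that composability compounds cleanly: if at level $j$ we use parameter $\epsilon'_j$ in Theorem~\ref{thm:mr-edge-cycle} (resp.\ Theorem~\ref{thm:mr-csbt}), the core-set $T^{(j)}$ satisfies $\diversity_k(T^{(j-1)}) \le (1+\epsilon_j)\diversity_k(T^{(j)})$ with $1/(1-\epsilon'_j) = 1+\epsilon_j$, and $T^{(j)} \subseteq T^{(j-1)} \subseteq \cdots \subseteq S$, so these factors multiply. Since there are $R = \BO{(1-\gamma)/\gamma}$ levels, I would pick each $\epsilon_j \approx (\epsilon/\alpha)/R$ so that $\prod_j (1+\epsilon_j) \le 1 + \epsilon/\alpha$; composed with the factor $\alpha$ lost by running $A$ on the final surviving set, this yields the claimed $\alpha+\epsilon$ overall. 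The cost of this is that the per-level core-set size carries a factor $(1/\epsilon'_j)^D = \BO{(R\alpha/\epsilon)^D}$, and since $R = \BO{(1-\gamma)/\gamma}$ is a constant absorbed into the exponent base, this is where the $2^{(1-\gamma)/\gamma}$ term in the stated $M_L$ comes from (one should really read it as $(\text{const})^{(1-\gamma)/\gamma}$, the constant being at most $2$ after folding the $D$-th power suitably).

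The memory accounting is then routine: at level $j$, with $\ell_j$ reducers each holding a block of the current set plus the aggregate of the previous level, choosing $\ell_j \approx (|T^{(j-1)}|/(\text{core-set size}))^{1/2}$ (edge/cycle) or $\approx |T^{(j-1)}|/(k\cdot\text{core-set size})$ (the other four) balances the two contributions and gives $M_L = \BT{(\alpha 2^{(1-\gamma)/\gamma}/\epsilon)^D k\, n^{\gamma}}$ in the first case and $M_L = \BT{(\alpha 2^{(1-\gamma)/\gamma}/\epsilon)^D k^2 n^{\gamma}}$ in the second, with $M_T = \BO{n}$ throughout since the total data never grows. The number of levels needed to bring $n$ down to $\BO{M_L}$ is $\BO{(1-\gamma)/\gamma}$ by the geometric shrinkage, completing the round bound.

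I expect the main obstacle to be the bookkeeping that keeps the round count at $\BO{(1-\gamma)/\gamma}$ rather than twice that: one must argue that the ``regroup'' round at the end of level $j$ and the ``split-and-core-set'' round at the start of level $j+1$ can be fused into a single MR round (the reducer that aggregates a batch of level-$j$ core-sets can immediately run {\sc GMM}/{\sc GMM-EXT} on the result and emit the level-$(j+1)$ core-set with a fresh key). The other delicate point is verifying that the compounded parameter choice $\epsilon_j = \BT{\epsilon/(\alpha R)}$ genuinely yields $\prod_{j=1}^{R}(1+\epsilon_j) \le 1+\epsilon/\alpha$ for all admissible $\epsilon \le 1$ — this is a standard $(1+x/R)^R \le e^x$-style estimate, but it must be stated carefully because the hypotheses of Lemmas~\ref{lem:remote-edge} and~\ref{lem:remote-csbt} require $0<\epsilon'_j\le 1$ at each level, which does hold since each $\epsilon_j$ is small.
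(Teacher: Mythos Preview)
Your approach is essentially the same as the paper's: recursively apply the composable-core-set construction, shrinking the surviving set by a factor $n^{\gamma}$ per level so that $(1-\gamma)/\gamma$ levels suffice, and choose the per-level accuracy parameter so that the compounded loss times $\alpha$ stays below $\alpha+\epsilon$.

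Two small points where the paper is cleaner than your write-up. First, your worry about fusing a ``regroup'' round with the next ``split'' round is unnecessary: in the paper's formulation each recursive level is a single MR round from the outset---the reducers of level $j$ simply emit their core-sets with keys that determine the level-$(j{+}1)$ partition, so there is no separate aggregation round to pipeline away. Second, the $(1+x/R)^R\le e^x$ estimate you invoke does not by itself give $\prod_j(1+\epsilon_j)\le 1+\epsilon/\alpha$; the paper instead fixes the \emph{same} per-level slack $a=\epsilon/\bigl(\alpha(2^{(1-\gamma)/\gamma}-1)\bigr)$ and uses the elementary inequality $(1+a)^b\le 1+(2^b-1)a$ (valid for $a\in[0,1]$, $b>1$) with $b=(1-\gamma)/\gamma$, which collapses exactly to $\alpha+\epsilon$ and is precisely the origin of the $2^{(1-\gamma)/\gamma}$ factor in $M_L$. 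With those two adjustments your proposal matches the paper's proof.
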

\begin{proof}
  Let $\epsilon'$ be such that
  $1/(1-\epsilon') = 1+ \epsilon / (\alpha( 2^{(1-\gamma)/\gamma}
  -1))$ and recall that the the remote-edge and the remote-cycle
  problems admit composable core-sets of size $k'=(8/\epsilon')^Dk$,
  while the problems remote-tree, remote-clique, remote-star, and remote-bipartition,
  have core-sets of size $kk'$, with
  $k'=(16/\epsilon')^D$. We may apply the following recursive
  strategy. We partition the input set $S$ into $n/M_L$ sets of size
  $M_L$ and compute the corresponding core-sets. Let $T$ be the union
  of these core-sets. If $|T| > M_L$, then we recursively apply the
  same strategy using $T$ as the new input set, otherwise, we send $T$
  to a single reducer where algorithm $A$ is applied. By the choice of
  the parameters, it follows that in all cases $(1-\gamma)/\gamma$
  rounds suffice to shrink the input set to size at most $M_L$. The
  resulting approximation factor with respect to $\diversity_k(S)$
  will then be at most
  \[
    \begin{aligned}
      \alpha \left(
        1+{\epsilon \over \alpha (2^{(1-\gamma)/\gamma} -1)}
      \right)^{\frac{(1-\gamma)}{\gamma}}
      \le 
      \alpha \left(1+ 
        \frac
        {\epsilon (2^{(1-\gamma)/\gamma} -1)}
        {\alpha (2^{(1-\gamma)/\gamma} -1)} \right)
      = \alpha + \epsilon,
    \end{aligned}
  \] 
  where the last inequality follows from the known fact
  $(1+a)^b \leq (1+(2^b-1)a)$ for every $a \in [0,1]$ and $b > 1$, and
  the observation that, by the choice of $\gamma$, we have
  $(1-\gamma)/\gamma \geq 2$.
\end{proof}

\section{Saving memory: generalized \mbox{core-sets}} \label{sec:generalized}

Consider the problems remote-clique, remote-star, remote-bipartition,
and remote-tree. Our core-sets for these problems are obtained by
exploiting the sufficient conditions stated in
Lemma~\ref{lem:remote-csbt}, which require the existence of an
injective proxy function that maps the points of an optimal solution
into close points of the core-set.  To ensure this property, our
strategy so far has been to add more points to the core-sets.  More
precisely, the core-set is composed by a kernel of $k'$ points,
augmented by selecting, for each kernel point, a number of up to $k-1$
delegate points laying within a small range.  This augmentation
ensures that for each point $o$ of an optimal solution $O$, there
exists a distinct close proxy among the delegates of the kernel point
closest to $o$, as required by Lemma~\ref{lem:remote-csbt}.

In order to reduce the core-set size, the augmentation can be done
implicitly by keeping track only of the number of delegates that must
be added for each kernel point. A set of pairs $(p, m_p)$ is then
returned, where $p$ is a kernel point and $m_p$ is the number of
delegates for $p$ (including $p$ itself).
The intuition behind this approach is the following. The set of pairs
described above can be viewed as a compact representation of a
multiset, where each point $p$ of the kernel appears with multiplicity
$m_p$. If, for a given diversity measure, we solve the natural
generalization of the maximization problem on the multiset, then we
can transform the obtained multiset solution into a feasible solution
for $S$ by selecting, for each multiple occurrence of a kernel point,
a distinct close enough point in $S$.  In what follows we illustrate
this idea in more detail.

Let $S$ be a set of points. A \emph{generalized core-set} $T$ for $S$
is a set of pairs $(p,m_p)$ with $p \in S$ and $m_p$ a positive
integer, referred to as the \emph{multiplicity} of $p$, where the
first components of the pairs are all distinct. We define its
\emph{size} $s(T)$ to be the number of pairs it contains, and its
\emph{expanded size} as $m(T) = \sum_{(p,m_p) \in T} m_p$.
Moreover, we define the \emph{expansion} of a generalized core-set $T$
as the multiset $\mathcal{T}$ formed by including, for each pair
$(p, m_p)\in T$, $m_p$ replicas of $p$ in $\mathcal{T}$.

Given two generalized core-sets $T_1$ and $T_2$, we say that $T_1$ is
a \emph{coherent subset} of $T_2$, and write $T_1 \sqsubseteq T_2$, if
for every pair $(p,m_p) \in T_1$ there exists a pair
$(p,m'_p) \in T_2$ with $m'_p \geq m_p$. For a given diversity
function $\diversity$ and a generalized core-set $T$ for $S$, we
define the \emph{generalized diversity} of $T$, denoted by
$\gendiv(T)$, to be the value of $\diversity$ when applied to its
expansion $\mathcal{T}$, where $m_p$ replicas of the same point $p$
are viewed as $m_p$ distinct points at distance 0 from one another.
We also define the \emph{generalized $k$-diversity} of $T$ as
\[
\gendiv_k(T) = \max_{T' \sqsubseteq T : m(T')=k} \gendiv(T').
\]
Let $T$ be a generalized core-set for a set of points $S$. A set
$I(T) \subseteq S$ with $|I(T)|=m(T)$ is referred to as a
\emph{${\delta}$-instantiation} of $T$ if for each pair $(p,m_p) \in T$
it contains $m_p$ distinct delegate points (including $p$), each at
distance at most ${\delta}$ from $p$, with the requirement that the sets
of delegates associated with any two pairs in $T$ are disjoint. The
following lemma
ensures that the difference between the generalized
diversity of $T$ and the diversity of any of its
${\delta}$-instantiations is bounded.
 
\begin{lemma}\label{lem-gendiv}
  ~Let $T$ be a generalized core-set for $S$ with $m(T)=k$, and
  consider the remote-clique, remote-star, remote-bipartition, and
  remote-tree problems.  For any ${\delta}$-instantiation $I(T)$ of $T$
  we have that
  \[
    \diversity(I(T)) \geq \gendiv(T) - f(k)2{\delta}.
  \] 
  where $f(k) = {k \choose 2}$ for remote-clique, $f(k) = k-1$ for
  remote-star and remote tree, and
  $f(k) = \lfloor k/2 \rfloor \cdot \lceil k/2 \rceil$ for
  remote-bipartition.
\end{lemma}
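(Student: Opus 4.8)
The plan is to exploit the natural correspondence between the expansion $\mathcal{T}$ of $T$ and the instantiation $I(T)$. Since $m(T)=k$, both $\mathcal{T}$ and $I(T)$ consist of exactly $k$ elements, and the definition of a ${\delta}$-instantiation provides, for each pair $(p,m_p)\in T$, a set of $m_p$ distinct delegates (one of which is $p$); matching the $m_p$ replicas of $p$ in $\mathcal{T}$ bijectively with these delegates yields a bijection $\pi\colon \mathcal{T}\to I(T)$. The first step is to record the only metric fact needed: if a replica $r$ of $p$ is sent to $\pi(r)$, then $d(\pi(r),p)\le{\delta}$, so for any two elements $r_1,r_2\in\mathcal{T}$, replicas of $p_1,p_2$ respectively (possibly $p_1=p_2$), the triangle inequality gives $\bigl|\,d(\pi(r_1),\pi(r_2))-d(p_1,p_2)\,\bigr|\le 2{\delta}$, where $d(p_1,p_2)$ is exactly the distance that the definition of $\gendiv$ attributes to the pair $r_1,r_2$ (in particular $0$ when $p_1=p_2$).

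Next, each objective is handled by transporting a configuration through $\pi$. For remote-clique this is immediate: $\diversity(I(T))=\sum_{\{u,v\}\subseteq I(T)}d(u,v)=\sum_{\{r_1,r_2\}}d(\pi(r_1),\pi(r_2))\ge \sum_{\{r_1,r_2\}}\bigl(d(p_{r_1},p_{r_2})-2{\delta}\bigr)=\gendiv(T)-{k\choose 2}2{\delta}$, since there are ${k\choose 2}$ unordered pairs. For remote-star and remote-bipartition, the objective on $I(T)$ is a \emph{minimum} over configurations (a choice of center, resp.\ a balanced bipartition); given any configuration $C$ of $I(T)$, its $\pi$-preimage is a configuration of $\mathcal{T}$ of the same type (the bipartition halves keep their sizes because $\pi$ is a bijection), whose $\gendiv$-value is at least $\gendiv(T)$ by minimality and differs from the value of $C$ by at most $f(k)\cdot 2{\delta}$ — one $2{\delta}$ term per edge that the objective sums, i.e.\ $k-1$ star edges or $\lfloor k/2\rfloor\lceil k/2\rceil$ crossing edges. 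Taking the minimum over $C$ gives the bound.

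For remote-tree the transport runs in the same direction but via the MST. Let $E$ be a minimum spanning tree of $I(T)$, with its $k-1$ edges; then the edge set obtained by replacing each $\{u,v\}\in E$ with $\{\pi^{-1}(u),\pi^{-1}(v)\}$ is a spanning tree of $\mathcal{T}$, of weight at least $w(\MST(\mathcal{T}))=\gendiv(T)$, while at the same time $w(E)=\sum_{\{u,v\}\in E}d(u,v)$ is at least that weight minus $(k-1)2{\delta}$, applying the distortion bound edge by edge. Combining the two inequalities yields $\diversity(I(T))=w(E)\ge\gendiv(T)-(k-1)2{\delta}$.

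The one point that must be handled carefully is the direction of the transport. For the three objectives that are infima (remote-star, remote-bipartition, remote-tree), one has to start from the configuration realizing $\diversity(I(T))$ and pull it \emph{back} along $\pi^{-1}$, relying on the minimality built into the definition of $\gendiv$; pushing an optimal configuration of $\mathcal{T}$ forward would only produce an upper bound on $\diversity(I(T))$, which is useless here. Everything else is a matter of counting how many $2{\delta}$ error terms each objective accrues, which is precisely $f(k)$.
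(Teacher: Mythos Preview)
Your proposal is correct and follows essentially the same approach as the paper: establish a bijection between the expansion $\mathcal{T}$ and $I(T)$ via the delegate structure, observe that each pairwise distance is distorted by at most $2{\delta}$ via the triangle inequality, and conclude since each objective involves $f(k)$ such distances. Your treatment is in fact more careful than the paper's, which compresses all four cases into one sentence; you correctly spell out that for the minimum-type objectives (remote-star, remote-bipartition, remote-tree) one must pull the optimal configuration of $I(T)$ back through $\pi^{-1}$ rather than push forward, a point the paper leaves implicit.
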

\begin{proof}
 Recall that $\gendiv(T)$ is defined over the expansion $\mathcal{T}$
 of $T$ where each pair $(p,m_p) \in T$ is represented by $m_p$
 occurrences of $p$.  We create a 1-1 correspondence between
 $\mathcal{T}$ and $I(T)$ by mapping each occurrence of a point
 $p \in \mathcal{T}$ into a distinct proxy chosen among the delegates
 for $(p,m_p)$ in $I(T)$. The lemma follows by noting both
 $\gendiv(T)$ and $\diversity(I(T))$ are expressed in terms of sums
 of $f(k)$ distances and that, by the triangle inequality, for any
 two points $p_1,p_2$ in the multiset (possibly two occurrences of
 the same point $p$) the distance of the corresponding proxies is at
 least $d(p_1,p_2) - 2{\delta}$.
\end{proof}

It is important to observe that the best sequential approximation
algorithms for the remote-clique, remote-star, remote-bipartition, and
remote-tree problems (see Table~\ref{tab:diversity-notions}), which
are essentially based on either finding a maximal matching or running
{\sc GMM} on the input set
\cite{HassinRT97,ChandraH01,HalldorssonIKT99}, can be easily adapted
to work on inputs with multiplicities. We have:
\begin{fact}\label{fact:gen}
  The best existing sequential approximation algorithms for the
  remote-clique, remote-star, remote-bipartition, and remote-tree, can
  be adapted to obtain from a given generalized core-set $T$ a
  coherent subset $\hat{T}$ with expanded size $m(\hat{T})=k$ and
  $\gendiv(\hat{T}) \ge (1/\alpha) \gendiv_k(T)$, where $\alpha$ is
  the same approximation ratio achieved on the original problems.
  The adaptation works in space $\BO{s(T)}$.
\end{fact}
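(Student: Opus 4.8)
The plan is to reduce the claim to two observations, after which the four problems can be checked essentially by inspection. The first is a definitional identity: if $\mathcal{T}$ is the expansion of the generalized core-set $T$, regarded as a labeled multiset of $m(T)$ points in which two co-located replicas sit at distance $0$ from each other (a pseudometric on the labels), then a coherent subset $\hat{T}\sqsubseteq T$ with $m(\hat{T})=k$ is exactly a $k$-element sub-multiset of $\mathcal{T}$, and $\gendiv(\hat{T})$ is $\diversity$ evaluated on that sub-multiset; hence $\gendiv_k(T)=\diversity_k(\mathcal{T})$. So it suffices to run the classical $\alpha$-approximation algorithm $A$ for the problem at hand on the ground set $\mathcal{T}$: its output is a $k$-subset $\hat{T}$ of $\mathcal{T}$ — equivalently, a coherent subset of $T$ — with $\gendiv(\hat{T})=\diversity(\hat{T})\ge(1/\alpha)\diversity_k(\mathcal{T})=(1/\alpha)\gendiv_k(T)$. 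The second observation is that $A$, for each of remote-clique, remote-star, remote-bipartition, and remote-tree, is built on top of either a greedy maximal matching or a run of {\sc GMM} followed by cheap postprocessing (choosing the star center, the MST, or the bipartition)~\cite{HassinRT97,ChandraH01,HalldorssonIKT99}, and every one of these primitives interacts with its input only through pairwise distances.

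First I would check that each of the four analyses is oblivious to coincidences among points, i.e.\ remains valid when distinct ground-set elements may be at distance $0$. This is routine: the matching-based bound for remote-clique uses only the triangle inequality together with a counting/averaging step, both indifferent to whether some chosen items are co-located; the {\sc GMM}-based bounds rest on the $k$-center guarantee of {\sc GMM}, which holds verbatim in this setting (co-located replicas are simply at distance $0$ and contribute $0$-weight edges to an MST, or $0$ terms to a star/bipartition sum), and the remaining steps are again triangle-inequality arguments. Thus on input $\mathcal{T}$ the algorithm attains the claimed factor $\alpha$ against $\diversity_k(\mathcal{T})=\gendiv_k(T)$.

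Next I would show that this execution of $A$ can be simulated on the compact representation $T$ in $\BO{s(T)}$ space rather than the naive $\BOM{m(T)}$. The point is that the run of $A$ depends on $\mathcal{T}$ only through the $s(T)$ distinct locations, their mutual distances, and the multiplicity counters $m_p$, since the $m_p$ replicas of a location are indistinguishable to a distance-only algorithm. Concretely: {\sc GMM} selects a further replica of a location only once all distinct locations have already been chosen, so it can be run on the $s(T)$ locations while keeping, for each, a counter of how many replicas are currently selected; a greedy maximal matching uses a location $p$ in at most $m_p$ pairs, enforced by decrementing a residual counter (and, if $k$ is odd, one extra replica with residual supply is appended); and the postprocessing steps likewise touch only the distinct locations and their counts. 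The output is produced already in compact form as a coherent subset, and combining the two observations yields Fact~\ref{fact:gen}.

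The step I expect to be the real work is the first one — the per-problem verification that the classical analysis is distinctness-oblivious — not because any single case is difficult, but because it must be carried out for all four problems and for the precise algorithms whose ratios appear in Table~\ref{tab:diversity-notions}: one must make sure that no step of a classical proof implicitly re-uses a point beyond its available supply, so that the benchmark may legitimately be taken to be the best coherent subset $\gendiv_k(T)$ rather than an unrestricted $k$-subset, and that the greedy/matching routine, once constrained by the caps $m_p$, can still realize a configuration within the stated factor. Both hold for essentially the same reason: the optimal coherent subset is just one particular sub-multiset of $\mathcal{T}$, treated by the original proof exactly like any $k$-point set, and $m(T)\ge k$ guarantees the routine never exhausts its supply.
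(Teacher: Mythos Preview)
Your proposal is correct and follows precisely the approach the paper indicates: the paper does not give a formal proof of this fact, only the remark immediately preceding it that the relevant sequential algorithms ``are essentially based on either finding a maximal matching or running {\sc GMM} on the input set \ldots\ [and] can be easily adapted to work on inputs with multiplicities.'' You have fleshed out exactly this sketch --- the reduction to the expansion $\mathcal{T}$, the distinctness-obliviousness of the matching/{\sc GMM} analyses, and the counter-based simulation in $\BO{s(T)}$ space --- so there is nothing to add.
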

\subsection{Streaming}
Using generalized core-sets we can lower the memory requirements for
the  remote-tree, remote-clique, remote-star, and remote-bipartition
problems to match the one of the other two problems, at the expense of
an extra pass on the data. We have:
\begin{theorem}\label{thm-2-pass-gen}
  For the problems of remote-clique, remote star, remote-bipartition,
  and remote-tree, we can obtain a 2-pass streaming algorithm with
  approximation factor $\alpha+\epsilon$ and memory
  $\BT{(\alpha^2/\epsilon)^Dk}$, for any $0< \epsilon <1$, where
  $\alpha$ is the approximation guarantee given by the current best
  sequential algorithms referenced in
  Table~\ref{tab:diversity-notions}.
\end{theorem}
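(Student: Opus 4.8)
The plan is to keep the one‑pass template behind Theorem~\ref{thm:streaming-1-pass}, but for the four problems at hand to replace the $\BO{(1/\epsilon)^Dk^2}$ explicit delegate points stored by {\sc SMM-EXT} with a \emph{generalized core-set} of only $\BO{(1/\epsilon)^Dk}$ pairs, spending a second pass. In the first pass I would run {\sc SMM}$(S,k,k')$ with $k'=(32/\epsilon')^Dk$, for a value $\epsilon'$ fixed at the end, obtaining a kernel $T\subseteq S$ with $|T|=\BO{k'}$ and range $r_T\le(\epsilon'/2)\rho_k^*$, exactly as in Lemma~\ref{lem:smm-properties}. In the second pass I would stream the points again and, for each $q$, find its closest kernel point $c(q)\in T$ (fixed tie‑break) and increment a counter $m_{c(q)}$ capped at $k$. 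The two passes thus deliver the generalized core-set $\mathcal{C}=\{(c,m_c):c\in T,\ m_c>0\}$, with $s(\mathcal{C})=\BO{k'}$, stored in $\BO{k'}$ memory. Finally, with no further pass, I would run the space-$\BO{s(\mathcal{C})}$ adapted sequential algorithm of Fact~\ref{fact:gen} on $\mathcal{C}$ to obtain a coherent subset $\hat{\mathcal{C}}\sqsubseteq\mathcal{C}$ with $m(\hat{\mathcal{C}})=k$ and $\gendiv(\hat{\mathcal{C}})\ge(1/\alpha)\gendiv_k(\mathcal{C})$, and return (an $r_T$-instantiation in $S$ of) $\hat{\mathcal{C}}$.

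For the quality bound I would chain three inequalities. First, a lower bound on $\gendiv_k(\mathcal{C})$ in the style of Lemma~\ref{lem:remote-csbt}: letting $O$ be an optimal solution, the assignment $o\mapsto c(o)$ has $d(o,c(o))\le r_T\le(\epsilon'/2)\rho_k^*$, and for each $c\in T$ the at most $k$ points of $O$ mapped to $c$ are all counted in $m_c$ without being truncated, so there is a coherent subset $\mathcal{C}_O\sqsubseteq\mathcal{C}$ with $m(\mathcal{C}_O)=k$ whose expansion contains one occurrence of $c(o)$ for every $o\in O$. By repeated use of the triangle inequality — charging at most $\epsilon'\rho_k^*$ per pair of optimal points, including pairs $o_1,o_2$ with $c(o_1)=c(o_2)$, which are themselves within $\epsilon'\rho_k^*$ so that the value $0$ they contribute is not too small — and using the elementary fact $f(k)\rho_k^*\le\diversity_k(S)$ (each of the four measures, evaluated on the $k$-set that realizes $\rho_k^*$, is already at least $f(k)\rho_k^*$), one gets $\gendiv_k(\mathcal{C})\ge\gendiv(\mathcal{C}_O)\ge(1-\epsilon')\diversity_k(S)$. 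Second, Fact~\ref{fact:gen} gives $\gendiv(\hat{\mathcal{C}})\ge(1-\epsilon')\diversity_k(S)/\alpha$. Third, Lemma~\ref{lem-gendiv} applied with ${\delta}=r_T$ gives, for any $r_T$-instantiation $I(\hat{\mathcal{C}})\subseteq S$, $\diversity(I(\hat{\mathcal{C}}))\ge\gendiv(\hat{\mathcal{C}})-2f(k)r_T\ge\gendiv(\hat{\mathcal{C}})-\epsilon'\diversity_k(S)$, using again $2r_T\le\epsilon'\rho_k^*$ and $f(k)\rho_k^*\le\diversity_k(S)$. Combining the three,
\[
  \diversity(I(\hat{\mathcal{C}}))\ \ge\ \diversity_k(S)\Big(\tfrac{1-\epsilon'}{\alpha}-\epsilon'\Big),
\]
and it suffices to pick $\epsilon'$ so that $\tfrac{1-\epsilon'}{\alpha}-\epsilon'\ge\tfrac{1}{\alpha+\epsilon}$; since $0<\epsilon\le 1$ and $\alpha\ge 1$, this holds for $\epsilon'=\BT{\epsilon/\alpha^2}$.

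With that choice $k'=(32/\epsilon')^Dk=\BT{(\alpha^2/\epsilon)^Dk}$, and {\sc SMM}'s working memory, the list $\mathcal{C}$, and the execution of the adapted sequential algorithm all fit in $\BT{k'}=\BT{(\alpha^2/\epsilon)^Dk}$ space, which yields the stated bounds. I expect two points to need care. The first is the error bookkeeping that makes the memory exponent carry $\alpha^2$ rather than $\alpha$: the additive slack $2f(k)r_T$ coming from the instantiation lives at the scale of $\diversity_k(S)$ while the returned value is only at the scale of $\diversity_k(S)/\alpha$, so $\epsilon'$ must be taken a $\BT{1/\alpha}$ factor smaller than one would naively guess. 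The second, and the real obstacle, is staying within two passes: the bound above only needs an $r_T$-instantiation of $\hat{\mathcal{C}}$ to \emph{exist} inside $S$ — which it does, since every kernel point retained by $\hat{\mathcal{C}}$ has at least its recorded multiplicity of points of $S$ within distance $r_T$ of it, with the pools of distinct points for different kernel points disjoint — so the argument is clean if one outputs $\hat{\mathcal{C}}$ in the compact pair form; producing an explicit $k$-subset instead must be shown to cost no additional pass, and that is the delicate bookkeeping step on which I would focus the write‑up.
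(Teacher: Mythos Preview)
Your quality analysis is sound and matches the paper's: the chain $\gendiv_k(\mathcal{C})\ge(1-\epsilon')\diversity_k(S)$, Fact~\ref{fact:gen}, and Lemma~\ref{lem-gendiv}, together with $f(k)\rho_k^*\le\diversity_k(S)$, gives the same $\epsilon'=\Theta(\epsilon/\alpha^2)$ and hence the stated memory $\Theta((\alpha^2/\epsilon)^Dk)$. The paper phrases the first step via $\bar\rho=\diversity(O)/f(k)\ge\rho_k^*$, but that is the same inequality.

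The gap is exactly the one you flag at the end, and it is not a matter of bookkeeping: your pass budget is already spent. Pass~1 builds the kernel, pass~2 builds the counts, and only \emph{after} the counts are complete can the adapted sequential algorithm of Fact~\ref{fact:gen} be run to produce $\hat{\mathcal{C}}$; hence during pass~2 you do not yet know which kernel points, and with what multiplicities, will survive in $\hat{\mathcal{C}}$, so you cannot harvest delegates for them on the fly without storing up to $k$ delegates per kernel point --- which is the $\Theta(k'k)$ memory you are trying to avoid. Outputting $\hat{\mathcal{C}}$ in compact form is not a legal answer: the problem asks for a $k$-subset of $S$, and the approximation guarantee is on $\diversity(I(\hat{\mathcal{C}}))$, not on $\gendiv(\hat{\mathcal{C}})$.

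The paper resolves this by collapsing your first two passes into one: rather than running {\sc SMM} and then counting, it runs a count-only variant of {\sc SMM-EXT} in which each kernel point carries an integer $m_c\le k$ instead of a set $E_c$ of delegates, with the merge step adding (and re-capping) counts just as {\sc SMM-EXT} merges delegate sets. This yields the generalized core-set $T$ at the end of pass~1 using $\Theta(k')$ memory; the sequential algorithm is then run in memory to obtain $\hat T$, and pass~2 is devoted entirely to computing an $r_T$-instantiation $I(\hat T)$ by picking, for each surviving pair $(p,m_p)\in\hat T$, $m_p$ distinct stream points within distance $r_T$ of $p$. With this reorganization your three-inequality chain goes through verbatim.
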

\begin{proof}
Let $\bar{\epsilon}$ be such that $\alpha+\epsilon =
\alpha/(1-\bar{\epsilon})$, and observe that $\bar{\epsilon} =
\BT{\epsilon/\alpha}$. In the first pass we determine a generalized
core-set $T$ of size $k'=(64\alpha/\bar{\epsilon})^D\cdot k$ by
suitably adapting the {\sc SMM-EXT} algorithm to maintain counts
rather than delegates for each kernel point. Let $r_T$ denote the
maximum distance of a point of $S$ from the closest point $x$ such
that $(x,m_x)$ is in $T$.  Using the argument in the proof of
Lemma~\ref{lem:smm-properties}, setting $\epsilon' = \bar{\epsilon}/(2
\alpha)$, it is easily shown that $r_T \leq (\epsilon'/2) \rho_k^* =
(\bar{\epsilon}/(4\alpha)) \rho_k^*$. Therefore, we can establish an
injective map $p(\cdot)$ from $O$ to the expansion $\mathcal{T}$ of
$T$. Let us focus on the remote-clique problem (the argument for the
other three problems is virtually identical), and define $\bar{\rho} =
\diversity(O) /{k\choose 2}$. By reasoning as in the proof of
Lemma~\ref{lem:remote-csbt}, we can show that $\gendiv_k(T) \geq
\diversity(O)(1-\bar{\epsilon}/(2\alpha))$.

At the end of the pass, the best sequential algorithm for the problem,
adapted as stated in Fact~\ref{fact:gen}, is used to compute in memory
a coherent subset $\hat{T} \sqsubseteq T$ with $m(\hat{T})= k$ and
such that
$\gendiv(\hat{T}) \geq
\diversity(O)(1-\bar{\epsilon}/(2\alpha))/\alpha$.  The second pass
starts with $\hat{T}$ in memory and computes an $r_T$-instantiation
$I(\hat{T})$ by selecting, for each pair $(p,m_p) \in \hat{T}$, $m_p$
distinct delegates at distance at most
$r_T \leq (\bar{\epsilon}/(4\alpha)) \bar{\rho}$ from $p$.  Note that
a point from the data stream could be a feasible delegate for multiple
pairs. Such a point must be retained as long as the appropriate
delegate count for each such pair has not been met.  By applying
Lemma~\ref{lem-gendiv} with $\delta=(\bar{\epsilon}/(4\alpha)) \bar{\rho}$,
we get $\diversity(I(\hat{T})) \geq \diversity(O)/(\alpha+\epsilon)$.
Since $\bar{\epsilon} = \BT{\epsilon/\alpha}$, the space required is
$\BT{(\alpha/\bar{\epsilon})^D k} =
\BT{(\alpha^2/\epsilon)^D k}$.
\end{proof}

\begin{table}
  \small
  \centering
  \begin{tabular}{l@{\hskip 2pt} | c@{\hskip 1pt} c@{\hskip 1pt} | c@{\hskip -1pt} c@{\hskip 0pt} c}
    \toprule
    Problem
     & \multicolumn{2}{c|}{Streaming}
     & \multicolumn{3}{c}{MapReduce}
    \\
    % Header 2
     & 1 pass
     & 2 passes
     & 2 rounds det.
     & 2 rounds randomized
     & 3 rounds det.
    \\
    \midrule
    r-edge 
     & \multirow{2}{*}{$\BT{(1/\epsilon)^D k}$}
     & \multirow{2}{*}{$-$}
     & \multirow{2}{*}{$\BT{\sqrt{(1/\epsilon)^D kn}}$}
     & \multirow{2}{*}{$-$}
     & \multirow{2}{*}{$-$}
    \\
    r-cycle & & & & & \\
    \midrule
    r-clique 
     & \multirow{4}{*}{
       $\BT{(1/\epsilon)^D k^2}$
       }
     & \multirow{4}{*}{
       $\BT{(1/\epsilon)^D k}$
       }
     & \multirow{4}{*}{
       $\BT{k\sqrt{(1/\epsilon)^Dn}}$
       }
     & \multirow{4}{*}{
       $\begin{aligned}&\max\Big\{\BT{({1/\epsilon})^Dk^2},\\ &\qquad\BT{\sqrt{(1/\epsilon)^Dkn\log n}}\Big\}\end{aligned}$
       }
     & \multirow{4}{*}{
       $\BT{\sqrt{(1/\epsilon)^D kn}}$
       }
    \\
    r-star & & & & & \\
    r-bipartition & & & & & \\
    r-tree & & & & & \\
    \bottomrule
  \end{tabular}
  \caption{%
    Memory requirements of our streaming and MapReduce approximation
    algorithms. (For MapReduce we report only the size of $M_L$ since
    $M_T$ is always linear in $n$.)  %
    The approximation factor of each algorithm is $\alpha+\epsilon$,
    where $\alpha$ is the constant approximation factor of the
    sequential algorithms listed in Table~\ref{tab:diversity-notions}.
    }\label{tab:mapreduce-streaming}
\end{table}

\subsection{MapReduce}
\label{sec:gen-mapreduce}
Let $\diversity$ be a diversity function, $k$ be a positive
integer, and $\beta \ge 1$. A function $c(S)$ that maps a set of
points $S$ to a generalized core-set $T$ for $S$ computes a
\emph{$\beta$-composable generalized core-set} for $\diversity$ if,
for any collection of disjoint sets $S_1,\dots, S_\ell$, we have that
\[
  \gendiv_k\left(\bigcup_{i=1}^\ell c(S_i)\right) \ge
  \frac{1}{\beta} \diversity_k\left(\bigcup_{i=1}^\ell S_i\right).
\]
Consider a simple variant of {\sc GMM-EXT}, which we refer to as {\sc
  GMM-GEN}, which on input $S$, $k$ and $k'$ returns a generalized
core-set $T$ of $S$ of size $s(T)=k'$ and extended size
$m(T) \leq k k'$ as follows: for each point $c_i$ of the kernel set
$T' =$ {\sc GMM}$(S,k')$, algorithm {\sc GMM-GEN} returns a pair
$(c_i,m_{c_i})$ where $m_{c_i}$ is equal to the size of the set $E_i$
computed in the $i$-th iteration of the for loop of {\sc GMM-EXT}.

\begin{lemma}\label{lem:gmm-gen-core-set}
For any $\epsilon' >0$, define $k'=(16\alpha/\epsilon')^D k$.
Algorithm {\sc GMM-GEN} computes a $\beta$-composable generalized
core-set for the remote-clique, remote-star, remote-bipartition, and
remote-tree problems, with $1/\beta = 1 - \epsilon'/(2\alpha)$.
\end{lemma}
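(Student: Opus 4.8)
The plan is to carry over the proof of Lemma~\ref{lem:gmm-ext-properties} to the multiset language of generalized core-sets, with the accuracy parameter effectively scaled down by a factor $1/\alpha$. Fix disjoint sets $S_1,\dots,S_\ell$, each of size at least $k$, write $S=\bigcup_i S_i$ and $T=\bigcup_{i=1}^\ell c(S_i)$ where $c(S_i)$ is the generalized core-set {\sc GMM-GEN}$(S_i,k,k')$, and let $O$ be an optimal $k$-point solution for the objective on $S$, so $\diversity(O)=\diversity_k(S)$. First I would observe that $T$ is a legitimate generalized core-set: the kernel points produced inside a single $S_i$ are pairwise distinct and the $S_i$ are disjoint, so all first components of the pairs in $T$ are distinct.

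Set $\epsilon''=\epsilon'/(16\alpha)$, so that $k'=(16\alpha/\epsilon')^D k=(1/\epsilon'')^D k$. Exactly as in the proofs of Lemmas~\ref{lem:gmm-properties} and~\ref{lem:gmm-ext-properties} — the anticover property of {\sc GMM} applied to the length-$k$ prefix of the kernel, covering $S_i$ by $k'$ balls of radius $\epsilon'' d_k$ via the doubling-dimension property, and the fact that {\sc GMM} is a $2$-approximation for $k'$-center — the kernel $T'_i$ computed by {\sc GMM}$(S_i,k')$ has range $r_{T'_i}\le 4\epsilon''\rho_k^*=(\epsilon'/(4\alpha))\rho_k^*$. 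Consequently every point of $S_i$ lies within distance $(\epsilon'/(4\alpha))\rho_k^*$ of the center $c_{i,j}$ of its cluster $C_{i,j}$ in the partition of $S_i$ built by {\sc GMM-GEN}, and the pair returned for $c_{i,j}$ carries multiplicity $m_{c_{i,j}}=|E_{i,j}|=\min\{|C_{i,j}|,k\}$.

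Now I would exhibit a coherent subset that witnesses the claimed bound. Since the clusters $\{C_{i,j}\}$ partition $S$ and $O\subseteq S$, define $T'$ by taking, for every pair $(i,j)$ with $O\cap C_{i,j}\neq\emptyset$, the pair $(c_{i,j},|O\cap C_{i,j}|)$. Because $|O\cap C_{i,j}|\le\min\{|C_{i,j}|,|O|\}=m_{c_{i,j}}$ we get $T'\sqsubseteq T$, and $\sum_{i,j}|O\cap C_{i,j}|=|O|=k$ gives $m(T')=k$. Let $\phi$ be a bijection from $O$ onto the expansion $\mathcal{T}'$ that sends the points of $O\cap C_{i,j}$ to the $|O\cap C_{i,j}|$ replicas of $c_{i,j}$; then $d(o,\phi(o))\le(\epsilon'/(4\alpha))\rho_k^*$ for every $o\in O$. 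For each of the four objectives, $\gendiv(T')=\diversity(\mathcal{T}')$ and $\diversity(O)$ are, via $\phi$, the same functional evaluated on matched point sets — a fixed sum of $f(k)$ pairwise distances, or a minimum of such sums over a family of configurations that $\phi$ carries to the analogous family for $O$ — where $f(k)=\binom{k}{2}$ for remote-clique, $f(k)=k-1$ for remote-star and remote-tree, and $f(k)=\lfloor k/2\rfloor\lceil k/2\rceil$ for remote-bipartition, exactly as in Lemma~\ref{lem-gendiv}. Two applications of the triangle inequality per distance (one per endpoint's proxy) give $\gendiv(T')\ge\diversity(O)-2f(k)(\epsilon'/(4\alpha))\rho_k^*=\diversity(O)-(\epsilon'/(2\alpha))f(k)\rho_k^*$. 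For remote-tree one must orient the argument correctly: take $\MST(\mathcal{T}')$, pull it back through $\phi^{-1}$ to a spanning tree of $O$, obtaining $w(\MST(O))\le w(\MST(\mathcal{T}'))+(k-1)(\epsilon'/(2\alpha))\rho_k^*$, hence the same bound. Finally I would use $f(k)\rho_k^*\le\diversity(O)$ — the farness-optimal $k$-subset has all pairwise distances at least $\rho_k^*$, so its objective value is at least $f(k)\rho_k^*$, and $O$ does no worse — to conclude $\gendiv_k(T)\ge\gendiv(T')\ge(1-\epsilon'/(2\alpha))\diversity(O)=(1/\beta)\diversity_k(S)$.

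The main obstacle is the middle step: producing $T'$ that is simultaneously a coherent subset of $T$ and of expanded size exactly $k$. This is precisely where the delegate-count structure $m_{c_j}=\min\{|C_j|,k\}$ hard-wired into {\sc GMM-GEN} is needed, and it plays the role that the injectivity of the proxy played in Lemma~\ref{lem:gmm-ext-properties}. A lesser point of care is orienting the spanning-tree inequality in the right direction for remote-tree; the degenerate case $|S_i|<k'$, where {\sc GMM}$(S_i,k')$ returns all of $S_i$ with unit multiplicities and zero range, is absorbed by the same argument.
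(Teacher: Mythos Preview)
Your proof is correct and follows essentially the same route as the paper's: both construct a coherent subset of $T$ (the paper's $\hat{T}$, your $T'$) by assigning to each kernel center the number of optimal points in its cluster, establish an injective/bijective proxy map into the expansion with displacement at most $(\epsilon'/(4\alpha))\rho_k^*$, and then apply the triangle-inequality counting of Lemma~\ref{lem:remote-csbt}. The only cosmetic difference is that the paper converts $\rho_k^*$ to $\bar{\rho}=\diversity(O)/f(k)$ before the triangle-inequality step, whereas you keep $\rho_k^*$ throughout and invoke $f(k)\rho_k^*\le\diversity(O)$ at the end; these are the same inequality applied at different moments, and your explicit treatment of remote-tree and of the coherence condition $|O\cap C_{i,j}|\le m_{c_{i,j}}$ fills in details the paper leaves to the reader.
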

\begin{proof}
Given a collection of disjoint sets $S_1,\dots, S_\ell$, let $T_i =$
{\sc GMM-GEN}$(S_i,k,k')$, and $T=\bigcup_{i=1}^{\ell} T_i$.  Consider
the expansion $\mathcal{T}$ of $T$.  Let us focus on the remote-clique
problem (the argument for the other three problems is virtually
identical) and define $\bar{\rho} = \diversity(O) /{k\choose 2}$.  By
reasoning along the lines of the proof of
Theorem~\ref{thm-2-pass-gen}, we can establish an injective map $p:
O\rightarrow \mathcal{T}$ such that, for any $o\in O$, $d(o, p(o))\le
(\epsilon'/(4\alpha))\bar{\rho}$.  Let $\hat{T}$ be the generalized
core-set whose expansion into a multiset yields the $k$ points of the
image of $p$. We have:
\[
 \gendiv_k(T) 
 \ge \gendiv(\hat{T})
 \ge \diversity(O)\left(1-\frac{\epsilon'}{2\alpha}\right)
 \qed
\]
\end{proof}
We are now able to show that {\sc GMM-GEN}
computes a high-quality $\beta$-composable generalized core-set, which
can then be employed in a 3-round MR algorithm to approximate the
solution to the four problems under consideration with lower memory
requirements.
This result is summarized in the following theorem.
\begin{theorem}\label{thm:3-rounds-gen}
  For the problems of remote-clique, remote-star, remote-bipartition,
  and remote-tree, we can obtain a 3-round MR algorithm with
  approximation factor $\alpha+\epsilon$ and
  $M_L=\BT{\sqrt{(\alpha^2/\epsilon)^Dkn}}$, for any $0< \epsilon <1$,
  where $\alpha$ is the approximation guarantee given by the current
  best sequential algorithms referenced in
  Table~\ref{tab:diversity-notions}.
\end{theorem}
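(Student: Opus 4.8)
The plan is to mirror the 2-round deterministic MR algorithm of Theorem~\ref{thm-2-rounds}, but replace the explicit {\sc GMM-EXT} core-sets of the four problems with the generalized core-sets produced by {\sc GMM-GEN}, thereby shrinking the size of the aggregated structure from $\Theta(kk')$ to $\Theta(k')$ per partition and paying one extra round to re-expand the chosen multiset into a genuine feasible solution in $S$. First I would set $\bar\epsilon$ so that $\alpha+\epsilon = \alpha/(1-\bar\epsilon)$, hence $\bar\epsilon = \Theta(\epsilon/\alpha)$, and choose $\epsilon'$ and $k' = (16\alpha/\epsilon')^D k$ as in Lemma~\ref{lem:gmm-gen-core-set} with $\epsilon'$ tuned so that $1/(1-\epsilon'/(2\alpha))$ contributes the right share of the final $(1-\bar\epsilon)^{-1}$ factor.

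Round 1: partition $S$ arbitrarily into $\ell$ blocks of size $n/\ell$, and on each block $S_i$ run {\sc GMM-GEN}$(S_i,k,k')$ to obtain a generalized core-set $T_i$ of size $s(T_i)=k'$. Round 2: collect $T=\bigcup_i T_i$ (of size $\ell k'$) in one reducer; by Lemma~\ref{lem:gmm-gen-core-set}, $\gendiv_k(T) \ge (1-\epsilon'/(2\alpha))\diversity_k(S)$. Apply the adapted sequential algorithm of Fact~\ref{fact:gen} to extract in space $\BO{s(T)}$ a coherent subset $\hat T \sqsubseteq T$ with $m(\hat T)=k$ and $\gendiv(\hat T) \ge (1/\alpha)\gendiv_k(T)$. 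Now broadcast $\hat T$ (only $O(k)$ pairs, total multiplicity $k$) together with the range bound $\delta = r_T$, where $r_T = \max_{i}\max_{p\in S_i} d(p,T_i') \le (\epsilon'/(4\alpha))\rho_k^*$ is the uniform bound on the distance of any input point to its kernel center, exactly as derived in the proofs of Lemmas~\ref{lem:gmm-properties} and~\ref{lem:gmm-ext-properties}. Round 3: re-map each original point $p\in S$ to the kernel center $c$ of its block nearest to $p$; a reducer responsible for each surviving kernel point $(c,m_c)\in\hat T$ sees all points in $S$ within distance $\delta$ of $c$ and selects $m_c$ of them as delegates, with the disjointness across pairs handled by a fixed priority order on the pairs (as in the definition of a $\delta$-instantiation). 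The union of all delegates is a $\delta$-instantiation $I(\hat T)$ with $|I(\hat T)|=k$, and Lemma~\ref{lem-gendiv} gives $\diversity(I(\hat T)) \ge \gendiv(\hat T) - f(k)\cdot 2\delta$. Combining the three multiplicative/additive losses and using $\rho_k^* \le \diversity(O)/f(k)$ (which holds for each of the four measures, since $\rho_k^*=\diversity_k(S)$ for remote-edge bounds the average edge/star/bipartition/tree contribution from below) yields $\diversity(I(\hat T)) \ge \diversity(O)/(\alpha+\epsilon)$ after absorbing the $f(k)\cdot 2\delta$ term into the $\bar\epsilon$ budget, exactly as in Theorem~\ref{thm-2-pass-gen}.

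For the memory bound, choose $\ell = \Theta(\sqrt{n/k'})$. Then each block has size $n/\ell = \Theta(\sqrt{k'n})$, the aggregate $T$ in round~2 has size $\ell k' = \Theta(\sqrt{k'n})$, and in round~3 each reducer handles at most the points of $S$ near one kernel center, which is $O(n/\ell) = O(\sqrt{k'n})$ in the worst case (and one can always redistribute so no reducer exceeds this). Since $k' = \Theta((\alpha/\epsilon')^D k) = \Theta((\alpha^2/\epsilon)^D k)$ after substituting $\epsilon' = \Theta(\bar\epsilon/\alpha) = \Theta(\epsilon/\alpha^2)$, we get $M_L = \Theta(\sqrt{k'n}) = \Theta(\sqrt{(\alpha^2/\epsilon)^D k n})$ and $M_T = \Theta(n)$, as claimed.

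The main obstacle I expect is the round-3 instantiation step: unlike the streaming setting of Theorem~\ref{thm-2-pass-gen}, where a single scan with retained candidates suffices, in MapReduce one must argue that the delegate-selection can be localized so that each reducer's working set is $O(M_L)$ and that the global disjointness constraint across the (few) surviving kernel pairs is satisfiable — this needs the observation that $\hat T$ has only $O(k)$ pairs with total multiplicity $k$, so a single deterministic priority assignment of input points to pairs (e.g., by a fixed order on $\hat T$ and greedy claiming) both respects disjointness and keeps each reducer within $O(\sqrt{k'n})$ memory. The rest is a routine chaining of the already-established error bounds.
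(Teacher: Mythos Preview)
Your proposal is correct and follows essentially the same three-round scheme as the paper: run {\sc GMM-GEN} on each block, aggregate the generalized core-sets and apply the adapted sequential algorithm (Fact~\ref{fact:gen}) to get $\hat T$, then instantiate $\hat T$ back in $S$ and invoke Lemma~\ref{lem-gendiv}. Two points where the paper is cleaner than your write-up:

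\emph{Parameters.} Your double parameterization with $\bar\epsilon$ and $\epsilon'$ is unnecessary and leads to an inconsistency: if $\epsilon'=\Theta(\epsilon/\alpha^2)$ as you state, then $\alpha/\epsilon'=\Theta(\alpha^3/\epsilon)$, not $\alpha^2/\epsilon$. The paper simply takes a single $\epsilon'$ with $\alpha/(1-\epsilon')=\alpha+\epsilon$ (so $\epsilon'=\Theta(\epsilon/\alpha)$), and the chain $(1/\alpha)(1-\epsilon'/(2\alpha))\diversity(O)-(\epsilon'/(2\alpha))\diversity(O)\ge (1/\alpha)(1-\epsilon')\diversity(O)$ then yields the $(\alpha+\epsilon)$ factor directly; with this $\epsilon'$ one gets $k'=(16\alpha/\epsilon')^D k=\Theta((\alpha^2/\epsilon)^D k)$ as claimed.

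\emph{Round 3.} You over-engineer the instantiation by routing points to per-center reducers and imposing a priority order for disjointness. The paper instead sends $\hat T$ back to the $\ell$ original reducers: for each $(p,m_p)\in\hat T$ with $p\in S_i$, reducer $i$ picks $m_p$ delegates from the {\sc GMM} cluster of $p$ inside $S_i$. Since those clusters already partition $S_i$ and each has at least $m_p$ points by construction of {\sc GMM-GEN}, disjointness is automatic and each reducer uses only $|S_i|+|\hat T|=O(M_L)$ memory.
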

\begin{proof}
  Consider the remote-clique problem (the argument for the other three
  problems is virtually identical) and define
  $\bar{\rho}=\diversity(O)/{k\choose 2}$. Let $\epsilon'$ be such
  that $\alpha+\epsilon = \alpha/(1-\epsilon')$ and observe that
  $\epsilon'= \BT{\epsilon/\alpha}$. Also, set
  $k'=(16\alpha/\epsilon')^D\cdot k$.  For $\ell = \sqrt{n/k'}$
  consider a arbitrary partition of the input set $S$ into $\ell$
  subsets $S_1, S_2, \ldots, S_{\ell}$ each of size
  $M_L = n/\ell = \sqrt{nk'}$ each. In the first round, each reducer
  applies {\sc GMM-GEN} to a distinct subset $S_i$ to compute
  generalized core-sets of size $k'$. In the second round, these
  generalized core-sets are aggregated in a single generalized
  core-set $T$, whose size is $\ell k' = \sqrt{nk'} = M_L$ and such
  that the maximum distance of a point of $S$ from the closest point
  $x$ with $(x, m_x)\in T$ is
  $r_T\leq (\epsilon'/(4\alpha)) \bar{\rho}$. Then, one reducer
  applies to $T$ the best sequential algorithm for the problem,
  adapted as stated in Fact~\ref{fact:gen}, to compute a coherent
  subset $\hat{T} \sqsubseteq T$ with $m(\hat{T})= k$ and such that
\[
\gendiv(\hat{T}) \geq {1\over \alpha}\gendiv_k(T)
\geq
\left(1-\frac{\epsilon'}{2\alpha}\right){1\over \alpha}\diversity(O),
\]
where the last inequality follows by Lemma~\ref{lem:gmm-gen-core-set}.
In the third round, $\hat{T}$ is distributed to $\ell$ reducers which
are able to compute an instantiation $I(\hat{T})$ of $\hat{T}$ as
follows. For each pair $(p,m_p) \in \hat{T}$, such that $p \in S_i$,
the $i$-th reducer selects $m_p$ distinct delegates from $S_i$ at
distance at most $r_T \leq (\epsilon'/(4\alpha)) \bar{\rho}$ from
$p$. By Lemma~\ref{lem-gendiv}, we have that
\[
\begin{aligned}
\diversity(I(\hat{T})) 
&\ge 
\left(1-\frac{\epsilon'}{2\alpha}\right)\frac{1}{\alpha}\diversity(O)
- \frac{\epsilon'}{2\alpha}\diversity(O)
=
\frac{1}{\alpha}
\left(1 - \frac{\epsilon'}{2\alpha} - \frac{\epsilon'}{2}
     \right)\diversity(O) \\
     &\ge \frac{1}{\alpha}(1-\epsilon')\diversity(O)
     = \frac{1}{\alpha+\epsilon}\diversity(O)
\end{aligned}
\]
As for the memory bound, we have that
$M_L = \sqrt{n k'} = \BT{\sqrt{(\alpha^2/\epsilon)^Dkn}}$.
\end{proof}

A synopsis of the main theoretical results presented in the paper is
given in Table~\ref{tab:mapreduce-streaming}.

\section{Experimental evaluation}
\label{sec:experiments}
We ran extensive experiments on a cluster of 16 machines, each
equipped with 18GB of RAM and an Intel I7 processor.
To the best of our knowledge, ours is the first work on diversity
maximization in the MapReduce and Streaming settings, which
complements theoretical findings with an experimental evaluation.
The MapReduce algorithm has been implemented within the Spark
framework, whereas the streaming algorithm has been implemented in
Scala, simulating a Streaming setting\footnote{The code is available
  as free software at \url{https://github.com/Cecca/diversity-maximization}}.
Since optimal solutions are out of reach for the input sizes that we
considered, for each dataset we computed approximation ratios with
respect to the best solution found by many runs of our MapReduce
algorithm with maximum parallelism and large local memory.
We run our experiments on both synthetic and real-world datasets.
Synthetic datasets are generated randomly from the three-dimensional
Euclidean space in the following way.  For a given $k$, $k$ points are
randomly picked on the surface of the unit radius sphere centered at
the origin of the space, so to ensure the existence of a set of
far-away points, and the other points are chosen uniformly at random
in the concentric sphere of radius 0.8. Among all the distributions
used to test our algorithms, on which we do not report for brevity, we found that this is the most challenging, hence the more
interesting to demonstrate.
To test our algorithm on real-world workloads we used the
\emph{musiXmatch} dataset~\cite{BertinM11}.  This dataset contains
the lyrics of 237,662 songs, each represented by the vector of word
counts of the most frequent 5,000 words across the entire dataset.  The
dimensionality of the space of these vectors is therefore 5,000.  We
filter out songs represented by less than 10 frequent words, obtaining
a dataset of 234,363 songs.  The reason of this filtering is that one
can build an optimal solution using songs with short, non overlapping
word lists. Thus, removing these songs makes the dataset more challenging
for our algorithm.  On this dataset, as a distance between two vectors
$\vec{u}$ and $\vec{v}$, we use the \emph{cosine distance}, defined as
$\dist(\vec{u}, \vec{v}) = \arccos\big(\frac{\vec{u} \cdot
  \vec{v}}{\|\vec{u}\| \|\vec{v}\|}\big)$. This distance is closely
related to the \emph{cosine similarity} commonly used in Information
Retrieval \cite{LeskovecRU14}.
For brevity, we will report the results only for the remote-edge
problem.  We observed similar behaviors for the other diversity
measures, which are all implemented in our software.  All results
reported in this section are obtained as averages over at least 10
runs.

\subsection{Streaming algorithm}\label{sec:streaming-experiments}

The first set of experiments investigates the behavior of the
streaming algorithm for various values of $k$, as well as the impact
of the core-set size, as controlled by the parameter $k'$, on the
approximation quality. The results of these experiments are reported
in Figure~\ref{fig:streaming-approx-musixmatch},
for the musiXmatch dataset, and Figure~\ref{fig:streaming-approx-synthetic}.
for a synthetic dataset of 100 million points, generated as explained above.

\begin{figure}[t]
  \begin{minipage}{.49\linewidth}
    \centering
    \includegraphics[width=\columnwidth]{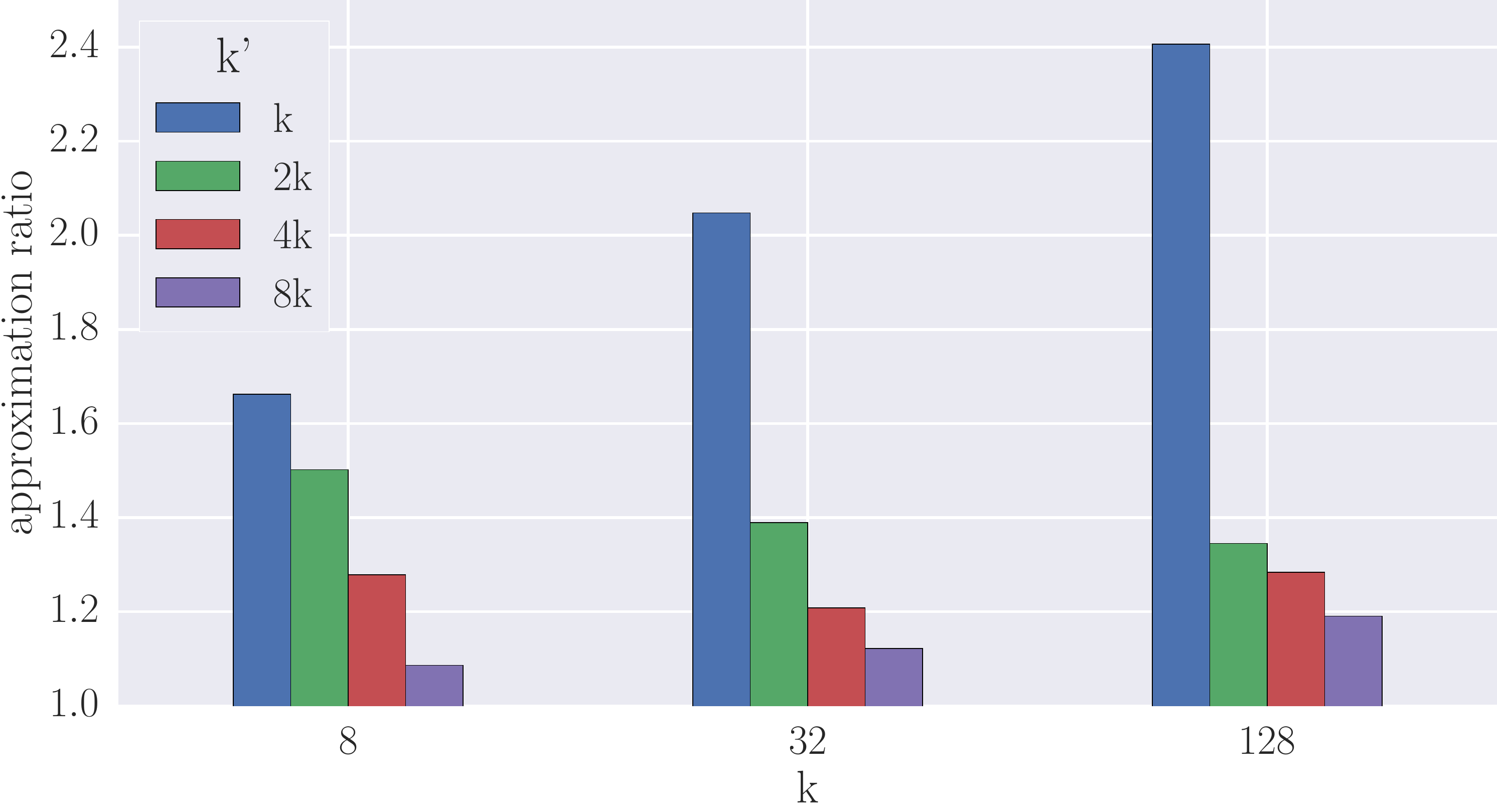}
    \caption{Approximation ratio for the streaming algorithm for different values of $k$ and $k'$ on the \emph{musiXmatch} dataset.}\label{fig:streaming-approx-musixmatch}
  \end{minipage}
  \hfill
  \begin{minipage}{.49\linewidth}
    \centering
    \includegraphics[width=\columnwidth]{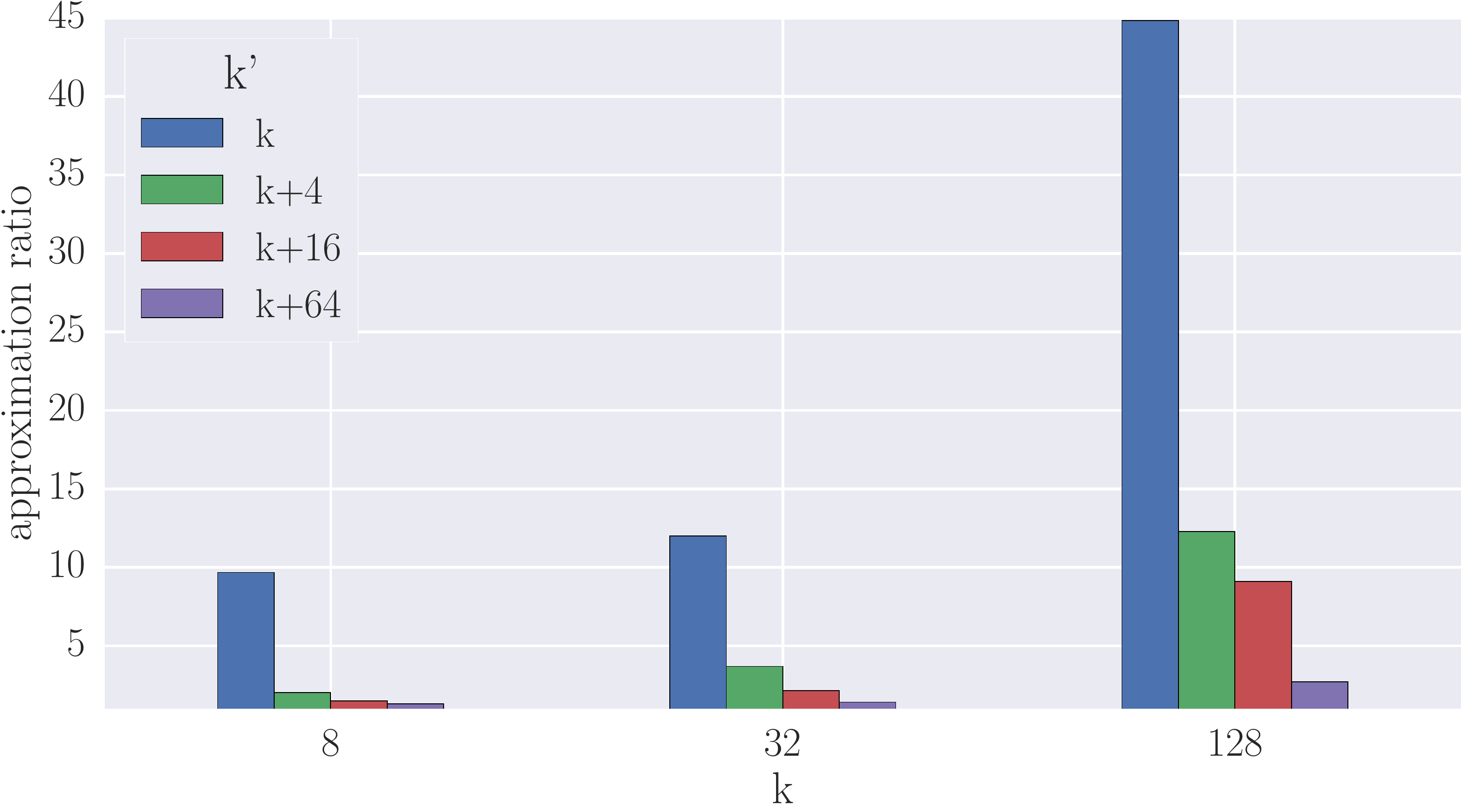}
    \caption{Approximation ratios for the streaming algorithm for different values of $k$ and $k'$ on a synthetic dataset of 100 million points.}\label{fig:streaming-approx-synthetic}
  \end{minipage}
\end{figure}

First, we observe that as $k$ increases the remote-edge measure becomes harder to approximate: finding a higher number of diverse elements is more difficult.
On the real-world dataset, because of the high dimensionality of its
space, we test the influence of $k'$ on the approximation with a
geometric progression of $k'$
(Figure~\ref{fig:streaming-approx-musixmatch}).  On the synthetic
datasets instead (Figure~\ref{fig:streaming-approx-synthetic}), since
$\mathbb{R}^3$ has a smaller doubling dimension, the effect of $k'$ is
evident already with small values, therefore we use a linear
progression.
As expected, by increasing $k'$ the accuracy of the algorithm increases in both
datasets. Observe that although the theory suggests that good
approximations require rather large values of $k' =
\Omega(k/\epsilon^D)$, in practice our experiments show that
relatively small values of $k'$, not much larger than $k$, already
yield very good approximations, even for the real-world dataset
whose doubling dimension is unknown.

In Figure~\ref{fig:throughput}, we consider the performance of the
kernel of streaming algorithm, that is, we concentrate on the time
taken by the algorithm to process each point, ignoring the cost of
streaming data from memory. The rationale is that data may be streamed
from sources with very different throughput: our goal is to show the
maximum rate that can be sustained by our algorithm independently of
the source of the stream.  We report results for the same combination
of parameters shown in Figure~\ref{fig:streaming-approx-musixmatch}.
As expected, the throughput is inversely proportional to both $k$ and
$k'$, with values ranging from 3,078 to 544,920 points/s.  The
throughput supported by our algorithm makes it amenable to be used in
streaming pipelines: for instance, in 2013
Twitter\footnote{\url{https://blog.twitter.com/2013/new-tweets-per-second-record-and-how}}
averaged at 5,700 tweets/s and peaked at 143,199 tweets/s.  In this
scenario, it is likely that the bottleneck of the pipeline would be
the data acquisition rather than our core-set construction.

As for the synthetic dataset, the throughput of the algorithm exhibits
a behavior with respect to $k$ and $k'$ similar to the one reported in
Figure~\ref{fig:throughput}, but with higher values ranging from
78,260 to 850,615 points/s since the distance function is cheaper to compute.

\begin{figure}[t]
  \begin{minipage}{.49\linewidth}
    \centering
    \includegraphics[width=\columnwidth]{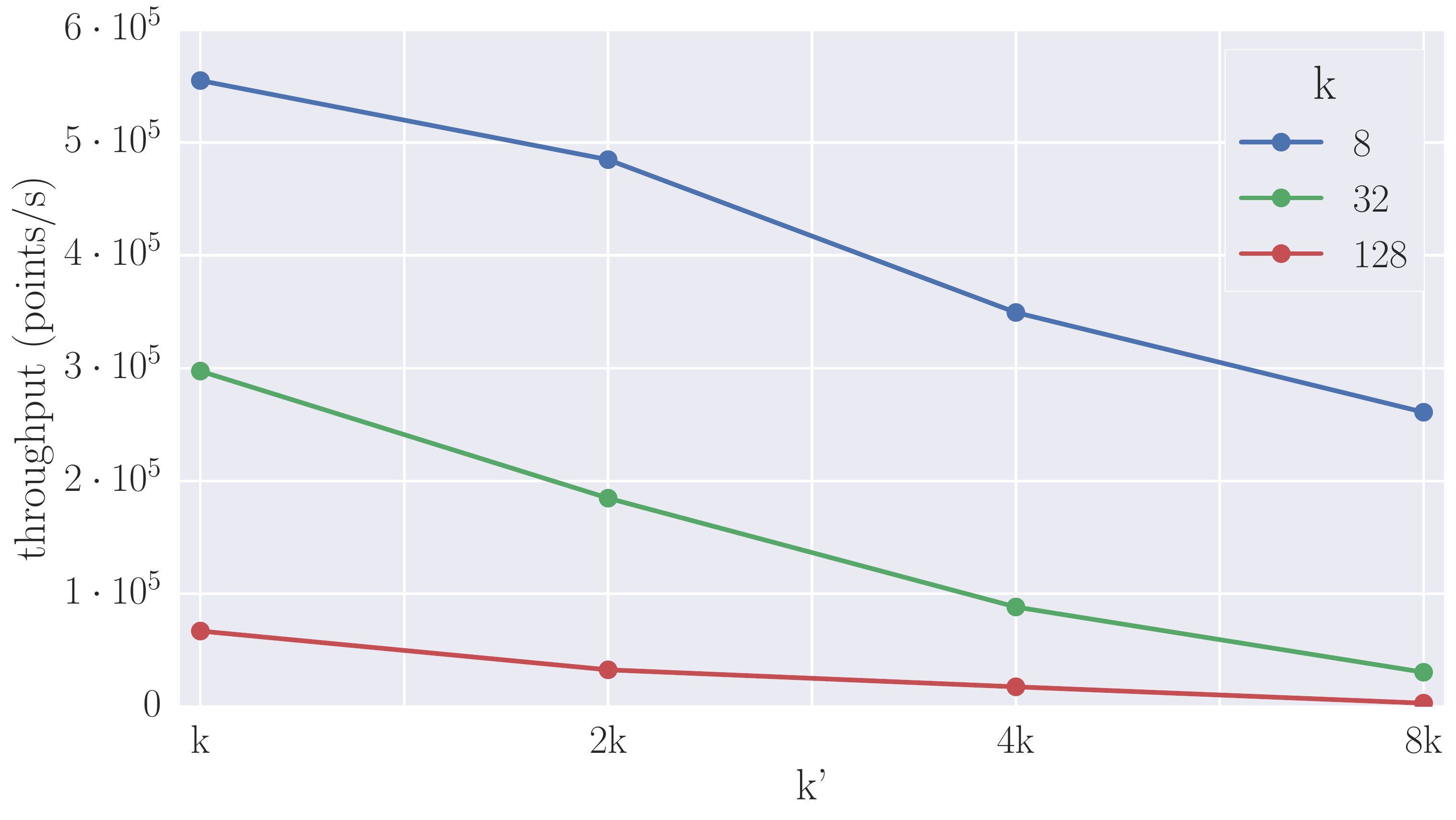}
    \caption{Throughput of the kernel of the streaming algorithm on the \emph{musiXmatch} dataset.}
    \label{fig:throughput}
  \end{minipage}
  \hfill
  \begin{minipage}{.49\linewidth}
    \centering
    \includegraphics[width=\columnwidth]{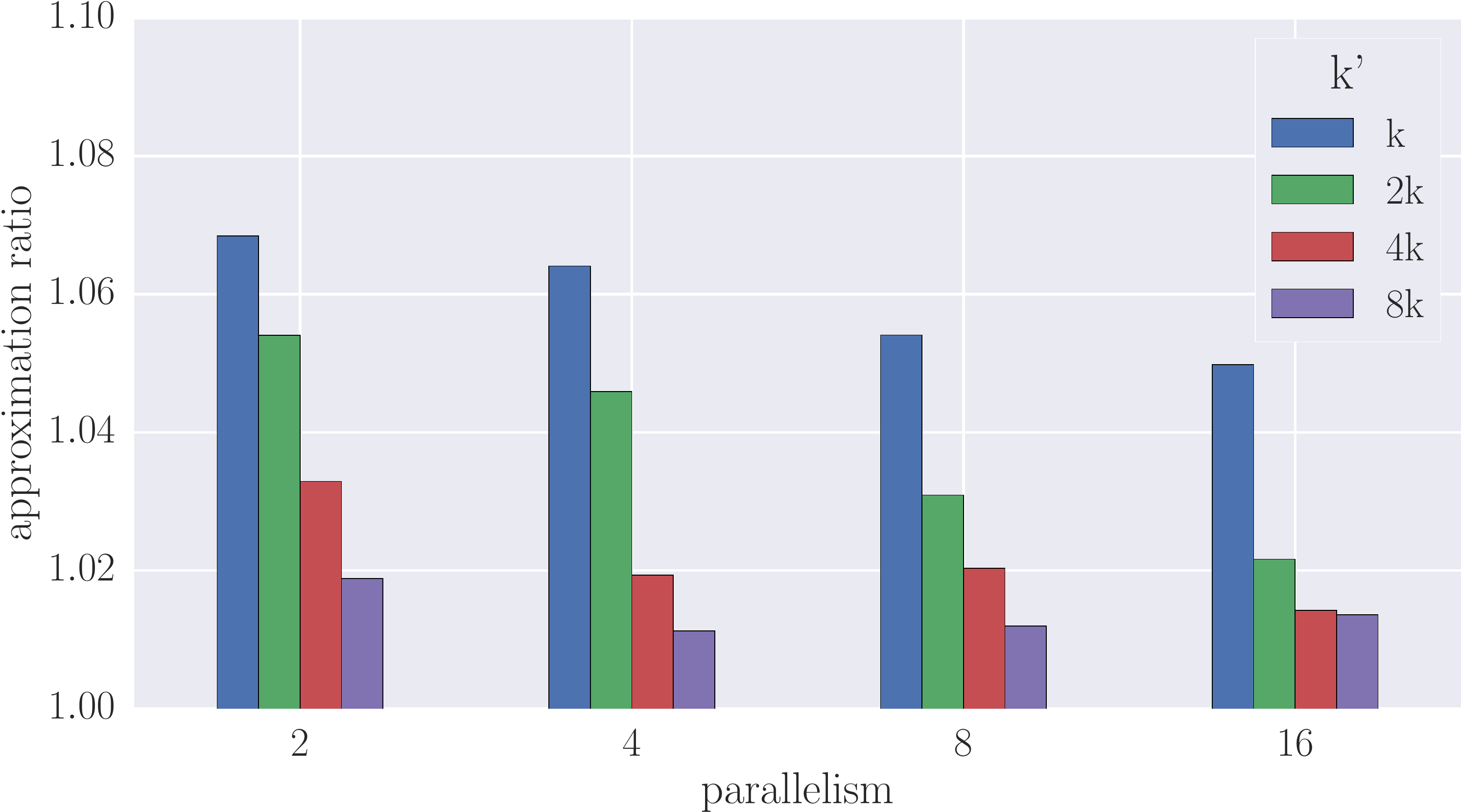}
    \caption{Approximation ratios for the MR algorithm for different values of $k$ and $k'$ on a synthetic dataset of 100 million points.%
    }\label{fig:mapreduce-approximation}
  \end{minipage}
\end{figure}

\subsection{MapReduce algorithm} \label{subsec:MRexp}

We demonstrate our MapReduce algorithm on the same datasets used in
the previous section. For this set of experiments we fixed $k=128$ and
we varied two parameters: size of the core-sets, as controlled by
$k'$, and parallelism (i.e., the number of reducers).  Because the
solution returned by the MapReduce algorithm for $k'=k$ turns out to
be already very good, we use a geometric progression for $k'$ to
highlight the dependency of the approximation factor on $k'$.  The
results are reported in Figure~\ref{fig:mapreduce-approximation}.  For
a fixed level of parallelism, we observe that the approximation ratio
decreases as $k'$ increases, in accordance to the theory.  Moreover,
we observe that the approximation ratios are in general better than
the ones attained by the streaming algorithm, plausibly because in
MapReduce we use a 2-approximation $k'$-center algorithm to build the
core-sets, while in Streaming only a weaker 8-approximation
$k'$-center algorithm is available.

Figure~\ref{fig:mapreduce-approximation} also reveals that if we fix
$k'$ and increase the level of parallelism, the approximation ratio
tends to decrease.  Indeed, the final core-set obtained by aggregating
the ones produced by the individual reducers grows larger as the
parallelism increases, thus containing more information on the input
set.  Instead, if we fix the product of $k'$ and the level of
parallelism, hence the size of the aggregate core-set, we observe that
increasing the parallelism is mildly detrimental to the approximation
quality.  This is to be expected, since with a fixed space budget in
the second round, in the first round each reducer is forced to build a
smaller and less accurate core-set as the parallelism increases.

The experiments for the real-world \emph{musiXmatch} dataset (figures
omitted for brevity) highlight that the {\sc GMM} $k'$-center
algorithm returns very good core-sets on this high dimensional
dataset, yielding approximation ratios very close to 1 even for low
values of $k'$. As remarked above, the more pronounced dependence on
$k'$ in the streaming case may be the result of the weaker approximation
guarantees of its core-set construction.

Since in real scenarios the input might not be distributed randomly
among the reducers, we also experimented with an ``adversarial''
partitioning of the input: each reducer was given points coming from a
region of small volume, so to obfuscate a global view of the
pointset. With such adversarial partitioning, the approximation ratios
worsen by up to $10\%$. On the other hand, as $k'$ increases, the
time required by a random shuffle of the points among the reducers
becomes negligible with respect to the overall running time. Thus,
randomly shuffling the points at the beginning may prove
cost-effective if larger values of $k'$ are affordable.

\subsection{Comparison with state of the art}

In Table~\ref{tab:comparison}, we compare our MapReduce algorithm
(dubbed {\tt CPPU}) against its state of the art competitor presented
in~\cite{AghamolaeiFZ15} (dubbed {\tt AFZ}). Since no code was
available for {\tt AFZ}, we implemented it in MapReduce with the same
optimizations used for {\tt CPPU}.  We remark that {\tt AFZ} employs
different core-set constructions for the various diversity measures,
whereas our algorithm uses the same construction for all diversity
measures.  In particular, for remote-edge, {\tt AFZ} is equivalent to
{\tt CPPU} with $k'=k$, hence the comparison 
is less interesting and can be derived from the behavior
of {\tt CPPU} itself. Instead, for remote-clique, the core-set
construction used by {\tt AFZ} is based on local search and may
exhibit highly superlinear complexity.  For remote-clique, we
performed the comparison with various values of $k$, on datasets of 4
million points on the 2-dimensional Euclidean space, using 16 reducers
({\tt AFZ} was prohibitively slow for higher dimensions and bigger
datasets).  The datasets were generated as described in the
introduction to the experimental section. Also, we ran {\tt CPPU} with
$k'=128$ in all cases, so to ensure a good approximation ratio at the
expense of a slight increase of the running time.
As Table~\ref{tab:comparison} shows, {\tt CPPU} is in all cases at least
three orders of magnitude faster than {\tt AFZ}, while achieving a better quality at the same time.

\begin{table}[t]
  \begin{minipage}{.43\linewidth}
    \centering
    \begin{tabular}{lrrrr}
      \toprule
      & \multicolumn{2}{c}{approximation} & \multicolumn{2}{c}{time (s)} \\
      k & {\tt AFZ} & {\tt CPPU} & {\tt AFZ} & {\tt CPPU} \\
      \midrule
      4 &        1.023 &     1.012 &     807.79 &  1.19 \\
      6 &        1.052 &     1.018 &    1,052.39 & 1.29 \\
      8 &        1.029 &     1.028 &    4,625.46 & 1.12 \\
      \bottomrule
    \end{tabular}
    \caption{Approximation ratios and running times
      of our MR algorithm ({\tt CPPU}) and {\tt AFZ}.}
    \label{tab:comparison}
  \end{minipage}
  \hfill
  \begin{minipage}{.54\linewidth}
    \centering
    \includegraphics[width=\columnwidth]{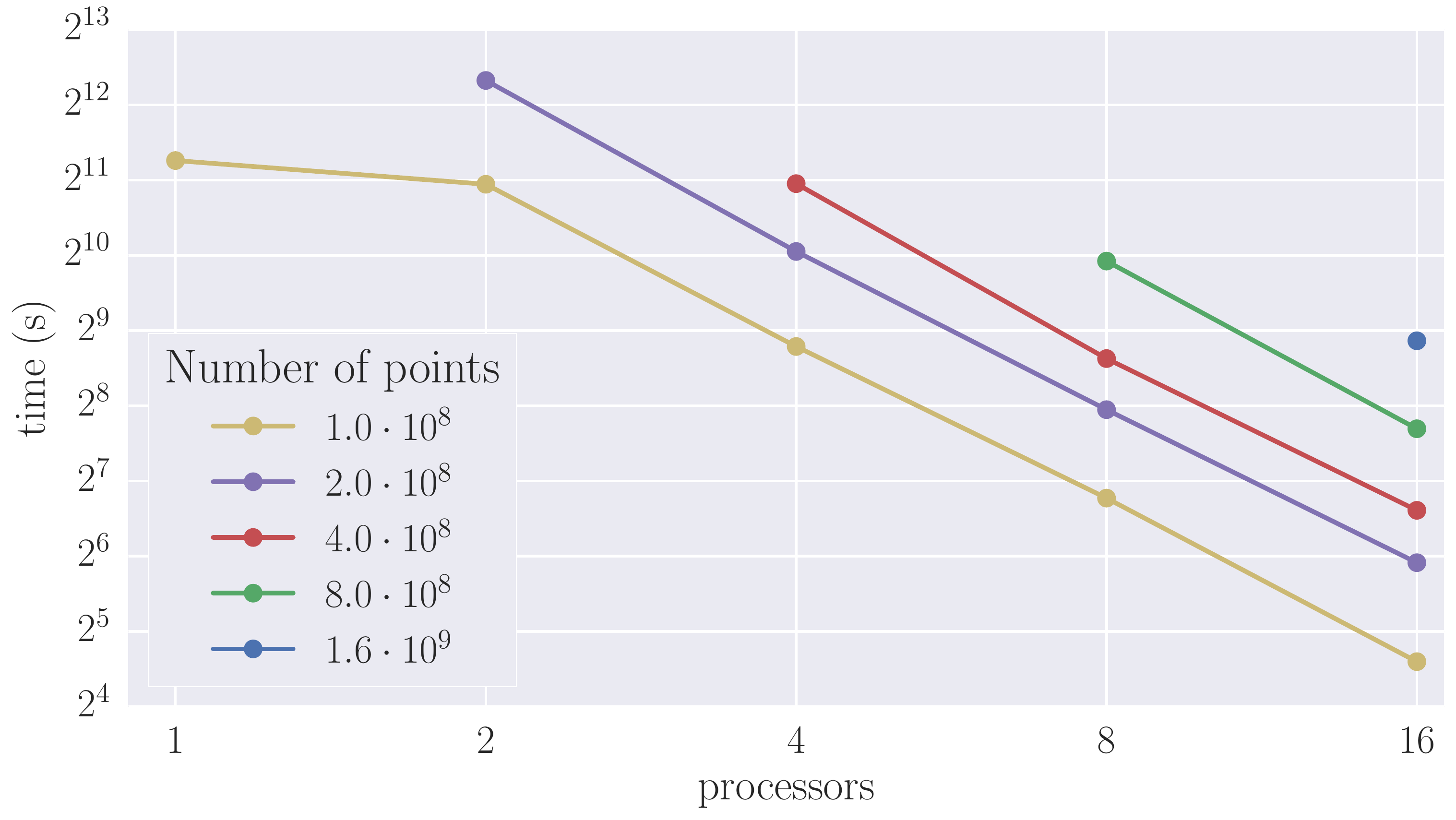}
    \captionof{figure}{Scalability of our algorithms for different number of points and processors. The running time for one processor is obtained with the streaming algorithm.}
    \label{fig:scalability}
  \end{minipage}
\end{table}

\subsection{Scalability}

We report on the scalability of our MR algorithm on datasets drawn
from $\mathbb{R}^3$, ranging from 100 million points (the same dataset
used in subsections~\ref{sec:streaming-experiments}
and~\ref{subsec:MRexp}) up to 1.6 billion points.
We fixed the size $s$ of the memory required by the final reducer and varied the number of processors used.
On a single machine, instead of running MapReduce, which makes little sense, we run the streaming algorithm with $k'=2048$, so to have a final coreset of the same size as the ones found in MapReduce runs.
For a given number of processors $p$ and number of points $n$, we run the corresponding experiment only if $n/p$ points fit into the main memory of a single processor.
As shown in Figure~\ref{fig:scalability}, for a fixed dataset size,
our MapReduce algorithm exhibits super-linear scalability: doubling the number of processors results in a 4-fold gain in running time (at the expense of a mild worsening of the approximation ratio, as pointed out in Subsection~\ref{subsec:MRexp}).
The reason is that each reducer performs $\BO{ns/(kp^2)}$ work to
build its core-set, where $p$ is the number of reducers, since the
core-set construction involves $s/(kp)$ iterations, with each
iteration requiring the scan of $n/p$ points.

For the dataset with 100 million points, the MR algorithm outperforms
the streaming algorithm in every processor configuration. It must be
remarked that the running time reported in
Figure~\ref{fig:scalability} for the streaming algorithm takes into
account also the time needed to stream data from main memory (unlike
the throughput reported in Figure~\ref{fig:throughput}). This is to
ensure a fair comparison with MapReduce, where we also take into
account the time needed to shuffle data between the first and the
second round, and the setup time of the rounds.  Also, we
note that the streaming algorithm appears to be faster than what the
MR algorithm would be if executed on a single processor, and this is
probably due to the fact that the former is more cache
friendly.

If we fix the number of processors, we observe that our algorithm
exhibits linear scalability in the number of points.
Finally, in a set of experiments, omitted for brevity, we
verified that for a fixed number of processors the
time increases linearly with $k'$. Both these behaviors
are in accordance with the theory.

\section{Acknowledgments}

Part of this work was done while the authors were
visiting the Departiment of Computer Science
at Brown University.
This work was supported, in part, by MIUR of Italy under
project AMANDA, and by the University of Padova under project
CPDA152255/15: "Resource-Tradeoffs Based Design
of Hardware and Software for Emerging Computing Platforms".
The work of Eli Upfal was supported in part by NSF grant IIS-1247581
and NIH grant R01-CA180776.

\bibliographystyle{abbrv}
\bibliography{references}
\end{document}